\documentclass[a4paper]{article}
\usepackage{geometry}
\usepackage{csquotes}
\usepackage{xcolor}
\usepackage{siunitx}
\usepackage{csvsimple-l3}
\usepackage{amsmath,amsthm,amssymb}
\usepackage{mathtools}
\usepackage{tikz}
\usepackage{pgfplots}
\usepackage{subcaption}
\usepackage[export]{adjustbox}
\pgfplotsset{%
	compat=newest,%
}
\usepgfplotslibrary{%
	groupplots,%
	external,
}
\tikzexternalize
\pgfplotsset{%
	pogmdm group plot/.style={%
		group/x descriptions at=edge bottom,
		group/y descriptions at=edge left,
		group/vertical sep=1.5mm,
		group/horizontal sep=1.5mm,
		group/group size=8 by 3,
		width=1.1cm,
		height=1.1cm,
		scale only axis,
		no markers,
		ticklabel style={font=\tiny},
		grid=major,
		thick,
	},
	filter group plot/.style={%
		group/x descriptions at=edge bottom,
		group/y descriptions at=edge left,
		group/vertical sep=0mm,
		group/horizontal sep=1.5mm,
		group/group size=8 by 3,
		hide axis,
		width=1.1cm,
		height=1.1cm,
		scale only axis,
		no markers,
		ticklabel style={font=\tiny},
		grid=major,
		thick,
	},
}
\usepackage{interval}
\intervalconfig{soft open fences}
\usepackage{csvsimple-l3}
\usepackage{microtype}

\usepackage[export]{adjustbox}
\usepackage{glossaries-extra}
\usepackage{booktabs}
\usepackage[url=false,eprint=false]{biblatex}
\addbibresource{bibliography.bib}
\usepackage[hidelinks]{hyperref}
\usepackage{cleveref}
\usepgfplotslibrary{%
	groupplots,
	external,
}

\usepackage{acronym}
\usepackage{bbm}
\usepackage{algorithm}
\usepackage{algpseudocode}
\definecolor{maincolor}{RGB}{8,126,177}
\definecolor{secondarycolor}{HTML}{731745}

\newtheorem{definition}{Definition}[section]
\newtheorem{example}{Example}[section]
\newtheorem{remark}{Remark}[section]
\newtheorem{theorem}{Theorem}[section]
\newtheorem{corollary}{Corollary}[section]
\newtheorem{lemma}{Lemma}[section]
\newtheorem{proposition}{Proposition}[section]
\newtheorem{assumption}{Assumption}[section]

\newcommand{\ie}{\textit{i.e.}}
\newcommand{\eg}{\textit{e.g.}}
\newcommand{\cf}{\textit{cf.}}



\newcommand{\emptyarg}{\,\cdot \,}

\newcommand{\R}{\mathbb{R}}
\newcommand{\N}{\mathbb{N}}
\newcommand{\Pc}{\mathcal{P}}
\newcommand{\Xc}{\mathcal{X}}
\newcommand{\Yc}{\mathcal{Y}}
\newcommand{\Zc}{\mathcal{Z}}
\newcommand{\Fc}{\mathcal{F}}
\newcommand{\Hc}{\mathcal{H}}
\newcommand{\Rc}{\mathcal{R}}
\newcommand{\Vc}{\mathcal{V}}
\newcommand{\Nc}{\mathcal{N}}
\newcommand{\Uc}{\mathcal{U}}
\newcommand{\Bc}{\mathcal{B}}

\newcommand{\Dc}{\mathcal{D}}
\newcommand{\E}{\mathbb{E}}

\newcommand{\oc}{\mathcal{O}}

\newcommand{\map}{\mathrm{MAP}}

\newcommand{\md}{d}
\newcommand{\Prob}{\mathbb{P}}
\newcommand{\1}{\mathbbm{1}}
\newcommand{\Pol}{\mathcal{P}}
\newcommand{\hellinger}{d_{\mathrm{Hel}}}
\newcommand{\tv}{d_{\mathrm{TV}}}
\newcommand{\dwass}{d_{\mathrm{W}}}

\newcommand{\momentbound}{\eta}
\newcommand{\friction}{\alpha}
\newcommand{\mass}{\beta}
\newcommand{\lip}{L}

\DeclareMathOperator*{\argmax}{arg\,max}
\DeclareMathOperator*{\argmin}{arg\,min}
\DeclareMathOperator*{\dive}{div}
\DeclareMathOperator*{\Law}{Law}

\newcommand{\wrt}{\mathrm{d}}
\newcommand{\dd}{\wrt}

\newcommand{\cov}{C}

\newcommand{\NumFilters}{o}
\newcommand{\NumBasis}{b}

\newcommand{\Potential}{\phi}

\newcommand{\Image}{x}
\newcommand{\Filters}{k}

\newcommand{\NumData}{N}

\newcommand{\Parameters}{\theta}
\newcommand{\ParameterSpace}{\Theta}
\newcommand{\UpperLoss}{L}

\newcommand{\Optimal}[1]{#1^\ast}

\newcommand\KLDivergence[2]{d_{\mathrm{KL}}(#1,#2)}
\newcommand\FisherDivergence[2]{d_{\mathrm{F}}(#1,#2)}
\newcommand\argument{\,\cdot\,}
\newcommand\NumWeights{W}
\newcommand\Expectation{\mathbb{E}}

\newcommand\Normal{\mathcal{N}}
\newcommand\Identity{\textrm{I}}

\DeclarePairedDelimiterX\norm[1]\lVert\rVert{
	\ifblank{#1}{\:\cdot\:}{#1}
}
\DeclareMathOperator{\tr}{trace}

\renewcommand{\exp}[1]{\mathrm{exp}\left(#1\right)}

\newabbreviation{foe}{FoE}{fields-of-experts}
\newabbreviation{mcmc}{MCMC}{Markov chain Monte Carlo}
\newabbreviation{mmse}{MMSE}{minimum mean-squared-error}
\newabbreviation{map}{MAP}{maximum a-posteriori}
\newabbreviation{gmm}{GMM}{Gaussian mixture model}
\newabbreviation{cg}{CG}{conjugate gradient}
\newabbreviation{em}{EM}{Euler-Maruyama}
\newabbreviation{sde}{SDE}{stochastic differential equation}
\newabbreviation{ebm}{EBM}{energy-based model}
\newabbreviation{ula}{ULA}{unadjusted Langevin algorithm}
\newabbreviation{mh}{MH}{Metropolis-Hastings}
\newabbreviation{hmc}{HMC}{Hamiltonian Monte Carlo}
\newabbreviation{mc}{MC}{Markov chain}
\newabbreviation{ode}{ODE}{ordinary differential equation}
\newabbreviation{mala}{MALA}{Metropolis-adjusted Langevin algorithm}
\newabbreviation{pmala}{P-MALA}{proximal Metropolis-adjusted Langevin algorithm}
\newabbreviation{ct}{CT}{computed tomography}
\newabbreviation{mri}{MRI}{magnetic resonance imaging}
\newabbreviation{mse}{MSE}{mean-squared error}
\newabbreviation{is}{IS}{importance sampling}
\newabbreviation{sir}{SIR}{sampling importance resampling}
\newabbreviation{nuts}{NUTS}{no-U-turn sampler}

\title{Energy-based models for inverse imaging problems}
\author{Andreas Habring \and Martin Holler \and Thomas Pock \and Martin Zach}

\usepackage{biblatex}
\addbibresource{bibliography.bib}

\begin{document}
\maketitle
\begin{abstract}
	In this chapter we provide a thorough overview of the use of \glspl{ebm} in the context of inverse imaging problems. \glspl{ebm} are probability distributions modeled via Gibbs densities $p(x) \propto \exp{-E(x)}$ with an appropriate energy functional $E$. Within this chapter we present a rigorous theoretical introduction to Bayesian inverse problems that includes results on well-posedness and stability in the finite-dimensional and infinite-dimensional setting. Afterwards we discuss the use of \glspl{ebm} for Bayesian inverse problems and explain the most relevant techniques for learning \glspl{ebm} from data. As a crucial part of Bayesian inverse problems, we cover several popular algorithms for sampling from \glspl{ebm}, namely the Metropolis-Hastings algorithm, Gibbs sampling, Langevin Monte Carlo, and Hamiltonian Monte Carlo. Moreover, we present numerical results for the resolution of several inverse imaging problems obtained by leveraging an \gls{ebm} that allows for the explicit verification of those properties that are needed for valid energy-based modeling.
\end{abstract}
\section{Introduction}
In this article we consider the resolution of inverse problems of the form
\begin{equation}
	\label{eq:inverse_problem}
	\text{given $y\in\Yc$, find $x\in\Xc$ such that: } y = \Pol(\Fc(x))
\end{equation}
with appropriate spaces $\Xc$ and $\Yc$, a \emph{forward} or \emph{measurement} operator $\Fc:\Xc\rightarrow\Yc$ and a noise corruption $\Pol$, such as, \eg, additive Gaussian or Poisson distributed noise~\cite{burger2013tv}. Relevant examples of inverse imaging problems include \gls{ct}, \gls{mri}, image deblurring or denoising, and many more~\cite{bertero2021introduction,narnhofer2022Bayesian,zach2021computed,zach2023stable}.

In the variational framework~\cite{bredies2020higher,burger2004convergence,engl1996regularization}, inverse problems are usually tackled by considering a minimization problem of the form
\begin{equation}\label{eq:var_ip}
	\min_{x\in\Xc} \bigl\{ E_y(x)\coloneqq\Dc_y(\Fc(x)) + \lambda \Rc(x) \bigr\}
\end{equation}
where $\Dc_y$ measures the data fit, \ie, the discrepancy between a potential solution $x$ and the measurement $y$ and $\Rc$ is a regularizer that ensures well-posedness and stability of \eqref{eq:var_ip}.
In contrast, in the Bayesian framework, $x$ and $y$ in~\eqref{eq:inverse_problem} are modeled as random variables $X$ and $Y$ with some joint distribution $\Prob_{X,Y}$. In this case the solution of the inverse problem is simply the so-called \emph{posterior distribution} $\Prob_{X|Y}$, that is, the distribution of the variable of interest $X$ after observing the measurement $Y$. 
Via Bayes' theorem, under mild conditions (see~\cref{sec:bayesian_ip} below) it holds true that
\begin{equation}\label{eq:Bayes_IP_intro}
	\Prob_{X|Y} = \frac{\Prob_{Y|X}\Prob_{X}}{\Prob_{Y}}.
\end{equation}
Identifying $y \mapsto \Dc_y(\Fc(x))$ with a density of $-\log \Prob_{Y|X}$ and $x \mapsto \lambda\Rc(x) $ with a density of $ -\log \Prob_{X}$ relates the variational and Bayesian approach by equating the variational solution with the \gls{map} estimate of the Bayesian solution.
The Bayesian perspective for inverse problems provides some interesting benefits compared to the variational approach. Most notably, the probabilistic setting yields a natural framework for modeling uncertainty of a reconstruction~\cite{narnhofer2022Bayesian,narnhofer2024posterior} as well as for the use of data driven priors. The latter is due to the fact that the use of training data---which constitutes a random sample of some population---almost by default induces a probabilistic treatment.

We now briefly analyze the individual terms of~\eqref{eq:Bayes_IP_intro}. The \emph{likelihood} $\Prob_{Y|X}$ is usually known and relates to the forward operator and noise distribution. The \emph{evidence} $\Prob_{Y}$ typically does not concern us since, after observing $Y$, it constitutes a constant multiplicative factor with respect to the posterior. Such constant factors need not be known, neither for determining the \gls{map} nor---as we will see later---for sampling. As a consequence, substantial research in the context of Bayesian inverse problems has been dedicated to the remaining task of modeling the \emph{prior} distribution $\Prob_X$~\cite{habring2022generative,hinton_training_2002,roth2009fields,zach2023stable,zach2021computed}.
\Glspl{ebm} provide a particularly useful approach to do so which complements many of the issues encountered in the context of Bayesian inverse problems. In short, the term \gls{ebm} (see \cref{def:ebm}) refers to modeling a probability distribution (usually the prior $\Prob_X$) as a Gibbs distribution via its density
\begin{equation}\label{eq:ebm_intro}
	p_X(x) = \frac{\exp{-E(x)}}{\int \exp{-E(z)}\dd z}
\end{equation}
where $E$ is an appropriate \emph{energy} functional or \emph{potential} that may be hand-crafted (\eg, the total variation~\cite{habring2024subgradient,narnhofer2024posterior}) or learned from data~\cite{narnhofer2024posterior,zach2023stable,zach2021computed}. Energy based modeling is a natural framework due to two main properties:
\begin{enumerate}
	\item By definition, $\exp{-E(x)}$ is positive for any \( x \in \Xc \), so that we only have to ensure integrability in order to obtain a valid probability density in \eqref{eq:ebm_intro}.
	\item As mentioned above, many tasks that frequently arise in Bayesian inverse problems (most prominently, sampling from various distributions) often do not require knowledge of the normalization constant $Z=\int \exp{-E(y)}\dd y$, but rather require the knowledge of the density up to a multiplicative factor, or even only of the score $\nabla \log p_X=\nabla E$. This is particularly relevant, as the computation of $Z$ requires the estimation of a high-dimensional integral, which is numerically infeasible for the problem sizes that are encountered in imaging.
\end{enumerate}
The combination of a well suited theoretical framework and practical flexibility renders \glspl{ebm} a powerful tool for the application to Bayesian imaging. Within this chapter we provide a concise overview of the most relevant aspects of \glspl{ebm} for Bayesian inverse imaging problems.

\subsection{Overview of the remaining article}
In the subsequent sections we cover the following content: In \cref{sec:bayesian_ip} we introduce the most relevant concepts in Bayesian inverse problems and provide several central theoretical results. In particular, we present results for well-posedness and stability when the space $\Xc$ is finite-dimensional and infinite-dimensional. In \cref{sec:learning} we present the most relevant training strategies, including different types of divergence minimization as well as bilevel learning, follower by a discussion about popular architectures for data-driven \glspl{ebm}. In \cref{sec:sampling} we analyze some of the most popular algorithms for sampling from \glspl{ebm}, that cover prior as well as posterior sampling. Lastly, we showcase some numerical results of the application of \glspl{ebm} for solving inverse imaging problems in \cref{sec:experiments}.

\subsection{Notation}
Spaces are denoted with calligraphic letters such as $\Xc,\Yc,\Zc$. $\sigma$-algebras are denoted as $\Sigma_\Xc$ with the corresponding space in the subscript or simply as $\Sigma$ if there is no risk of ambiguities. The Borel $\sigma$-algebra on a space $\Xc$ is denoted as $\Bc(\Xc)$.\footnote{assuming the used topology is obvious} We use capital letters for random variables and greek letters---mostly $\mu,\nu,\pi$---for (probability) measures. The distribution of a random variable $X\in\Xc$ is written as $\Prob_X$. If the distribution $\Prob_X$ admits a density (with respect to some measure on $\Xc$) we denote the density as $p_X$. For measures $\mu$ and $\nu$ we write $\mu\ll\nu$ if $\mu$ is absolutely continuous with respect to $\nu$ and denote the Radon-Nikod\'ym derivative of $\mu$ with respect to $\nu$ as $\tfrac{\dd \mu}{\dd \nu}$, or, if $\nu$ is the Lebesgue measure, as $\tfrac{\dd \mu}{\dd x}$. We denote the set of all probability measures over a space $\Xc$ as $\Pc(\Xc)$ without explicitly adding the underlying $\sigma$-algebra since we do not equip spaces with more than one $\sigma$-algebra. For $m \geq 1$ we define the set of all probability measures with finite $m$-th moment as $\Pc_m(\Xc)$. An integral without a domain is interpreted as integration over the entire space, \eg, for $\mu$ a measure on $\Xc$ and $f$ a $\mu$-measurable function,
\[
	\int f(x)\dd \mu(x) \coloneqq \int_{\Xc}f(x)\dd \mu(x).
\]
Moreover, we may sometimes write $\mu(\dd x)$ instead of $\dd \mu(x)$ to denote a measure $\mu$, respectively integration with respect to this measure.

\section{Bayesian Inverse Problems}\label{sec:bayesian_ip}
Recall that we are interested in solving general inverse imaging problems of the form
\begin{equation}
	\text{given $y$, find $x$ such that: } y = \Pol(\Fc(x))
\end{equation}
where $\Fc:\Xc\rightarrow\Yc$ denotes a forward operator and $\Pol$ a pollution operator that, for any given \( x \in \mathcal{X} \), corrupts $\Fc(x)$ with random noise following a certain distribution that is usually known. We always assume that $(\Xc,\md_\Xc)$ and $(\Yc,\md_\Yc)$ are \emph{separable and complete metric spaces} and that $\Yc$ is a subset of a finite-dimensional space.
This setup covers almost all practically relevant cases. In particular, the assumption that \( \Yc \) is a subset of a finite-dimensional space is not restrictive since any measurement device typically gives a finite number of measurements.
For the space $\Xc$, this section explicitly also covers the infinite-dimensional case. To talk about probability we always equip $\Xc$ and $\Yc$ with the generic Borel $\sigma$-algebras $\Sigma_\Xc = \Bc(\Xc)$ and $\Sigma_\Yc= \Bc(\Yc)$ such that $(\Xc,\Sigma_\Xc)$, $(\Yc,\Sigma_\Yc)$ are measurable spaces. The starting point for our probabilistic framework of inverse problems is a probability measure $\Prob_{X,Y}$ on $\Xc \times \Yc$ (equipped with the generic product $\sigma$-algebra $\Sigma_{\Xc \times \Yc} = \Bc(\Xc) \otimes \Bc(\Yc)$), which describes the distribution of the data. Based on this, in a Bayesian framework, $x\in\Xc$ and $y\in\Yc$ are modeled as random variables $X\sim\Prob_X$, $Y\sim\Prob_Y$, formally given as $X:\Xc \times \Yc \mapsto \Xc$, $(x,y) \mapsto x$ and likewise for $Y$, and with their distributions $\Prob_X$ and $\Prob_Y$ being induced by the joint distribution $\Prob_{X,Y}$ as pushforward measures. Solving the inverse problem then amounts to determining the conditional distribution of $X|Y$ (see \cref{def:conditional_distribution} below), referred to as the \emph{posterior distribution}.

For instructive purposes let us briefly assume $\Xc=\R^n$ and $\Yc=\R^d$ for $n,d\in\N$ and that both $X$ and $Y$ admit densities $p_X$ and $p_Y$ with respect to the Lebesgue measure. In this case, using Bayes theorem, the posterior is typically expressed as
\[
	p_X(x|Y = y) = \frac{p_Y(y|X = x) p_X(x)}{p_Y(y)}.
\]
This representation is beneficial as it leads to interpretable terms which can be modeled: The \emph{likelihood} $p_Y(\emptyarg|X = x)$ corresponds to the density of the distribution of the measurements given some fixed $x$ and is modeled based on the forward operator $\Fc$ and the pollution $\Pol$. The \emph{prior} $p_X$ represents the density of the distribution of the variable of interest, $X$, and is modeled based on prior knowledge or beliefs about $X$ or desirable properties of the solution. The prior can be handcrafted or learned if we have access to samples of $X$ (see \cref{sec:learning}). Knowledge of the \emph{(model) evidence} $p_Y$ is typically not necessary as most relevant inference techniques (see \cref{sec:sampling} below) only rely on the \emph{score} of the \emph{posterior} \( p_X(\emptyarg|Y = y) \) for some fixed \( y \), that is, the gradient of the log-\emph{posterior} $\nabla \log p_X(\emptyarg|Y = y)$, which is independent of $p_Y(y)$. In conclusion: Bayesian inverse problems mainly revolve around proper modeling of the likelihood and the prior. While this elaboration was restricted to the finite-dimensional case, we will in the following provide a rigorous treatment for the general setting introduced above. This part of our work is strongly inspired by \cite{latz2023bayesian}.

We start by defining the posterior, which necessitates the notion of a Markov kernel, which we now define.
\begin{definition} Let $(\Zc_1,\Sigma_{\Zc_1})$ and $(\Zc_2,\Sigma_{\Zc_2})$ be measurable spaces. A \emph{Markov kernel} is a function $M: \Zc_1\times \Sigma_{\Zc_2} \rightarrow [0,1]$  such that
\begin{itemize}
\item $M(\emptyarg, A):\Zc_1 \rightarrow [0,1]$ is measurable for any $A \in \Sigma_{\Zc_2}$, and
\item $M(z,\emptyarg):\Sigma_{\Zc_2} \rightarrow [0,1]$ is a probability measure for any $z \in \Zc_1$.
\end{itemize}
\end{definition}
In order to formalize the notion of the posterior distribution as the \emph{distribution of $X$ given $Y$} we further make the following definition.
\begin{definition} \label{def:conditional_distribution}
	If there exists a Markov kernel $M: \Yc\times \Sigma_\Xc \rightarrow [0,1]$ such that  for all \( A \in \Sigma_X \) and all \( B\in\Sigma_Y \)
	\[
		\Prob(X\in A, \; Y\in B) = \int_B M(y,A)\;\wrt\Prob_Y(y),
	\]
	we call $M$ the \emph{conditional distribution of $X|Y$} and denote $M(y,A)=\Prob_X(A|Y=y)$. Similarly, the \emph{conditional distribution of $Y|X$} is defined as above with the roles of $X$ and $Y$ being exchanged.
\end{definition}
The following result, which is a standard result from probability theory, shows that in our setting a conditional distribution always exists.
\begin{lemma}{(Existence of solutions)}\label{lem:conditional_distribution_exists}
	The conditional distribution of $X|Y$, denoted by $\Prob_X (\emptyarg|Y = \emptyarg )$ exists and is $\Prob_Y$-a.s. unique. Furthermore, for $\Prob_Y$-a.e. $y \in Y$, the measure $\Prob_X (\emptyarg |Y = y )$ is concentrated on $X(\{ Y=y \})$, \ie, $\Prob_X (X(\{ Y =  y \})^c|Y=y) = 0$, for $\Prob_Y$-a.e. $y \in Y$.
\end{lemma}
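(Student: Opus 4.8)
The plan is to read the statement as the classical existence-and-essential-uniqueness theorem for regular conditional distributions, i.e.\ as a disintegration of $\Prob_{X,Y}$ along the coordinate projection onto $\Yc$, and to exploit the standing hypothesis that $(\Xc,\md_\Xc)$ is a separable complete metric space — hence Polish, so that $(\Xc,\Sigma_\Xc)$ is a standard Borel space. (The finite-dimensionality of $\Yc$ is not needed for this lemma; $\Yc$ enters only as a measurable space.) A self-contained argument splits into four steps: a Radon--Nikod\'ym / conditional-expectation step on a countable generating algebra, an additivity cleanup on a null set, the crucial countable-additivity upgrade, and a monotone-class step to pass to all measurable sets.

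Since $\Xc$ is separable metric, $\Sigma_\Xc=\Bc(\Xc)$ is countably generated, so fix a countable algebra $\mathcal{A}$ with $\sigma(\mathcal{A})=\Sigma_\Xc$. For each $A\in\mathcal{A}$ the finite measure $B\mapsto\Prob(X\in A,\,Y\in B)$ is absolutely continuous with respect to $\Prob_Y$, so the Radon--Nikod\'ym theorem yields a measurable $q_A\colon\Yc\to[0,1]$ — a version of $\Expectation[\1_{\{X\in A\}}\mid Y=\emptyarg]$ — with $\int_B q_A\,\dd\Prob_Y=\Prob(X\in A,\,Y\in B)$ for all $B\in\Sigma_\Yc$. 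As $\mathcal{A}$ is countable, one may modify all the $q_A$ on a single $\Prob_Y$-null set so that, for \emph{every} $y$, the map $A\mapsto q_A(y)$ is a $[0,1]$-valued finitely additive set function on $\mathcal{A}$ with $q_\emptyset(y)=0$ and $q_\Xc(y)=1$; there are only countably many defining relations to enforce and each holds $\Prob_Y$-a.e.

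The decisive step — and the one I expect to be the main obstacle — is to upgrade finite additivity to countable additivity, equivalently continuity at $\emptyset$, so that by Carath\'eodory's extension theorem $A\mapsto q_A(y)$ extends uniquely to a Borel probability measure $M(y,\emptyarg)$ on $\Sigma_\Xc$. This is exactly where Polishness is essential: every finite Borel measure on a Polish space is tight, so the elements of $\mathcal{A}$ can be approximated from within by compact sets, and a compactness/diagonal argument shows that, outside a further $\Prob_Y$-null set, $q_{A_n}(y)\downarrow 0$ whenever $A_n\downarrow\emptyset$ in $\mathcal{A}$; on that null set one simply puts $M(y,\emptyarg)$ equal to an arbitrary fixed probability measure. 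It then remains to check that $y\mapsto M(y,A)$ is measurable for all $A\in\Sigma_\Xc$ and that $\Prob(X\in A,\,Y\in B)=\int_B M(y,A)\,\dd\Prob_Y(y)$ holds for all $A\in\Sigma_\Xc$, $B\in\Sigma_\Yc$; both follow from the Dynkin/monotone-class theorem, since for each fixed $B$ (resp.\ for the measurability claim) the admissible $A$ form a $\lambda$-system containing the $\pi$-system $\mathcal{A}$, hence all of $\sigma(\mathcal{A})=\Sigma_\Xc$. This produces a Markov kernel with the property demanded in \cref{def:conditional_distribution}.

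Essential uniqueness then follows because, for each fixed $A$, any two admissible kernels give two versions of $\Expectation[\1_{\{X\in A\}}\mid Y=\emptyarg]$ and so agree $\Prob_Y$-a.e.; letting $A$ run through $\mathcal{A}$ and discarding the corresponding null sets gives $M(y,\emptyarg)=M'(y,\emptyarg)$ on $\mathcal{A}$, hence on $\Sigma_\Xc$ by uniqueness of the extension, for $\Prob_Y$-a.e.\ $y$. For the concentration statement, note that with the coordinate realization $X(x,y)=x$, $Y(x,y)=y$ one has $\{Y=y\}=\Xc\times\{y\}$, so $X(\{Y=y\})=\Xc$ and therefore $\Prob_X(X(\{Y=y\})^c\mid Y=y)=\Prob_X(\emptyset\mid Y=y)=0$; on an abstract probability space the same assertion is the fiber-concentration clause of the disintegration theorem. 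Of course the whole lemma may also simply be cited from a standard reference, e.g.\ Kallenberg's \emph{Foundations of Modern Probability}, or from the treatment in \cite{latz2023bayesian} that this section follows.
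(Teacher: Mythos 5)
Your proof is correct, but it takes a genuinely different route from the paper's. You construct the kernel from scratch via the classical existence proof for regular conditional distributions on a standard Borel space: Radon--Nikod\'ym derivatives $q_A$ on a countable generating algebra, a null-set cleanup to enforce finite additivity, the tightness/compact-class argument to upgrade to countable additivity (the one place where Polishness of $\Xc$ enters; note the chosen compact approximants must be adjoined to the countable family so that $q_K$ is defined for them, a standard but necessary detail), Carath\'eodory extension, and a $\pi$-$\lambda$ argument for measurability and the defining identity; your uniqueness argument (a.e.\ agreement on the countable algebra, then extension) is essentially the same as the paper's, which invokes that $\Sigma_\Xc$ is countably generated. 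The paper instead obtains existence in one stroke by citing the disintegration theorem for Souslin spaces (Bogachev, Example 10.4.11) applied to $\Prob_{X,Y}$ on $\Xc\times\Yc$: this yields a kernel $M(y,\emptyarg)$ on the \emph{product} space concentrated on the fibers $\Xc\times\{y\}$, and the conditional distribution is then defined as $\Prob_X(A|Y=y)=M(y,X^{-1}(A))$. The trade-off is clear: your argument is self-contained and needs only that $\Xc$ is Polish (as you correctly note, $\Yc$ may be an arbitrary measurable space), while the paper's route is shorter and automatically delivers the fiber concentration $M(y,\Xc\times(\Yc\setminus\{y\}))=0$, which is the property the authors emphasize in the remark following the lemma. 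Your observation that, under the paper's coordinate-projection realization $X(x,y)=x$, one has $X(\{Y=y\})=\Xc$, so the stated concentration claim is literally vacuous, is accurate; just be aware that your construction on $\Xc$ alone would not produce the non-vacuous product-space concentration statement, which you (like the paper) can only obtain by appealing to a disintegration theorem, as you indicate for the abstract case.
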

\begin{proof}
	Since $(\Xc,\md_\Xc)$ and $(\Yc,\md_\Yc)$ are complete and separable metric spaces, both $X \times Y$ and $Y$ are Souslin spaces according to \cite[Definition 6.6.1]{bogachev2007measure}. Thus, \cite[Example 10.4.11]{bogachev2007measure} implies the existence of a Markov Kernel $M:\Yc \times \Sigma_ {\Xc \times \Yc} \rightarrow [0,1]$ such that
	\[ \Prob (C,Y \in B) = \int_B M(y,C) \wrt \Prob_{Y} (y) \]
	for all $C \in \Sigma_ {\Xc \times \Yc}$ and $B \in \Sigma_\Yc$, and such that
	\[ M(y,\Xc\times (\Yc \setminus \{y\})) = 0 \]
	for $\Prob_Y$-a.e. $y \in Y$. It is then easy to see that
	\[ \Prob_X ( A|Y=y):= M(y,X^{-1}(A))
	\]
	the conditional distribution as claimed. Regarding the concentration on $X(\{ Y=y \})$, we note that $(x,y) \in X^{-1}((X\{Y=y\})^c)$ implies that $(x,y) \in \Xc\times (\Yc \setminus \{y\})$ and, consequently, that
	\[ \Prob_X (X(\{ Y =  y \})^c|Y=y)  \leq M(y,\Xc\times (\Yc \setminus \{y\})) = 0. \]
Uniqueness finally follows as in \cite[Lemma 10.4.3]{bogachev2007measure}, using that $\Sigma_\Xc$ is countably generated.
\end{proof}

\begin{remark}
Note that, in addition to the classical notion of conditional distribution (see, \eg, \cite[Theorem 33.3]{billingsley_prob_and_measure}), we also obtain that $\Prob_X(A|Y=y)$ is concentrated on $X(\{Y=y\})$, a property that one would naturally expect from a conditional distribution. This concentration property is a consequence of deriving the conditional distribution via the disintegration of measures rather than the classical Kolmogorov approach as Radon derivatives, see \cite{chang1997conditioning} for a discussion. While the former has slightly more restrictive assumptions, those are fulfilled in our setting, which is why we derive this additional property of conditional distribution in this work. Regarding a concentration of $\Prob_X(A|Y=y)$ for \emph{all} $y \in \Yc$ (not just $\Prob_Y$-almost all $y \in \Yc$), we refer to \cite[Proposition 10.4.12]{bogachev2007measure}.
\end{remark}

\begin{remark}
	It seems that existence and uniqueness of the conditional distribution, the main quantity of our interest, is, thus, guaranteed without any requiring any assumptions in addition to those made on the underlying spaces $\Xc$ and $\Yc$. We will see later, however, that the main assumption that the prior distribution is actually a probability distribution, in the sense that it integrates to one, is already a strong assumption that is related to coercivity of the energy functional in case of energy based models.
\end{remark}

With the same argument as in the previous result, we also obtain existence of the conditional distribution of $Y|X$.
\begin{lemma}{(Existence of conditional data distribution)}\label{lem:conditional_distribution_of_data_exists}
	The conditional distribution of $Y|X$, denoted by $\Prob_Y (\emptyarg |X = \emptyarg )$ exists and is $\Prob_X$-a.s. unique.
\end{lemma}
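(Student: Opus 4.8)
The plan is to mirror, almost verbatim, the proof of \cref{lem:conditional_distribution_exists}, interchanging the roles of $X$ and $Y$. First I would observe that, under the standing assumptions, $(\Xc,\md_\Xc)$ and $(\Yc,\md_\Yc)$ are complete separable metric spaces, and hence so is the product $\Xc \times \Yc$ (equipped, say, with the sum metric); consequently both $\Xc \times \Yc$ and $\Xc$ are Souslin spaces in the sense of \cite[Definition 6.6.1]{bogachev2007measure}. This is precisely the hypothesis needed to apply the disintegration result \cite[Example 10.4.11]{bogachev2007measure}, now to the measurable projection $X:\Xc\times\Yc\to\Xc$ together with the measure $\Prob_{X,Y}$. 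It yields a Markov kernel $M:\Xc \times \Sigma_{\Xc\times\Yc} \rightarrow [0,1]$ such that
\[
	\Prob(C,\; X\in A) = \int_A M(x,C)\,\wrt\Prob_X(x)
\]
for all $C \in \Sigma_{\Xc\times\Yc}$ and all $A\in\Sigma_\Xc$.

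Next I would set $\Prob_Y(B|X=x):= M(x,Y^{-1}(B))$ for $B\in\Sigma_\Yc$ and verify that this defines the conditional distribution of $Y|X$ in the sense of \cref{def:conditional_distribution}. Measurability of $x\mapsto M(x,Y^{-1}(B))$ and the property that $B\mapsto M(x,Y^{-1}(B))$ is a probability measure are inherited from the corresponding properties of $M$ together with measurability of the map $Y$. The defining identity $\Prob(X\in A,\,Y\in B) = \int_A \Prob_Y(B|X=x)\,\wrt\Prob_X(x)$ then follows by taking $C = Y^{-1}(B)$ in the displayed equation above. Finally, $\Prob_X$-a.s.\ uniqueness follows exactly as in \cite[Lemma 10.4.3]{bogachev2007measure}, using that $\Sigma_\Yc = \Bc(\Yc)$ is countably generated; this holds because $\Yc$, being a separable metric space (indeed a subset of a finite-dimensional space), admits a countable base that generates its Borel $\sigma$-algebra.

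There is no substantial obstacle in this argument; the only point requiring a (minimal) sanity check is that the Souslin hypotheses of the disintegration theorem are symmetric in $X$ and $Y$, which they are, since both $\Xc$ and $\Yc$ were assumed to be complete separable metric spaces. In contrast to \cref{lem:conditional_distribution_exists}, I would not state (nor do I need) the additional concentration property of $\Prob_Y(\emptyarg|X=x)$ on $Y(\{X=x\})$, although the very same reasoning would yield it via the extra clause $M(x,(\Xc\setminus\{x\})\times\Yc)=0$ supplied by \cite[Example 10.4.11]{bogachev2007measure}.
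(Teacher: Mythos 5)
Your proposal is correct and follows exactly the route the paper intends: the paper proves this lemma simply by invoking ``the same argument as in the previous result,'' i.e.\ the disintegration via \cite[Example 10.4.11]{bogachev2007measure} and the uniqueness argument of \cite[Lemma 10.4.3]{bogachev2007measure} with the roles of $X$ and $Y$ exchanged, which is precisely what you carry out. Your explicit check that the Souslin hypotheses are symmetric in $\Xc$ and $\Yc$ and that $\Sigma_\Yc$ is countably generated fills in the details the paper leaves implicit.
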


The conditional distribution $\Prob_Y(\emptyarg | X = \emptyarg)$ allows us to introduce a model for the measurement noise. Two frequently used and practically relevant examples are given as follows.
\begin{example}[Gaussian noise]
	The assumption of additive, isotropic Gaussian noise on the measurements corresponds to the situation that $\Yc = \R^d$ and $\Prob _Y (\emptyarg | X = x)$ admits a density $(x,y) \mapsto L(y|X=x)$ with respect to the Lebesgue measure of the form 
	\[
		L(y|X=x) = (2\pi\sigma^2)^{-d/2}\exp{-\frac{\|y-\Fc(x)\|^2}{2\sigma^2}},
	\]
	where $\Fc:\Xc \rightarrow \Yc $ is the forward model.
\end{example}
\begin{example}[Poisson noise]
	Poisson noise constitutes a relevant example of non-additive noise corruption. In this case, $\Yc = \{ 1,2,3,\ldots \}^d$, the Borel $\sigma$-algebra $\Sigma_{\Yc}$ equals the power set, and the conditional distribution $\Prob( \emptyarg | X=x)$ is assumed to admit a density $L(y|X=x)$ w.r.t.\ the counting measure given as
	\[
		L(y|X=x) = \prod_{i=1}^d\frac{\Fc(x)_i^{y_i}\exp{-\Fc(x)_i}}{y_i!}
	\]
	where $\Fc:\Xc \rightarrow \Yc $ is again the forward model.
\end{example}
Since we assume that $\Yc$ is a subset of a finite-dimensional space, it is not restrictive to assume that $\Prob_Y(\emptyarg |X=x)$ admits a density $L(\emptyarg | X=x)$ as above (generically, with respect to the Lebesgue or the counting measure). Interestingly, under this non-restrictive assumption, we already get a first version of Bayes theorem.
\begin{theorem}[Bayes theorem, general version]\label{thm:bayes_general} Assume there exists a probability measure $\mu_\Yc:\Sigma_\Yc \rightarrow [0,1]$ and a function $(x,y) \mapsto L(y|X=x) \in [0,\infty]$ that is jointly measurable in $x$ and $y$ such that
\[ \Prob_Y (B|X=x) = \int_B L(y|X=x) \wrt \mu_\Yc (y),\]
i.e., $\Prob_Y (\emptyarg|X=x)$ admits the density $L(\emptyarg|X=x)$ with respect to $\mu_\Yc$.

Then, also $\Prob_Y$ has a density $p_Y$ w.r.t. $\mu_\Yc$ and it holds for almost all $y \in \Yc$ with $p_Y(y) \neq 0$ and for all $A \in \Sigma_\Xc$ that
\[ \Prob_X ( A |Y=y) = \frac{\int _A L(y |X=x) \wrt \Prob_X(x)}{p_Y(y)}. \]
Further, $\Prob_X(\emptyarg | Y=y)$  is absolutely continuous w.r.t. $\Prob_X$ with density
\[\frac{\wrt \Prob_X(\emptyarg | Y=y)}{\wrt \Prob_X} =   \frac{ L(y |X=x)}{p_Y(y)}. \]
\begin{proof}
By the properties of $\Prob_Y ( \emptyarg |X= \emptyarg )$ we have for every $B \in \Sigma_\Yc$ that
\begin{equation*}
	\begin{aligned}
		\Prob_Y(B) = \Prob_{X \times Y} (\Xc \times B) &= \int_\Xc\Prob_Y (B|X=x)  \wrt \Prob_X (x) \\
													   &= \int_\Xc \int_B L(y|X=x) \wrt \mu_\Yc (y) \wrt \Prob_X (x).
	\end{aligned}
\end{equation*}
Using Fubini's theorem (see, e.g., \cite[Theorem 6.14]{ccinlar2011probability})) we obtain that
\[
	\Prob_Y(B) = \int_B \int_\Xc L(y|X=x) \wrt \Prob_X (x) \wrt \mu_\Yc (y),
\]
and, consequently, that $\Prob_Y$ is absolutely continuous w.r.t. $\mu_\Yc$ with density $y \mapsto p_Y(y):=\int_\Xc L(y|X=x) \wrt \Prob_X (x)$.

Again by the properties of the condition distributions we obtain for all $A \in \Sigma_\Xc $ and $B \in \Sigma_\Yc$ that
\[
\int_B \Prob_X ( A |Y=y) \wrt \Prob_Y (y) = \Prob (X \in A, Y \in B) = \int_A \Prob_Y (B|X=x) \wrt \Prob_X(x).
\]
Plugging in the above densities, we obtain that
\begin{equation*}
	\begin{aligned}
		\int_B \Prob_X ( A |Y=y) p_Y(y)\wrt \mu_\Yc (y)  &= \int_A \int _B L(y|X=x) \wrt \mu_\Yc(y) \wrt \Prob_X(x)\\
														 &= \int_B \int _A L(y|X=x)  \wrt \Prob_X(x)\wrt \mu_\Yc(y)
	\end{aligned}
\end{equation*}
where the last equality is again due to Fubini's theorem.
This yields the pointwise almost-everywhere equality
\[
\Prob_X ( A |Y=y) p_Y(y)  = \int _A L(y|X=x)  \wrt \Prob_X(x).
\]
from which the result follows via division by $p_Y(y)$.
\end{proof}
\end{theorem}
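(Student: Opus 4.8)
The plan is to extract all three assertions from the defining identities of the two conditional distributions (\cref{def:conditional_distribution}, whose existence for our spaces is guaranteed by \cref{lem:conditional_distribution_exists} and \cref{lem:conditional_distribution_of_data_exists}) together with Tonelli's theorem; the hypotheses — $L$ jointly measurable with values in $[0,\infty]$ — are exactly what Tonelli requires, so no integrability has to be checked beforehand. First I would marginalize: applying the defining identity of $\Prob_Y(\emptyarg|X=\emptyarg)$ with the full set $\Xc$ in the first slot gives, for every $B\in\Sigma_\Yc$,
\[
  \Prob_Y(B) = \int_\Xc \Prob_Y(B|X=x)\,\wrt\Prob_X(x) = \int_\Xc\int_B L(y|X=x)\,\wrt\mu_\Yc(y)\,\wrt\Prob_X(x),
\]
and swapping the integration order by Tonelli exhibits $\Prob_Y$ as absolutely continuous with respect to $\mu_\Yc$ with density $p_Y(y):=\int_\Xc L(y|X=x)\,\wrt\Prob_X(x)$; this function is measurable by Tonelli and $\mu_\Yc$-a.e.\ finite since $\Prob_Y$ is a probability measure.

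Next I would fix $A\in\Sigma_\Xc$ and chain the two conditional identities — first that of $\Prob_X(\emptyarg|Y=\emptyarg)$, then that of $\Prob_Y(\emptyarg|X=\emptyarg)$ — to get, for every $B\in\Sigma_\Yc$,
\[
  \int_B \Prob_X(A|Y=y)\,\wrt\Prob_Y(y) = \Prob(X\in A,\,Y\in B) = \int_A\int_B L(y|X=x)\,\wrt\mu_\Yc(y)\,\wrt\Prob_X(x).
\]
Rewriting the left side as $\int_B \Prob_X(A|Y=y)\,p_Y(y)\,\wrt\mu_\Yc(y)$ via the density from the previous step, and the right side, by Tonelli, as $\int_B\bigl(\int_A L(y|X=x)\,\wrt\Prob_X(x)\bigr)\,\wrt\mu_\Yc(y)$, and letting $B$ range over $\Sigma_\Yc$, one reads off the $\mu_\Yc$-a.e.\ — hence $\Prob_Y$-a.e. — identity $\Prob_X(A|Y=y)\,p_Y(y) = \int_A L(y|X=x)\,\wrt\Prob_X(x)$. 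Since $\{p_Y=0\}$ is $\Prob_Y$-null, dividing by $p_Y(y)$ there yields the claimed posterior formula.

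The absolute continuity statement — that for a.e.\ fixed $y$ the measure $A\mapsto\Prob_X(A|Y=y)$ is absolutely continuous with respect to $\Prob_X$ with Radon-Nikod\'ym derivative $L(y|X=x)/p_Y(y)$ — needs one more step, which I expect to be the only real obstacle: the $\mu_\Yc$-null set produced above depends on $A$, so the identity cannot be read as ``for a.e.\ $y$, for all $A$'' without further work. To remedy this I would run the argument over a countable algebra of sets generating $\Sigma_\Xc$ (such an algebra exists because $\Sigma_\Xc$ is countably generated, a fact already used in \cref{lem:conditional_distribution_exists}), take the countable union of the corresponding null sets, and extend the identity from that algebra to all of $\Sigma_\Xc$ by a monotone-class argument, using that for such $y$ both sides are finite measures in $A$. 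Once the identity holds simultaneously in $A$, absolute continuity and the stated density follow at once, and the right-hand measure is automatically a probability measure because its total mass equals $p_Y(y)^{-1}\int_\Xc L(y|X=x)\,\wrt\Prob_X(x) = p_Y(y)^{-1}p_Y(y) = 1$.
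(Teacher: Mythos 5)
Your proposal is correct and follows essentially the same route as the paper's own proof: marginalize the identity for $\Prob_Y(\emptyarg|X=\emptyarg)$ and apply Fubini/Tonelli to obtain $p_Y$, then chain the two conditional-distribution identities, identify densities with respect to $\mu_\Yc$, and divide by $p_Y(y)$. The only difference is your extra care with the fact that the exceptional null set depends on $A$ (handled via a countable generating algebra and a monotone-class extension), a subtlety the paper's proof passes over silently when it asserts the density statement for all $A \in \Sigma_\Xc$ simultaneously; this is a welcome refinement of the same argument rather than a different approach.
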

\begin{remark} We note the following:
\begin{itemize}
\item Instead of explicitly requiring the existence of the measurable density $(x,y)\mapsto L(y|X=x)$, one could also require that there exists a probability measure $\mu_\Yc$ (or, more generally, a Markov Kernel) such that $\Prob_Y (B|X=x)$ is absolutely continuous w.r.t. $\mu_\Yc$ for every $x$. Existence of a measurable density $(x,y)\mapsto L(y|X=x)$ would then follow from Doob's theorem for families of measures, see \cite[Theorem 4.44]{ccinlar2011probability}. In practice, however, existence of the dominating measure $\mu_\Yc$ is usually shown directly by providing the likelihood $(x,y)\mapsto L(y|X=x)$ (as modeling choice), hence we believe it is more useful to use this setting also in the theorem.
\item The above result assumes a dominating measure for $\Prob_Y (\emptyarg|X=x)$, and concludes that $\Prob_Y$ is also dominated by that measure. The other direction does not hold true, but in any case it is more practical to impose this assumption on $\Prob_Y (\emptyarg|X=x)$ (which again is usually available as modeling choice) rather that $\Prob_Y$, which is usually not available.
\end{itemize}
\end{remark}
Finally, if also $\Prob_X$ admits a density, we get the following result as direct consequence.

\begin{theorem}[Bayes Theorem, density version] In the setting of \cref{thm:bayes_general}, assume that $\Prob_X$ admits a density w.r.t.\ some measure $\mu_\Xc$ that we denote by $p_X$. Then, for almost all $y \in \Yc$ with $p_Y(y) \neq 0$, $\Prob_X(\emptyarg |Y=y)$ admits a density w.r.t $\mu_\Xc$ that is given as
\[ p_X(x|Y = y) =  \frac{ L(y |X=x) p_X(x)}{p_Y(y)} .\]
\end{theorem}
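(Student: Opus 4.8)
The plan is to obtain this statement as an immediate corollary of \cref{thm:bayes_general} by simply rewriting the integral against $\Prob_X$ in terms of the density $p_X$. Recall that \cref{thm:bayes_general} already provides, for almost all $y \in \Yc$ with $p_Y(y) \neq 0$ and for all $A \in \Sigma_\Xc$, the identity
\[ \Prob_X(A|Y=y) = \frac{\int_A L(y|X=x)\wrt\Prob_X(x)}{p_Y(y)}. \]
First I would fix such a $y$ and use the defining property of the density, $\wrt\Prob_X = p_X \wrt\mu_\Xc$, which gives $\int_A g \wrt\Prob_X = \int_A g\, p_X \wrt\mu_\Xc$ for every nonnegative $\Sigma_\Xc$-measurable $g$. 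Applying this with $g(\emptyarg) = L(y|X=\emptyarg)$---which is $\Sigma_\Xc$-measurable because $L$ is jointly measurable by the hypothesis of \cref{thm:bayes_general}---turns the numerator above into $\int_A L(y|X=x) p_X(x) \wrt\mu_\Xc(x)$.

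Next I would pull the finite nonzero constant $1/p_Y(y)$ inside the integral, obtaining
\[ \Prob_X(A|Y=y) = \int_A \frac{L(y|X=x) p_X(x)}{p_Y(y)} \wrt\mu_\Xc(x) \qquad \text{for all } A \in \Sigma_\Xc, \]
which is exactly the assertion that $\Prob_X(\emptyarg|Y=y)$ is absolutely continuous with respect to $\mu_\Xc$ with the claimed density. One small point worth spelling out is that the integrand $x \mapsto L(y|X=x)p_X(x)/p_Y(y)$ really is a probability density: it is nonnegative and $\Sigma_\Xc$-measurable, and taking $A = \Xc$ shows its $\mu_\Xc$-integral equals $\Prob_X(\Xc|Y=y) = 1$, so in particular it is finite $\mu_\Xc$-almost everywhere.

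There is essentially no serious obstacle here; the result is a routine substitution once \cref{thm:bayes_general} is in hand. The only things to be careful about are bookkeeping: that the exceptional $\Prob_Y$-null set of ``bad'' $y$ is exactly the one inherited from \cref{thm:bayes_general}, that measurability in $x$ of $L(y|X=\emptyarg)$ for fixed $y$ is the measurability of a section of a jointly measurable function, and that the division by $p_Y(y)$ is legitimate precisely on $\{p_Y(y) \neq 0\}$, which is where the conclusion is stated.
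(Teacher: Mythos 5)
Your proof is correct and matches the paper's intent exactly: the paper states this result without proof as a ``direct consequence'' of \cref{thm:bayes_general}, and the routine substitution $\wrt\Prob_X = p_X\,\wrt\mu_\Xc$ in the numerator, followed by pulling the constant $1/p_Y(y)$ into the integral, is precisely that consequence. Your extra bookkeeping remarks (measurability of the section, the exceptional null set, normalization) are fine and do not change the argument.
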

The last theorem is particularly relevant in the finite-dimensional case, where the Lebesgue measure can be chosen as reference measure $\mu_\Xc$ on $\Xc$.

Having established existence of the posterior distributions, we now move towards a continuous dependency of the posterior $\Prob_X( \emptyarg |Y=y)$ on $y \in \Yc$. Given the form of the posterior as in \cref{thm:bayes_general}, it is clear that its continuous dependency on $y$ requires a suitable continuity of both $y \mapsto L(y|X=x)$ and $y \mapsto p_Y(y)$. The following lemma shows that the latter is a direct consequence of the former. This is again preferred from the modeling perspective since the likelihood $L(\emptyarg|X=\emptyarg)$ is usually available explicitly as modeling choice, while $p_Y$ is generally not explicit.

\begin{lemma} \label{lem:continuity_data}  In the setting of \cref{thm:bayes_general}, assume that $y \mapsto L(y|X=x)$ is continuous for $\Prob_X$-a.e. $x \in \Xc$ and that there exists $g\in L^1(\Xc,\Prob_X)$ such that $L(y|X=x)\leq g(x)$ for all $y\in Y$ and $\Prob_X$-a.e. $x \in \Xc$. Then, $y \mapsto p_Y(y)$ is also continuous.
\begin{proof}
This follows from pointwise convergence and the dominated convergence theorem: Let $(y_n)_n$ converge to $y$. Then
\[
\begin{aligned}
|p_Y(y_n) - p_Y(y)| & = \left|\int_\Xc L(y_n|X=x) \wrt \Prob_X(x)-\int_\Xc L(y|X=x)\wrt \Prob_X(x) \right|  \\ 
 &\leq \int_\Xc \left| L(y_n|X=x) - L(y|X=x) \right|\wrt \Prob_X(x)
\end{aligned}
\]
with the right-hand side converging to zero as \( y_n \) converges to \( y \) due to the assumptions on $L$.
\end{proof}
\end{lemma}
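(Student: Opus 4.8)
The plan is to reduce the claim to a single application of the dominated convergence theorem, exploiting exactly the two hypotheses: the pointwise (in $x$) continuity of $y \mapsto L(y|X=x)$ and the $\Prob_X$-integrable envelope $g$. Since $\Yc$ is a subset of a finite-dimensional space it is metrizable, hence first countable, so it suffices to prove \emph{sequential} continuity of $p_Y$. Accordingly, I would fix $y \in \Yc$ and an arbitrary sequence $(y_n)_n \subset \Yc$ with $y_n \to y$, and show $p_Y(y_n) \to p_Y(y)$.

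Recall from \cref{thm:bayes_general} that $p_Y(y) = \int_\Xc L(y|X=x)\,\wrt\Prob_X(x)$. Setting $f_n(x) \coloneqq L(y_n|X=x)$ and $f(x) \coloneqq L(y|X=x)$, the continuity hypothesis gives $f_n(x) \to f(x)$ for $\Prob_X$-a.e.\ $x \in \Xc$, while the envelope hypothesis gives $0 \le f_n(x) \le g(x)$ for all $n$ and $\Prob_X$-a.e.\ $x$, with $g \in L^1(\Xc,\Prob_X)$. The dominated convergence theorem then yields $\int_\Xc f_n \,\wrt\Prob_X \to \int_\Xc f\,\wrt\Prob_X$, that is, $p_Y(y_n) \to p_Y(y)$. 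As $y$ and the sequence were arbitrary, $p_Y$ is sequentially continuous and hence continuous on $\Yc$.

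There is no genuine obstacle here; the lemma is essentially a bookkeeping step. The only points worth stating explicitly are (i) that the finite-dimensionality, and thus metrizability, of $\Yc$ legitimizes the reduction to sequences, and (ii) that the nonnegativity of $L$ makes $\lvert f_n \rvert = f_n \le g$, so $g$ is a bona fide dominating function. An equivalent phrasing that avoids sequences altogether is to estimate $\lvert p_Y(y') - p_Y(y)\rvert \le \int_\Xc \lvert L(y'|X=x) - L(y|X=x)\rvert\,\wrt\Prob_X(x)$ and apply dominated convergence with dominating function $2g$; I would keep the sequential version for readability.
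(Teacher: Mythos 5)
Your proposal is correct and uses essentially the same argument as the paper: both reduce the claim to a single application of the dominated convergence theorem, using the a.e.\ pointwise continuity of $y \mapsto L(y|X=x)$ and the integrable envelope $g$ (the paper phrases it via the bound $|p_Y(y_n)-p_Y(y)| \le \int_\Xc |L(y_n|X=x)-L(y|X=x)|\,\wrt\Prob_X(x)$, which is exactly your alternative phrasing). Your explicit remarks on metrizability of $\Yc$ and nonnegativity of $L$ are fine bookkeeping details but do not change the argument.
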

The above assumptions on the likelihood are not very restrictive. First we note that they trivially hold in the case of Gaussian measurement noise.
\begin{example}[Gaussian noise]
	In case $\Yc = \R^d$ and $\Prob _Y (\emptyarg | X = \emptyarg)$ admits a density $(x,y) \mapsto L(y|X=x)$ with respect to the Lebesgue measure of the form 
	\[
		L(y|X=x) = (2\pi\sigma^2)^{-d/2}\exp{-\frac{\|y-\Fc(x)\|^2}{2\sigma^2}},
	\]
	where $\Fc:\Xc \rightarrow \Yc $ is a measurable forward model, $(x,y) \mapsto L(y|X=x)$ fulfills the assumptions of \cref{lem:continuity_data} since $y \mapsto L(y|X=x)$ is obviously continuous for every $x$ and $L(y|X=x) \leq (2\pi\sigma^2)^{-d/2}$.
\end{example}
Regarding Poisson noise, we can observe that whenever $\Yc$ is countable and $\Prob _Y (\emptyarg | X = \emptyarg)$ admits a density $L(\emptyarg|X=\emptyarg)$ with respect to the counting measure, we have for every $x,y$ that
\[L(y|X=x) \leq \sum_{\tilde y \in \Yc} L( \tilde y|X=x) \leq 1 , \]
such that there always exists some $g \in L^1(\Xc,\Prob_\Xc)$ with $L(y|X=x) \leq g(x)$. The continuity of $y \mapsto L(y|X=x)$ again depends on the application, but trivially holds in the case of Poisson noise:
\begin{example}[Poisson noise]
	In case that $\Yc = \{ 1,2,3,\ldots \}^d$, the Borel $\sigma$-algebra $\Sigma_{\Yc}$ equals the power set, and the conditional distribution $\Prob( \emptyarg | X=x)$ is admits a density $L(y|X=x)$ w.r.t. the counting measure given as
	\[
		L(y|X=x) = \prod_{i=1}^d\frac{\Fc(x)_i^{y_i}\exp{-\Fc(x)_i}}{y_i!},
	\]
	the likelihood $L(y|X=x)$ fulfills the assumptions of \cref{lem:continuity_data} whenever the forward model $\Fc:\Xc \rightarrow \Yc $ is measurable.
\end{example}

We now move towards providing a stability result which will be with respect to the Hellinger distance, that we define as follows.
\begin{definition}[Hellinger distance]
	Let $(\Omega,\Sigma,\mu)$ be a probability space and let $\mu_1$ and $\mu_2$ be two probability measures on $(\Omega,\Sigma)$ such that $\mu_1,\mu_2 \ll \mu$\footnote{\ie, $\mu_1,\mu_2$ are absolutely continuous with respect to $\mu$}.
	We define the \emph{Hellinger distance} between \( \mu_1 \) and \( \mu_2 \) as
	\[
		\hellinger(\mu_1,\mu_2) = \left( \frac{1}{2}\int_\Omega \left( \sqrt{\frac{d \mu_1}{d\mu}} - \sqrt{\frac{d \mu_2}{d\mu}} \right)^2\wrt \mu \right)^{1/2}
	\]
	\end{definition}
Under the same (non-restrictive) assumptions as in \cref{lem:continuity_data}, continuity w.r.t.\ the Hellinger distance follows.
\begin{theorem}[Distributional stability \cite{latz2023bayesian}]\label{thm:hellinger_stability}
 In the setting of \cref{thm:bayes_general}, assume that
	\begin{enumerate}
		\item there exists $g\in L^1(X,\Prob_X)$ such that $L(y'|X=x)\leq g(x)$ for all $y'\in Y$ and $\Prob_X$-a.e. $x \in \Xc$,
		\item and that $y \mapsto L(y|X=x)$ is continuous for $\Prob_X$-a.e. $x \in \Xc$.
	\end{enumerate}
	Then, $y\mapsto \Prob_X(\cdot|Y = y)$ is continuous with respect to the Hellinger distance at every point $\hat{y}\in \Yc$ with $p_Y(\hat{y}) \neq 0$.
\end{theorem}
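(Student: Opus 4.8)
The plan is to use $\Prob_X$ itself as the common reference measure. By \cref{thm:bayes_general}, for every $y$ with $p_Y(y)\neq 0$ the measure $\Prob_X(\emptyarg|Y=y)$ is absolutely continuous with respect to $\Prob_X$ with density $x\mapsto L(y|X=x)/p_Y(y)$, and we work throughout with this explicit version (which is all the statement can mean, since the conditional distribution is only determined $\Prob_Y$-a.s.). Hence, for any such $y$ and for $\hat y$ with $p_Y(\hat y)\neq 0$,
\[
	\hellinger\bigl(\Prob_X(\emptyarg|Y=y),\Prob_X(\emptyarg|Y=\hat y)\bigr)^2
	= \frac12\int_\Xc\left(\sqrt{\frac{L(y|X=x)}{p_Y(y)}}-\sqrt{\frac{L(\hat y|X=x)}{p_Y(\hat y)}}\right)^2\wrt\Prob_X(x),
\]
so it suffices to show that for every sequence $(y_n)_n$ with $y_n\to\hat y$ this quantity tends to $0$. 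First I would invoke \cref{lem:continuity_data}, whose hypotheses are exactly assumptions 1 and 2 of the theorem, to obtain that $p_Y$ is continuous; thus $p_Y(y_n)\to p_Y(\hat y)>0$, and there is $n_0$ with $p_Y(y_n)\geq\tfrac12 p_Y(\hat y)>0$ for all $n\geq n_0$. This both makes $\Prob_X(\emptyarg|Y=y_n)$ well defined for large $n$ and provides a uniform lower bound on the normalizing constants.

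Next I would check the two ingredients of a dominated-convergence argument applied to the integrand $f_n(x)\coloneqq\bigl(\sqrt{L(y_n|X=x)/p_Y(y_n)}-\sqrt{L(\hat y|X=x)/p_Y(\hat y)}\bigr)^2$. For pointwise convergence: by assumption 2, $L(y_n|X=x)\to L(\hat y|X=x)$ for $\Prob_X$-a.e.\ $x$; combined with $p_Y(y_n)\to p_Y(\hat y)\neq 0$ and continuity of $t\mapsto\sqrt t$ on $[0,\infty)$, this gives $f_n(x)\to 0$ for $\Prob_X$-a.e.\ $x$. For domination: using the elementary inequality $(\sqrt a-\sqrt b)^2\leq a+b$ together with assumption 1 and the bound $p_Y(y_n)\geq\tfrac12 p_Y(\hat y)$ for $n\geq n_0$,
\[
	f_n(x)\leq \frac{L(y_n|X=x)}{p_Y(y_n)}+\frac{L(\hat y|X=x)}{p_Y(\hat y)}\leq \frac{3\,g(x)}{p_Y(\hat y)}\eqqcolon h(x),
\]
and $h\in L^1(\Xc,\Prob_X)$ since $g\in L^1(\Xc,\Prob_X)$. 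The dominated convergence theorem then yields $\int_\Xc f_n\wrt\Prob_X\to 0$, hence $\hellinger(\Prob_X(\emptyarg|Y=y_n),\Prob_X(\emptyarg|Y=\hat y))\to 0$. Since the sequence was arbitrary, $y\mapsto\Prob_X(\emptyarg|Y=y)$ is continuous at $\hat y$ with respect to the Hellinger distance.

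The argument is essentially routine once the reference measure is fixed to $\Prob_X$; the step I expect to require the most care is the domination near $\hat y$, namely controlling $1/p_Y(y_n)$ uniformly in $n$. This is precisely what the continuity of $p_Y$ from \cref{lem:continuity_data} (rather than merely its positivity at $\hat y$) provides, and it also explains why the hypotheses are imposed on the likelihood $L$ alone: these are available as a modeling choice, whereas $p_Y$ is not.
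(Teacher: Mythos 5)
Your proposal is correct, and it rests on the same pillars as the paper's proof: both fix $\Prob_X$ as the common reference measure (via \cref{thm:bayes_general}), both invoke \cref{lem:continuity_data} to get continuity of $p_Y$ at $\hat y$, and both ultimately reduce everything to dominated convergence with the envelope $g$. The difference is in how the Hellinger integral is dispatched. The paper splits the $L^2(\Xc,\Prob_X)$ norm by the triangle inequality into a likelihood-difference term and a normalization-difference term: the latter vanishes by continuity of $p_Y$, and the former is controlled by the elementary estimate $\bigl(\sqrt{a}-\sqrt{b}\bigr)^2 \leq |a-b|$ together with a separately established $L^1(\Prob_X)$-convergence $\int |L(y_n|X=x)-L(\hat y|X=x)|\,\wrt\Prob_X(x) \to 0$ (itself a DCT argument); this only needs $p_Y(\hat y)>0$ and convergence of $p_Y(y_n)$, not a uniform lower bound. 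You instead apply DCT once to the full squared integrand $\bigl(\sqrt{L(y_n|X=x)/p_Y(y_n)}-\sqrt{L(\hat y|X=x)/p_Y(\hat y)}\bigr)^2$, which is more compact but requires the uniform bound $p_Y(y_n)\geq \tfrac12 p_Y(\hat y)$ for large $n$ to build the dominating function $3g/p_Y(\hat y)$ — a point you correctly identify and justify from the continuity of $p_Y$. Both routes are valid; the paper's decomposition isolates the two sources of instability (likelihood vs.\ evidence) and reuses the $|a-b|$ bound that underlies the classical comparison of Hellinger and total-variation-type quantities, while yours buys a shorter, single-application-of-DCT argument.
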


\begin{proof}
	First note that, by the dominated convergence theorem, for any sequence $(y_n)_n $ in $\Yc$ that converges to some $y \in \Yc$ we obtain from our assumptions that
\[
	\lim_{n \rightarrow \infty} \int _\Xc |L(y_n|X=x) - L(y|X=x)| \wrt \Prob_X(x) = 0.
\]
Now take $\hat{y},y  \in \Yc$ with $p_Y(\hat{y}),p_Y(y) \neq 0$. Using the distributional Bayes theorem (\cref{thm:bayes_general}) we can compute that
\begin{align*}
	&\hellinger (\Prob_X(\cdot|Y = \hat y),\Prob_X(\cdot|Y = y)) \\
 &\quad=  \frac{1}{\sqrt{2}} \left\|  \frac{ \sqrt{L(\hat y |X=\cdot )}}{\sqrt{p_Y(\hat y)}} - \frac{\sqrt{ L(y |X=\cdot )}}{\sqrt{p_Y(y)}} \right \|_{L^2(X,\Prob_X)} \\ 
 &\quad\leq \frac{1}{\sqrt{2p_Y(\hat y)}} \left\|  \sqrt{L(\hat y |X=\cdot )} - \sqrt{L(y |X=\cdot)} \right \|_{L^2(X,\Prob_X)} \\ 
 &\qquad+ \sqrt{\frac{1}{2}} \left|  \frac{1}{\sqrt{p_Y(\hat y)}} - \frac{1}{\sqrt{p_Y(y)}} \right | \| \sqrt{ L(y |X=\cdot )}\|_{L^2(X,\Prob_X)}.
\end{align*}
The second term on the right hand side converges zero as $y \rightarrow \hat{y}$ due to continuity of $p_Y$.
For the first expression, we note that
\begin{align*}
	&\left\|  \sqrt{L(\hat y |X=\cdot )} - \sqrt{L(y |X=\cdot)} \right \|^2_{L^2(X,\Prob_X)}\\
	&= \int_{\Xc}  \Bigl(\sqrt{L(\hat y |X=x )} - \sqrt{L(y |X=x)} \Bigr)^2 \wrt \Prob_X(x) \\
	&\leq  \int_{\Xc}  \Bigl|\sqrt{L(\hat y |X=x )} - \sqrt{L(y |X=x)}\Bigr|\cdot\Bigl|\sqrt{L(\hat y |X=x )} + \sqrt{L(y |X=x)}\Bigr| \wrt \Prob_X(x) \\
	&=  \int_{\Xc}  |L(\hat y |X=x ) -L(y |X=x)| \wrt \Prob_X(x),
\end{align*}
which converges to zero as $y \rightarrow \hat{y}$ as argued at the beginning of this proof. 
\end{proof}

Another commonly used distance function for measures is the total variation distance. For probability measures, it is usually defined as follows.
\begin{definition}[Total variation distance]
	Let $(\Omega, \Sigma)$ be a measure space and let $\mu_1$ and $\mu_2$ be two probability measures on $(\Omega,\Sigma)$.
	We define the \emph{total variation distance} between \( \mu_1 \) and \( \mu_2 \) as
	\[
		\tv(\mu_1,\mu_2) =  \sup_{B \in \Sigma} | \mu_1(B) - \mu_2(B)|.
	\]
\end{definition}
Moreover, it is well known that, if $\mu_1,\mu_2\ll \mu$, the total variation distance admits the representation
\[
	\tv(\mu_1,\mu_2) = \frac{1}{2} \int \left| \frac{\dd \mu_1}{\dd \mu}(x)-\frac{\dd \mu_2}{\dd \mu}(x)\right|\dd \mu(x).
\]
The following lemma shows that continuity w.r.t.\ the Hellinger distance is stronger than continuity w.r.t.\ to the total variation distance.
\begin{lemma}
Let $(\Omega,\Sigma,\mu)$ be a measure space and $\mu_1,\mu_2$ be two measures on  $(\Omega,\Sigma)$ that are absolutely continuous w.r.t $\mu$. Then
\[ \tv(\mu_1,\mu_2) \leq \sqrt{2}\hellinger(\mu_1,\mu_2).
\]
\end{lemma}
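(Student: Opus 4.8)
The plan is to pass to densities with respect to the common dominating measure $\mu$ and apply the Cauchy--Schwarz inequality. Write $f_i = \tfrac{\dd \mu_i}{\dd \mu}$ for $i = 1,2$ and invoke the integral representation of the total variation distance recalled just above the statement, so that
\[
	\tv(\mu_1,\mu_2) = \frac{1}{2}\int_\Omega |f_1 - f_2| \dd \mu = \frac{1}{2}\int_\Omega \bigl|\sqrt{f_1}-\sqrt{f_2}\bigr|\cdot\bigl|\sqrt{f_1}+\sqrt{f_2}\bigr| \dd \mu ,
\]
where the factorization uses $a - b = (\sqrt a - \sqrt b)(\sqrt a + \sqrt b)$ for $a,b \ge 0$, applied pointwise to the nonnegative densities.

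Next I would apply Cauchy--Schwarz in $L^2(\Omega,\mu)$ to the two factors, which gives
\[
	\tv(\mu_1,\mu_2) \leq \frac{1}{2}\left(\int_\Omega \bigl(\sqrt{f_1}-\sqrt{f_2}\bigr)^2 \dd \mu\right)^{1/2}\left(\int_\Omega \bigl(\sqrt{f_1}+\sqrt{f_2}\bigr)^2 \dd \mu\right)^{1/2}.
\]
The first integral equals $2\,\hellinger(\mu_1,\mu_2)^2$ directly from the definition of the Hellinger distance. For the second, I would use the elementary pointwise bound $\bigl(\sqrt{f_1}+\sqrt{f_2}\bigr)^2 \leq 2(f_1 + f_2)$ (equivalently $2\sqrt{f_1 f_2}\le f_1+f_2$) together with $\int_\Omega f_i \dd \mu = \mu_i(\Omega) = 1$, obtaining $\int_\Omega \bigl(\sqrt{f_1}+\sqrt{f_2}\bigr)^2 \dd \mu \leq 2\bigl(\mu_1(\Omega)+\mu_2(\Omega)\bigr) = 4$. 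Substituting the two estimates yields $\tv(\mu_1,\mu_2) \leq \tfrac{1}{2}\cdot\sqrt{2}\,\hellinger(\mu_1,\mu_2)\cdot 2 = \sqrt{2}\,\hellinger(\mu_1,\mu_2)$, which is the claim.

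There is essentially no obstacle here; the two points worth a line of care are, first, that both the integral representation of $\tv$ and the value of $\hellinger(\mu_1,\mu_2)$ are independent of the particular dominating measure (any $\mu$ with $\mu_1,\mu_2\ll\mu$ works, e.g.\ $\mu_1+\mu_2$), so the inequality is well posed, and second, that the bound $\int_\Omega f_i \dd\mu = 1$ uses that $\mu_1,\mu_2$ are probability measures — if one only assumes sub-probability measures, the same computation gives the inequality with the constant $4$ replaced by $2\bigl(\mu_1(\Omega)+\mu_2(\Omega)\bigr)$.
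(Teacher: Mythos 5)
Your proof is correct and follows essentially the same route as the paper's: the same factorization $|f_1-f_2| = |\sqrt{f_1}-\sqrt{f_2}|\,|\sqrt{f_1}+\sqrt{f_2}|$ inside the integral representation of $\tv$, Cauchy--Schwarz in $L^2(\Omega,\mu)$, and the bound of $4$ on $\int(\sqrt{f_1}+\sqrt{f_2})^2\,\wrt\mu$ (the paper bounds the cross term $\int\sqrt{f_1 f_2}\,\wrt\mu$ by $1$ via Cauchy--Schwarz, you via the pointwise AM--GM inequality --- an immaterial difference). Your closing remark that the constant $4$ uses the total masses, i.e.\ that $\mu_1,\mu_2$ are probability measures as in the paper's definitions of $\tv$ and $\hellinger$, is a fair observation and matches the paper's implicit assumption.
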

\begin{proof}
Define
\[
\delta(x) = \sqrt{\frac{\wrt\mu_1}{\wrt\mu}}(x) - \sqrt{\frac{\wrt\mu_2}{\wrt\mu}}(x)
\]
Then
\[
\hellinger(\mu_1,\mu_2)^2 = \frac{1}{2} \int \delta(x)^2 \wrt \mu(x).
\]
Also, since
\[
\left |\frac{\wrt\mu_1}{d\mu}(x) - \frac{\wrt\mu_2}{d\mu}(x) \right| = 
  |\delta(x)|  \left(\sqrt{\frac{\wrt\mu_1}{\wrt\mu}(x)} + \sqrt{\frac{\wrt\mu_2}{\wrt\mu}(x)}\right),
\]
we obtain that
\[
\tv(\mu_1,\mu_2) = \frac{1}{2} \int |\delta(x)| \left(\sqrt{\frac{\wrt\mu_1}{\wrt\mu}(x)} + \sqrt{\frac{\wrt\mu_2}{\wrt\mu}(x)} \right) \wrt\mu(x).
\]
Now, by the Cauchy-Schwarz inequality
\[
\tv(\mu_1,\mu_2) \leq \frac{1}{2} \left( \int \delta(x)^2 \, \dd \mu(x) \right)^{1/2} \left( \int \left(\sqrt{\frac{\wrt\mu_1}{\wrt\mu}(x)} + \sqrt{\frac{\wrt\mu_2}{\wrt\mu}(x)}\right)^2 \, \wrt \mu (x)\right)^{1/2}.
\]
Since, again using Cauchy-Schwarz,
    \begin{equation*}
		\begin{aligned}
		&\int \left(\sqrt{\frac{\wrt\mu_1}{\wrt\mu}(x)} + \sqrt{\frac{\wrt\mu_2}{\wrt\mu}(x)}\right)^2 \, \dd \mu(x)\mu \\
		&\quad= \int \left( \frac{\wrt\mu_1}{\wrt\mu}(x) + \frac{\wrt\mu_1}{\wrt\mu}(x) + 2\sqrt{\frac{\wrt\mu_1}{\wrt\mu}(x)\frac{\wrt\mu_2}{\wrt\mu}(x)} \right) \, \wrt\mu(x) \\
		&\quad= \int \frac{\wrt\mu_1}{\wrt\mu}(x) \, \dd \mu(x) + \int \frac{\wrt\mu_1}{\wrt\mu}(x) \, \dd \mu(x) + 2 \int \sqrt{\frac{\wrt\mu_1}{\wrt\mu}(x)\frac{\wrt\mu_2}{\wrt\mu}(x)} \, \wrt\mu(x) \\
		&\quad= 1 + 1 + 2 \cdot \int \sqrt{\frac{\wrt\mu_1}{\wrt\mu}(x)\frac{\wrt\mu_2}{\wrt\mu}(x)} \, \wrt\mu(x) \leq 4
	\end{aligned}
    \end{equation*}
we finally obtain that
\[
\tv(\mu_1,\mu_2) \leq \frac{1}{2} \cdot \sqrt{2 \hellinger(\mu_1,\mu_2)^2} \cdot \sqrt{4} = \sqrt{2} \cdot \hellinger(\mu_1,\mu_2). \qedhere
\]
\end{proof}

\begin{corollary}[Distributional stability in total variation] In the setting of \cref{thm:hellinger_stability}, $y\mapsto \Prob_X(\emptyarg|Y = y)$ is continuous with respect to the total variation distance at every $y \in \Yc$ with $p_Y(\hat{y}) \neq 0$.
\end{corollary}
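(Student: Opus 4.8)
The plan is to obtain this corollary by directly chaining two results already at our disposal: the Hellinger stability of \cref{thm:hellinger_stability} and the comparison inequality $\tv(\mu_1,\mu_2)\le\sqrt{2}\,\hellinger(\mu_1,\mu_2)$ from the preceding lemma. First I would fix a point $\hat y\in\Yc$ with $p_Y(\hat y)\neq 0$. Since the hypotheses of \cref{thm:hellinger_stability} subsume those of \cref{lem:continuity_data}, the evidence $y\mapsto p_Y(y)$ is continuous, so there is a neighbourhood $U$ of $\hat y$ on which $p_Y$ does not vanish. For every $y\in U$ the posterior $\Prob_X(\emptyarg|Y=y)$ is then well-defined and, by \cref{thm:bayes_general}, absolutely continuous with respect to $\Prob_X$; in particular any two such posteriors are dominated by the common measure $\Prob_X$.

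Next I would apply the comparison lemma with dominating measure $\mu=\Prob_X$: for all $y\in U$,
\[
	\tv\bigl(\Prob_X(\emptyarg|Y=\hat y),\Prob_X(\emptyarg|Y=y)\bigr)\;\le\;\sqrt{2}\,\hellinger\bigl(\Prob_X(\emptyarg|Y=\hat y),\Prob_X(\emptyarg|Y=y)\bigr).
\]
By \cref{thm:hellinger_stability} the right-hand side tends to $0$ as $y\to\hat y$, hence so does the left-hand side, which is exactly continuity of $y\mapsto\Prob_X(\emptyarg|Y=y)$ with respect to the total variation distance at $\hat y$.

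There is essentially no genuine obstacle here; the corollary is a one-line consequence of the two cited statements. The only point deserving a moment's care is that the comparison lemma presumes a single reference measure dominating both arguments — this is supplied by the absolute continuity of each posterior with respect to $\Prob_X$ granted by \cref{thm:bayes_general}, and restricting to the neighbourhood $U$ ensures the posteriors in question are all defined. Everything else is immediate.
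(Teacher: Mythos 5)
Your proposal is correct and follows exactly the route the paper intends: the corollary is stated as an immediate consequence of \cref{thm:hellinger_stability} combined with the preceding lemma bounding $\tv$ by $\sqrt{2}\,\hellinger$, and your extra care about the common dominating measure $\Prob_X$ (via \cref{thm:bayes_general}) and the non-vanishing of $p_Y$ near $\hat y$ (via \cref{lem:continuity_data}) only makes explicit what the paper leaves implicit.
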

The last result is interesting in particular in view of sampling algorithms, which often show convergence of the distribution of samples to the original distribution in terms of the total variation distance. In this case, by the triangle inequality, the last result implies also stability of sampling from the posterior.

A second metric that is frequently used to analyze sampling algorithms is the Wasserstein metric. Under stronger assumptions, stability of the posterior distribution with respect to the Wasserstein metric can also be obtained \cite[Setion 3.5]{latz2023bayesian}.
While the solution to the Bayesian inverse problem is formally defined as the posterior distribution, in practical applications one typically is interested in various quantities derived from the posterior distribution, such as its expectation, its modes, or the probability or quantiles of certain quantities of interest. In the following we consider when such derived quantities also depend continuously on the data.
The following lemma is a central result, as it provides stability of every quantity derived from the posterior via a function that is bounded on the support of $\Prob_X$.
\begin{lemma}
In the setting of \cref{thm:hellinger_stability}, let $(\mathcal{Z},\|\emptyarg \|_\mathcal{Z})$ a normed space equipped with the Borel $\sigma$-algebra and let $f:\Xc \rightarrow \mathcal{Z}$ be measurable and bounded on the support of $\Prob_X$ (i.e., there exist $A \in \Sigma_\Xc$ with $\Prob_X(A) = 0$ and $C>0$ such $\sup_{x \in \Xc \setminus A} \|f(x)\|_\mathcal{Z} < C$). Then 
	\[ y \mapsto \E _ {\Prob_X(\emptyarg|Y = y)} [f]\]
	is continuous at every $y \in \Yc$ with $p_Y(y)>0$. In particular, at every such $y$, the mapping 
	\[ y \mapsto   \Prob_X(A|Y=y)
	\]
	is continuous for every $A  \in \Sigma_{\Xc}$.
\begin{proof}
	The proof is given by the following inequality chain for $y,\hat{y}\in \Yc$ with $p_Y(y),p_Y(\hat{y} \neq 0$:
	\begin{equation}
		\begin{aligned}
			&\left \| \int_{\Xc} f(x) \wrt \Prob_X(\emptyarg|Y = y) -  \int_{\Xc} f(x) \wrt \Prob_X(\emptyarg|Y =  \hat y) \right\| _\mathcal{Z} \\
			&\quad\leq   \int_{\Xc}  \left \| f(x)  \frac{ L(y |X=x)}{p_Y(y)} - f(x)\frac{ L(\hat y |X=x)}{p_Y(\hat y)}  \right\|_\mathcal{Z} \wrt \Prob_X(x) \\
			&\quad\leq C \int_{\Xc}  \left |  \frac{ L(y |X=x)}{p_Y(y)} - \frac{ L(\hat y |X=x)}{p_Y(\hat y)}  \right| \wrt \Prob_X(x) \\
			&\quad= C \tv(\Prob_X(\emptyarg|Y = y) , \Prob_X(\emptyarg|Y = \hat y)).
	   \end{aligned}
	\end{equation}
	The second assertion follows by choosing $f = \chi_A$.
\end{proof}
\end{lemma}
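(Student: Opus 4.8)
The plan is to reduce the statement to the already-established stability of the posterior in total variation (the Corollary following \cref{thm:hellinger_stability}), which in turn rests on \cref{thm:hellinger_stability} via the comparison $\tv \le \sqrt{2}\,\hellinger$. Fix $y \in \Yc$ with $p_Y(y) > 0$. By \cref{thm:bayes_general}, for every $\hat y$ with $p_Y(\hat y) \neq 0$ the posterior $\Prob_X(\emptyarg|Y=\hat y)$ is absolutely continuous w.r.t.\ $\Prob_X$ with density $x \mapsto L(\hat y|X=x)/p_Y(\hat y)$. First I would record that $\E_{\Prob_X(\emptyarg|Y=\hat y)}[f]$ is well defined as a $\mathcal{Z}$-valued (Bochner) integral: $f$ is measurable and, off a $\Prob_X$-null set $A$, satisfies $\|f\|_\mathcal{Z} < C$, so it is integrable against the finite measure $\Prob_X(\emptyarg|Y=\hat y)$, which assigns $A$ mass zero.

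Then, for $\hat y$ with $p_Y(\hat y) \neq 0$, I would write the difference of the two expectations as a single integral against the difference of the densities,
\[
\E_{\Prob_X(\emptyarg|Y=y)}[f] - \E_{\Prob_X(\emptyarg|Y=\hat y)}[f] = \int_\Xc f(x)\left( \frac{L(y|X=x)}{p_Y(y)} - \frac{L(\hat y|X=x)}{p_Y(\hat y)} \right) \wrt \Prob_X(x),
\]
pass the norm inside the integral, and use $\|f(x)\|_\mathcal{Z} < C$ for $\Prob_X$-a.e.\ $x$ to pull out the constant, obtaining
\[
\bigl\| \E_{\Prob_X(\emptyarg|Y=y)}[f] - \E_{\Prob_X(\emptyarg|Y=\hat y)}[f] \bigr\|_\mathcal{Z} \le C \int_\Xc \left| \frac{L(y|X=x)}{p_Y(y)} - \frac{L(\hat y|X=x)}{p_Y(\hat y)} \right| \wrt \Prob_X(x).
\]
By the density representation of the total variation distance recalled before the Hellinger-versus-TV comparison lemma, the integral on the right is a fixed constant times $\tv(\Prob_X(\emptyarg|Y=y),\Prob_X(\emptyarg|Y=\hat y))$, so the whole right-hand side is bounded by $2C\,\tv(\Prob_X(\emptyarg|Y=y),\Prob_X(\emptyarg|Y=\hat y))$. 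Letting $\hat y \to y$ and invoking the Corollary on distributional stability in total variation, the right-hand side tends to $0$; this is precisely continuity of $y \mapsto \E_{\Prob_X(\emptyarg|Y=y)}[f]$ at $y$.

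For the second assertion I would apply the first part to $f = \1_A$ for $A \in \Sigma_\Xc$, viewed as a map into $\R$: it is measurable and bounded by $1$ on all of $\Xc$, and $\E_{\Prob_X(\emptyarg|Y=y)}[\1_A] = \Prob_X(A|Y=y)$, so continuity of $y \mapsto \Prob_X(A|Y=y)$ follows at once. I do not expect a genuine obstacle here, since the argument is essentially bookkeeping on top of results already in hand. The only points requiring a moment's care are (i) well-definedness of the $\mathcal{Z}$-valued integral, which is handled by measurability together with essential boundedness, and (ii) keeping track of the harmless numerical factor relating $\int|\,\cdot\,|\wrt\Prob_X$ of the densities to $\tv$; neither affects the conclusion, because both the continuity of $p_Y$ (\cref{lem:continuity_data}) and the total-variation stability of the posterior are already available.
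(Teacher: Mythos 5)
Your proposal is correct and follows essentially the same route as the paper: rewrite the difference of expectations as one integral against the difference of the posterior densities given by \cref{thm:bayes_general}, bound it by $C$ times the $L^1(\Prob_X)$ distance of the densities, identify that quantity (up to the factor $2$, which you track more carefully than the paper's final equality) with the total variation distance, and conclude via the total-variation stability corollary; the case $f=\chi_A$ handles the second assertion exactly as in the paper. Your added remarks on Bochner integrability are harmless bookkeeping and do not change the argument.
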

This result already implies that the posterior expectation depends continuously on the data if there exists some $C>0$ such that $\Prob_X( \{ x \in \Xc \,|\, \|x\|_{\Xc} > C  \}) = 0$ (since in this case $f(x)=x$ fulfills the above assumptions), which is a reasonable assumption for instance in the case that $X$ models image data where the pixel values are usually withing a predefined maximal range. In this case, the above result with $f(x) = x^m$ also implies that all moments $m\in \N$, and in particular also the variance, of $\Prob_X(\emptyarg |Y=y)$ are finite.

Alternatively, we obtain continuity of posterior expectation under the minimal assumption that $X$ is integrable and by assuming that the likelihood is bounded.
\begin{lemma} In the setting of  \cref{thm:hellinger_stability}, assume in addition that there exists $C>0$ such that $|L(y|X = x)| \leq C$ for all $y \in \Yc$ and $\Prob_X$-a.e. $x \in \Xc$, and that
\[ \int _{\Xc} \|x\|_\Xc \wrt \Prob_X(x) < \infty .\]
Then, the posterior expectation is continuous in all points $y \in \Yc$ with $p_Y(y) \neq 0$, that is,
\[ \E _{\Prob_X(\emptyarg|Y = y)}[x] \rightarrow \E _{\Prob_X(\emptyarg|Y = \hat y)}[x] \]
as $y \rightarrow \hat y$ with $p_Y(\hat{y}) \neq 0$.
\begin{proof}
	Let $y \rightarrow \hat y$. Then also $p_Y(y) \rightarrow p_Y(\hat{y})>0$ by \cref{lem:continuity_data}, and in particular it is bounded from below by $\delta>0$. Then, since
	\[
		\left\|x   \frac{ L(y |X=x)}{p_Y(y)} \right\|_\Xc \leq \frac{C}{\delta}\|x\|_\Xc
	\]
	and the latter is $\Prob_X$-integrable, the result follows from continuity of the dominated convergence theorem.
\end{proof}
\end{lemma}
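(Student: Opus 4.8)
The plan is to write the posterior expectation as an integral against $\Prob_X$ using the Radon--Nikod\'ym density supplied by \cref{thm:bayes_general}, and then to pass to the limit by dominated convergence. Concretely, for $y \in \Yc$ with $p_Y(y) \neq 0$, \cref{thm:bayes_general} gives $\tfrac{\wrt \Prob_X(\emptyarg|Y=y)}{\wrt \Prob_X}(x) = \tfrac{L(y|X=x)}{p_Y(y)}$, so that
\[
	\E_{\Prob_X(\emptyarg|Y = y)}[x] = \int_\Xc x\,\frac{L(y|X=x)}{p_Y(y)}\wrt\Prob_X(x),
\]
and this (Bochner) integral is well defined because $\|x\|_\Xc L(y|X=x)/p_Y(y) \leq (C/p_Y(y))\|x\|_\Xc$ is $\Prob_X$-integrable by the added moment assumption together with the boundedness of $L$. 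It then suffices to establish sequential continuity of this expression at an arbitrary $\hat y$ with $p_Y(\hat y) \neq 0$.

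First I would fix an arbitrary sequence $y_n \to \hat y$. By \cref{lem:continuity_data}, $p_Y(y_n) \to p_Y(\hat y) > 0$, so there exist $N \in \N$ and $\delta := p_Y(\hat y)/2 > 0$ with $p_Y(y_n) \geq \delta$ for all $n \geq N$; discarding the first $N$ terms is harmless for the limit. Next, for $\Prob_X$-a.e.\ $x$ the map $y \mapsto L(y|X=x)$ is continuous by the second hypothesis of \cref{thm:hellinger_stability}, whence the integrands $x \mapsto x\,L(y_n|X=x)/p_Y(y_n)$ converge pointwise $\Prob_X$-a.e.\ to $x\,L(\hat y|X=x)/p_Y(\hat y)$. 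Finally, for $n \geq N$ these integrands are dominated in $\Xc$-norm by the fixed $\Prob_X$-integrable function $(C/\delta)\,\|x\|_\Xc$, using the boundedness assumption on $L$ and $p_Y(y_n) \geq \delta$. The (vector-valued) dominated convergence theorem then yields $\E_{\Prob_X(\emptyarg|Y = y_n)}[x] \to \E_{\Prob_X(\emptyarg|Y = \hat y)}[x]$, and since the sequence was arbitrary, continuity at $\hat y$ follows.

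The argument is essentially routine; the only points that need a little care are (i) making sure the dominating function can be chosen \emph{uniformly} along the tail of the sequence — which is exactly why one first extracts the lower bound $\delta$ on $p_Y$ near $\hat y$ from \cref{lem:continuity_data} — and (ii) that, when $\Xc$ is infinite-dimensional, the posterior expectation is a Bochner integral, so one should invoke the Bochner-space version of dominated convergence, which applies verbatim given the scalar domination $\|x\,L(y_n|X=x)/p_Y(y_n)\|_\Xc \leq (C/\delta)\,\|x\|_\Xc$. No continuity of $y \mapsto L(y|X=x)$ beyond the one already assumed, and no joint continuity in $(x,y)$, is required.
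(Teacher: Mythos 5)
Your proposal is correct and follows essentially the same route as the paper: bound $p_Y$ away from zero near $\hat y$ via \cref{lem:continuity_data}, dominate the integrand by $(C/\delta)\|x\|_\Xc$, and conclude by dominated convergence. You merely spell out the details (pointwise convergence of the integrands and the Bochner-integral formulation) that the paper's proof leaves implicit.
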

Note that, as discussed in the examples above, the likelihood $L(\emptyarg | X = \emptyarg)$ is always bounded in case of Gaussian- or Poisson measurement noise, such that in these cases, the expectation depends continuously on the data as claimed.

In contrast to the expectation of the posterior, which corresponds to the \gls{mmse} estimator, it is not trivial to define and analyze a \gls{map} estimator when $\Xc$ is infinite-dimensional since in that case the Lebesgue measure is not available as a space-invariant underlying measure w.r.t.\ which the densities of the prior and the posterior can be defined. We refer to \cite{dashti2013map} for the analysis of the \gls{map} estimator in infinite-dimensional inverse problems with Gaussian priors, to \cite{helin2015maximum} for an extension to non-Gaussian priors and to \cite{lie2018equivalence} for an analysis of the connection of different notions of \gls{map} estimator in infinite-dimensional inverse problems.

In the finite-dimensional setting, the \gls{map} estimator can be defined as the maximizer of posterior density w.r.t.\ the Lebesgue measure. But even in this case, the \gls{map} estimator cannot exist without further assumptions: For example, $f:\R \rightarrow \R $ with
\[
f(x) =
\begin{cases}
\frac{n}{Z}, & \text{if } x \in \left[n, n + \frac{1}{n^3}\right[, \\
0, & \text{otherwise},
\end{cases}
\quad \text{with } Z = \sum_{n=1}^\infty \frac{1}{n^2}< \infty
\]
defines a probability density who's maximum does not exist. Furthermore, since the \gls{map} estimator is defined as pointwise maximum of the posterior density, it depends on the choice of representative for this density, which is only defined Lebesgue-almost-everywhere. To avoid this, one usually requires that the density has a continuous representative, which is then unique.

Using these assumptions, a stability result for the \gls{map} estimator in case of finite-dimensional $\Xc = \R^d$ is the following.
\begin{lemma} In the setting of  \cref{thm:hellinger_stability}, assume that $\Xc$ is a subset of a finite-dimensional space, that $\Prob_X$ has a density w.r.t. the Lebesgue measure that we denote by $p_X$, that $p_X$ is continuous, that $p_X (x_n) \rightarrow 0$ for $\|x_n\|\rightarrow \infty$ and that $(x, y) \mapsto L(y|X=x)$ is (jointly) continuous and bounded. Then, the MAP estimator defined as
	\[ \hat{x}_{\map}(y):= \argmax _{x \in \Xc} L(y|X=x)p_X(x)
	\]
	exists and depends continuously on $y$ in the sense that 
	\[ \hat{x}_{\map}(y_n) \rightarrow \hat{x}_{\map}(\hat y)
	\] for any sequence $(y_n)_n$ that converges to $\hat y$ such that none of the functions $x \mapsto L(y_n|X=x)p_X(x)$ and $x \mapsto L(\hat y|X=x)p_X(x)$ is identically zero.
\end{lemma}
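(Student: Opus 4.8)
The plan is to run the classical ``direct method'', split into (i) existence of a maximizer for each fixed admissible $y$ and (ii) stability via a subsequence-and-limit-point argument. The coercivity-type hypothesis $p_X(x_n)\to 0$ as $\|x_n\|\to\infty$ together with boundedness of $L$ will keep the maximizers inside a common compact set, and joint continuity of $(x,y)\mapsto L(y|X=x)$ will let me pass to the limit. Throughout I write $h_y(x):=L(y|X=x)\,p_X(x)$, which is nonnegative and, under the assumptions, continuous in $x$ for fixed $y$ and in fact jointly continuous in $(x,y)$ (a product of the jointly continuous $L$ and the continuous $p_X$, the latter not depending on $y$). I would also record at the outset that, since $(\Xc,\md_\Xc)$ is complete and is a metric subspace of a finite-dimensional space, $\Xc$ is closed in that space, so closed and bounded subsets of $\Xc$ are compact; this is what makes the direct method run.

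\textbf{Step 1 (existence).} Fix $y$ with $h_y\not\equiv 0$, and let $C$ bound $L$, so $0\le h_y(x)\le C\,p_X(x)$. Put $M:=\sup_{x\in\Xc}h_y(x)$, which is finite (as $p_X$ is bounded) and positive (as $h_y\not\equiv 0$). Pick $x_n$ with $h_y(x_n)\to M$. Since $p_X(x)\to 0$ as $\|x\|\to\infty$, there is $R>0$ with $C\,p_X(x)<M/2$ for $\|x\|>R$; hence $x_n$ lies eventually in the compact set $K:=\{x\in\Xc:\|x\|\le R\}$, a subsequence converges to some $x^\ast\in\Xc$, and continuity of $h_y$ gives $h_y(x^\ast)=M$. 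Thus $\hat x_{\map}(y)$ is nonempty.

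\textbf{Step 2 (stability).} Let $y_n\to\hat y$ with neither $h_{y_n}$ nor $h_{\hat y}$ identically zero, and pick $x_n\in\argmax_x h_{y_n}(x)$, writing $M_n:=h_{y_n}(x_n)=\max_x h_{y_n}(x)$ and $M:=\max_x h_{\hat y}(x)$. First I would show $\liminf_n M_n>0$: choosing any $x_0$ with $h_{\hat y}(x_0)>0$, joint continuity of $L$ gives $h_{y_n}(x_0)\to h_{\hat y}(x_0)$, so $M_n\ge h_{y_n}(x_0)\ge h_{\hat y}(x_0)/2>0$ for large $n$. Combining $M_n\ge\delta>0$ with $h_{y_n}(x)\le C\,p_X(x)$ and the decay of $p_X$ exactly as in Step 1 shows $(x_n)$ is eventually contained in a fixed compact $K\subseteq\Xc$. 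Take any convergent subsequence $x_{n_k}\to x^\ast\in K$. On one hand $(x_{n_k},y_{n_k})\to(x^\ast,\hat y)$ and joint continuity give $M_{n_k}=h_{y_{n_k}}(x_{n_k})\to h_{\hat y}(x^\ast)$; on the other hand, if $\hat x$ is any maximizer of $h_{\hat y}$ (it exists by Step 1), then $M_{n_k}\ge h_{y_{n_k}}(\hat x)\to h_{\hat y}(\hat x)=M$, so $h_{\hat y}(x^\ast)\ge M$, and since trivially $h_{\hat y}(x^\ast)\le M$ we conclude $x^\ast\in\argmax h_{\hat y}$. Hence every subsequential limit of the bounded sequence $(x_n)$ is a maximizer of $h_{\hat y}$; when this maximizer is unique, $(x_n)$ converges to it, which is precisely $\hat x_{\map}(y_n)\to\hat x_{\map}(\hat y)$.

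\textbf{Main obstacle.} The only genuine difficulty is preventing the maximizers from escaping to infinity along $y_n\to\hat y$: this is where the non-triviality hypothesis on $h_{\hat y}$ and the pairing ``$L$ bounded'' with ``$p_X$ vanishes at infinity'' are essential, and it must be made quantitative uniformly in $n$ via the lower bound $\liminf_n M_n>0$. A secondary point to be explicit about is that $\argmax$ is a priori a \emph{set}, so the assertion should be read either under an implicit uniqueness assumption on $\argmax h_{\hat y}$, or—more robustly—as the claim that every limit point of maximizers of $h_{y_n}$ is a maximizer of $h_{\hat y}$; the argument above yields the latter unconditionally and the former as soon as the limiting maximizer is unique.
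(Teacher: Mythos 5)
Your proposal is correct and follows essentially the same route as the paper: the direct method for existence (using boundedness of $L$ and the decay of $p_X$ at infinity to confine a maximizing sequence to a compact set), and for stability a uniform positive lower bound on the maximal values along $y_n$, which again forces the maximizers into a fixed compact set so that every subsequential limit is a maximizer of $x\mapsto L(\hat y|X=x)p_X(x)$. Your closing remark about $\argmax$ being a set and the implicit uniqueness needed to speak of convergence of $\hat{x}_{\map}(y_n)$ is a fair observation that applies equally to the paper's own argument, but it does not change the substance of the proof.
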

\begin{proof}
In case $$x \mapsto L(\hat y|X=x)p_X(x)$$ is identically zero, any $x \in \Xc$ is a maximizer. In the other case, existence of the MAP estimator follows by the direct method: Take $(x_n)_n$ to be a maximizing sequence. By the assumptions on $ p_X$ and boundedness of $L(y|X=x)$ it is clear that $(x_n)_n$ must be bounded, such that there exists a subsequence that converges to some $\hat{x}$. The continuity of $L(\emptyarg|X=\emptyarg)$ and of $p_X$ then imply that $\hat{x}$ is a maximizer. 

	Stability follows in a very similar way: Given $(y_n)_n$ converging to $\hat{y}$ as in the statement of this lemma, take $(x_n)_n$ to be the corresponding MAP estimators. Take $\hat{x}$ to be a maximizer of $x \mapsto L(\hat y|X=x)p_X(x)$, such that in particular $L(\hat y|X=\hat x)p_X(\hat x)>0$. Then
	\begin{equation}
 L(y_n|X=x_n)p_X(x_n)  \geq L(y_n|X=\hat x)p_X(\hat x) \rightarrow L(\hat y|X=\hat x)p_X(\hat x) >0.
	\end{equation}
	Since $0<L(y_n|X=x_n) \rightarrow L(\hat y|X=\hat x)>0$ there must further be a constant $C>0$ such that $L(y_n|X=x_n)>C$ for all $n$, which, by dividing the above by $L(y_n|X=x_n)$, implies that $p_X(x_n)$ is bounded uniformly away from zero (for sufficiently large $n$). Hence $(x_n)_n$ must be bounded and, consequently, admits a convergent subsequence. The continuity of all involved quantities finally implies that $\hat x = \hat{x}_{\map}(\hat y)$.
  \end{proof}

  \begin{remark}
	  A crucial difference between the assumptions necessary for the stability of the posterior distribution and those necessary for the stability of the \gls{map} is that the latter requires \emph{joint} continuity of $(x,y) \mapsto L(y|X=x)$. While this assumption might be replaced by an appropriate semi-continuity assumption, it is still noteworthy that stability of the posterior distribution does not require any such assumption regarding the dependence of $L(y|X=x)$ on $x$ and, consequently, does not require any specific assumption on the forward model $\Fc$ other than measurability.
\end{remark}

\subsection{EBMs for Bayesian Inverse Problems}\label{ssec:ebm_ip}

Using machine learning for Bayesian inverse problems requires a practically realizable way to parametrize the prior distribution $\Prob_X$. Usually, this is done in the finite-dimensional setting and under the assumption that $\Prob_X$ admits a density $p_X$ with respect to the Lebesgue measure, which is then parametrized explicitly or implicitly. One type of parametrization of $\Prob_X$ is via energy based models:

\begin{definition}[Energy based models]\label{def:ebm}
Given $\Prob_X$ to be the distribution of the unknowns $x \in \Xc$ of interest, and $\mu_X$ to be a $\sigma$-finite Borel measure $\mu_X$ on $\Xc$ such that $\Prob_X$ is absolutely continuous w.r.t. $\mu_X$ with density $p_X$, an \gls{ebm} is a representation of $p_X$ via an energy functional functional $E:\Xc\rightarrow \R$ that is given as
\begin{equation}
	p_X(x)\coloneqq \frac{\exp{-E(x)}}{\int \exp{-E(x')} \wrt \mu_X(x')},
\end{equation}
where $x \mapsto \exp{-E(x)}$ is required to be measurable with $\int_{\Xc} \exp{-E(x)}\wrt \mu_X(x)<\infty$. We refer to the functional $E$ as the \emph{energy}, or \emph{potential}.
%
%
\end{definition}

\begin{remark}[The finite-dimensional case]
	When $\Xc = \R^d$, the most relevant setting is the one where $\mu_X$ is the Lebesgue measure on $\R^d$.
	In this case, we write 
	\begin{equation}
		p_X(x)\coloneqq \frac{\exp{-E(x)}}{\int \exp{-E(y)} \wrt y}.
		\label{eq:gibbs density finite}
	\end{equation}
\end{remark}
\begin{remark}[The infinite-dimensional case] In case $\Xc$ is infinite-dimensional, the reference measure $\mu_X$ is often the Gaussian measure, see \cite{stuart2010inverse} for a classical work in this context and \cite{stuart2010inverse,stuart2012besov,lassas09besov,hosseini2017well} for some works on different models.
\end{remark}
In most cases, especially in the context of learning based methods, the potential $E$ will depend on a set of parameters $\theta\in\Theta$ that lie in some suitable space $\Theta$. In such cases denote the energy as $E_\theta$, the density as \( p_\theta \), and the corresponding distribution as \( \Prob_\theta \).

In view of the general theory for Bayesian inverse problems as outlined in \cref{sec:bayesian_ip}, it is interesting to note that, beyond non-restrictive assumptions on the measurement noise, those results do not pose any additional assumption on the (energy-based) model of the prior at all. Nevertheless, it is important to note that the most restrictive assumption on the energy is already hidden in the definition of energy-based models themselves: It is integrability of $x \mapsto \exp{-E(x)} $ which already requires some coercivity of $x \mapsto E(x)$ in order to limit the tails of the distribution.

We conclude this section by providing some basic, finite-dimensional examples of energy based models; for an overview of different architectures that are used to parametrize EBMs in machine learning we refer to \cref{ssec:architecture} below.
\begin{example}[Gaussian Mixture]
With $\Xc = \mathbb{R}^n$, a classical example is a Gaussian mixture model of the form
\[ E(x) =  - \log \left( \sum_{i=1}^W w_i \exp{-\frac{(x-\mu_i) \cov_i^{-1}(x-\mu_i)}{2}} \right),
\]
where the covariance matrices $\cov_i \in\R^{n\times n}$ are positive definite and $w_i \geq 0$ for all $i$ with at least one $w_{i_0}>0$.
\end{example}

\begin{example}[Coercive energy functionals]
Again with $\Xc = \mathbb{R}^n$, any functional $E:\R^n \rightarrow [0,\infty)$ such that $E(x) \geq C\|x\|^q$ for all \( x \) with $C>0$, $\|\emptyarg\|$ a norm on $\R^d$ and $q>0$ defines a valid energy-based model, since $x \mapsto \exp{-E(x)}$ is integrable. A popular example in this context is the sparsity prior
\[ E(x) = \|Wx\|_1, \]
where $\|\emptyarg\|_1$ denotes the $\ell^1$ norm and $W:\R^n \rightarrow \R^m$, $m\geq n$ is an injective linear mapping such as a basis transform (\eg, Wavelets) or a frame.
\end{example}
\begin{example}[Non-coercive energy functionals] In case $\Xc = \R^n$ and 
\[ E(x) = R(Wx) \]
with $W: \R^n \rightarrow \R^m $ linear and $R:\R^m \rightarrow [0,\infty)$ such that $R(x) \geq C\|x\|^q$ with $C,q>0$, $E$ constitutes a valid energy-based model (i.e., is integrable) if and only if $W$ has a trivial nullspace. 

However, many frequently used energies, such as the total variation \cite{rudin1992nonlinear}, higher-order extensions \cite{bredies2020higher} or convolutional-filter-based energy-based models \cite{roth2009fields} have a non-trivial nullspace, in particular vanish on constants or affine functions. From the perspective that $p(x) \simeq \exp{-E(x)}$ is a general prior for images, this seems also reasonable since the probability of $x \in \R^n$ being perceived as generic, valid image should not depend on constant shifts (possibly even affine transformation) of $x$.

In order to still ensure validity of the resulting energy models for a concrete application at hand, one option is to infer information on the projection of the unknown to the nullspace of $W$ (\eg, its mean) from the from modeling or data, and to penalize deviates from it as part of the prior. This would correspond to augmenting the energy-based models, \eg, via
\[ E(x) = R(Wx) + \|P_{\ker(W)}x - p_0 \|_2 ^2
\]
where $p_0$ is inferred from modeling or the measurement data and $P_{\ker(W)}$ is the orthogonal projection to $\ker(W)$. In this way, $E$ constitutes a well-defined energy-based model that is fine-tuned to the specific image characteristics of the problem.

An alternative that is valid in case that the forward model also vanishes on $\ker(W)$ is to realize the above-discussed Bayesian framework not on $\R^n$ but on the orthogonal complement of $\ker(W)$ in $\R^n$.
\end{example}

\section{Learning paradigms and architectures}\label{sec:learning}
In this section we survey prominent strategies for learning the parameters $\theta$ of a suitably parametrized energy $E_\theta$ from a finite sample of $X$.
Specifically, we assume access to a finite sample \( x_1, x_2, \dotsc, x_N \in \Xc \) that is drawn i.i.d.\ from an unknown distribution \( \Prob_X \) and that forms the empirical distribution
\begin{equation}
	\hat{\Prob}_X = \frac{1}{N} \sum_{i=1}^N \delta_{x_i}.
\end{equation}
The objective of learning is to choose \( \theta \) so that the model density induced by \( E_\theta \) approximates \( \Prob_X \) beyond the observed examples and, at the same time, supports downstream tasks such as the synthesis of unseen photorealistic images or the resolution of inverse problems.
Importantly, the empirical distribution \( \hat{P}_X \) is entirely unhelpful for the synthesis of unseen images or the resolution of inverse problems since it assigns probability mass only to the training samples.
The hope is that well-designed energy architectures---some of which we review in~\cref{ssec:architecture}---\enquote{generalize} meaningfully outside the training set.
In practice, this generalization is typically judged by performance on downstream tasks, for such as recovering unseen samples from \( \Prob_X \) from incomplete data through the resolution of an inverse problem.

Throughout this section we restrict attention to the practical setting of finite-dimensions and always have the Lebesgue measure as the reference measure.
\subsection{Divergence minimization}%
\label{sec:div_minimization}
In such a setup, a natural way to approach parameter identification is by minimizing some divergence measure between densities.
The most commonly employed divergence is the Kullback-Leibler divergence, primarily due to its connections with maximum-likelihood and maximum-entropy estimation.
This approach was notably utilized in seminal works~\cite{hinton_training_2002,roth2009fields,Zhu:1998} and the Kullback-Leibler divergence has been termed the \enquote{standard} divergence by Teh, Welling, Osindero, and Hinton~\cite{teh_energybased_2003}.
For two distributions \( \Prob_X \) and \( \Prob_Z \) that admit the densities \( p_X \) and \( p_Z\), respectively, the Kullback-Leibler divergence is formally defined as
\begin{equation}
	\KLDivergence{\Prob_X}{\Prob_Z} = \int_{\Xc} p_X(x)\log\frac{p_X(x)}{p_Z(x)}\,\mathrm{d}x,
\end{equation}
with the convention \( 0 \log\frac{0}{0} = 0 \).

The Kullback-Leibler divergence places relatively strong assumptions on the involved probability distributions, notably absolute continuity.
This presents a practical difficulty as the empirical distribution \( \hat{\Prob}_X = \bigl( \frac{1}{N} \sum_{i=1}^N \delta_{x_i} \bigr) \) is supported only on a Lebesgue-null set.
This issue is typically addressed by modifying the target density through a convolution with a Gaussian kernel
\begin{equation}
	g_{\sigma}(x) = \frac{1}{\sqrt{2\pi\sigma^2}}\exp{-\frac{\norm{x}^2}{2\sigma^2}}
\end{equation}
with variance \( \sigma^2 \).
The resulting smoothed empirical distribution \( \hat{\Prob}_X^\sigma = g_\sigma * \hat{\Prob}_X \) then admits the density
\begin{equation}
	p_X^\sigma = g_{\sigma} * \Biggl( \frac{1}{\NumData}\sum_{n=1}^{\NumData} \delta_{x_n} \Biggr) = \frac{1}{\NumData}\sum_{n=1}^{\NumData} g_{\sigma}(\argument - x_n)
	\label{eq:gauss convolution}
\end{equation}
with respect to the Lebesgue measure, which is supported on the whole space and infinitely often differentiable.
Since this will be the target density in learning, selecting an appropriate variance \( \sigma^2 \) is crucial:
On the one hand, it should be sufficiently large to ensure stable training.
On the other hand, it should be chosen as small as possible to prevent the loss of important structural features in the empirical distributions due to the excessive smoothing.
In practice, finding an appropriate variance usually necessitates hand-tuning.

The minimization of the Kullback-Leibler divergence is directly related to maximum-likelihood estimation.
Simple algebraic manipulations show that the learning objective associated with the Kullback-Leibler minimization is equivalent to the classical maximum-likelihood objective 
\begin{equation}
	\argmin_{\Parameters}\,\KLDivergence{\hat{\Prob}_X^\sigma}{\Prob_{\Parameters}} = \argmin_{\Parameters} \Expectation_{X \sim \hat{\Prob}_X^\sigma}\bigl[ -\log p_{\Parameters}(X) \bigr].
\end{equation}
Furthermore, substituting the Gibbs density from~\cref{eq:gibbs density finite} into this expression reveals the classical difference-of-expectation objective,
\begin{equation}\label{eq:ML_training}
	\argmin_{\Parameters} \bigl\{ \Expectation_{X \sim \hat{\Prob}_X^\sigma}\bigl[ E_{\Parameters}(X) \bigr] - \Expectation_{X \sim \Prob_{\Parameters}}\bigl[ E_{\Parameters}(X) \bigr] \bigr\}.
\end{equation}
This derivation has been used in the context of parameter identification of densities at least since the eighties~\cite{ackley1985learning} and was termed the \emph{wake-sleep} algorithm~\cite{hinton1995wake,freund1991unsupervised} in the context of Boltzmann machines.

The computation of the objective function in~\eqref{eq:ML_training} necessitates the evaluation of two expectations, neither of which is typically tractable in closed form for practically relevant cases such as, for instance, when the energy is a shallow or deep neural network.
In such cases, both expectations are approximated with Monte Carlo integration, and the optimization problem is resolved by employing stochastic optimizers.
However, the computational cost of obtaining some fixed number of samples from each distribution differs significantly.
On the one hand, sampling from the smoothed empirical distribution \( \hat{\Prob}_X^\sigma \) is relatively inexpensive, as it can be done by ancestral sampling: drawing \( \widetilde{X} \sim \hat{\Prob}_X \) (\textit{i.e.}, sampling from the dataset uniformly) and then setting \( X = \widetilde{X} + \sigma N \) where \( N \sim \Normal(0, \Identity) \).
On the other hand, sampling from the model distribution is computationally intensive and generally requires \gls{mcmc} methods.
In the high-dimensional setting that is encountered in imaging problems, gradient-based methods such as Langevin or Hamiltonian dynamics are popular for this purpose.
Prominent examples include Roth and Black's \gls{foe} model~\cite{roth2009fields} which utilized Hamiltonian Monte Carlo, and more recent works by Du and Mordatch~\cite{du_implicit_2019} and Nijkamp et al.~\cite{nijkamp_shortrun_2019} which popularized the \gls{ula} in generative modeling.\footnote{%
	In principle, all models discussed in this chapter are \enquote{generative models} in the sense that they model a distribution and one can sample from this distribution.
	However, models can be better suited for the purpose of generating photo-realistic images depending on the architecture.
	The generative models in these papers are capable of synthesizing photorealistic images with sizes of up to \numproduct{512x512} pixels.
}
The practical utility of generative models as priors in inverse problems in imaging was further demonstrated recently in~\cite{zach2021computed,zach2023stable}.
These and other sampling methods that are popular in this context are discussed in more detail later in~\cref{sec:sampling}.

While the sampling of the model density via such standard techniques is theoretically sound, it is often considered computationally prohibitive in practice.
This has motivated research into energy functions with special structures that admit efficient sampling methods.
In the context of Boltzmann machines, \emph{restricted} Boltzmann machines~\cite{rumelhart1986parallel} address this by restricting the architecture of the learned functions.
In~\cite{schmidt_generative_2010}, Schmidt, Gao, and Roth propose a classical \gls{foe}-type energy function that utilizes Gaussian scale mixtures and enables the use of efficient auxiliary variable Gibbs methods for the sampling.
For a similar energy, Weiss and Freeman~\cite{weiss_model_2007} imposed constraints such that the normalization constant of the \gls{ebm} becomes independent of the parameters.
This restricts the learnable parameters to a rotation of some predefined filters with and the weights in the Gaussian scale mixture which they learn with an efficient expectation-maximization algorithm.

Despite these advances, computational challenges persist for general, unstructured energies.
Hinton's \emph{contrastive divergence}~\cite{hinton_training_2002} partly addresses this by abandoning exact sampling of the model distribution and instead relying on short-chain \gls{mcmc} initialized from empirical samples.

Another strategy that is tightly linked to maximum-likelihood learning is \emph{adversarial regularization}~\cite{lunz2018adversarial,MukDit2021,shumaylov2024weakly,zhang2025learning};
We debated whether this strategy is best placed here or in the next section but decided to put it here due to the similarity of the objective function that eventually arises.
However, this approach is fundamentally different from the other approaches discussed in this section in that it does not admit a straightforward probabilistic interpretation.
The adversarial regularization approach can be  motivated by the following interpretation of the maximum-likelihood objective expressed in~\eqref{eq:ML_training}:
The first term in~\eqref{eq:ML_training} encourages parameter configurations \( \theta \) such that the energy \( E_\theta \) assigns low values to sampled from \( p_X \) or, equivalently, high likelihood under the model.
However, this term alone is not sufficient for the training of an \gls{ebm} since it has many trivial solutions such as setting \( E_\theta \equiv -\infty \).
Consequently, the second term in~\eqref{eq:ML_training} is crucial.
This term, which evaluates the energy on samples drawn from the model distribution \( p_\theta \), ensures that trivial solutions are avoided by encouraging high energies in regions not supported by the data.
Adversarial regularization builds upon this insight but generalizes it by substituting the model distribution in the second term with a suitable \emph{adversarial} surrogate distribution \( \Prob_A \).
The resulting optimization problem becomes
\begin{equation}
	\argmin_{\theta} \bigl\{ \Expectation_{X \sim \hat{\Prob}_X^\sigma}\bigl[ E_{\Parameters}(X) \bigr] - \Expectation_{X \sim \Prob_A}\bigl[ E_{\Parameters}(X) \bigr] \bigr\}.
\end{equation}
Typically, the adversarial distribution \( \Prob_A \) is chosen specifically to represent undesirable artifacts that appear in downstream inference tasks.
For example, consider an inverse problem where measurements \( y \) are related to the unknown image via \( y = Fx + n \) where \( F \) is some linear forward operator.
A suitable adversarial distribution \( \Prob_A \) in this context might be the distribution of artifact-ridden reconstructions given by \( (F^\dagger)_{\#} \Prob_Y \), where \( F^\dagger \) is some (possibly regularized) pseudo-inverse of \( F \) and \( \Prob_Y \) is the distribution of the measurements that is obtained through the measurement model.
By training the energy to assign low values to clean data samples and high values to artifact-ridden samples, subsequent inference based on variational formulations that leverage \( E_\theta \) tends to yield artifact-free reconstructions.
Though the objective is similar, adversarial regularization does not generally admit a straightforward probabilistic interpretation.
Moreover, as the training explicitly incorporates the forward operator, the learned model becomes task-specific which breaks the clear Bayesian separation between likelihood and prior.

Another training method that has gained significant popularity in recent years, partly due to its explicit alignment with Bayesian principle, is based on minimizing the Fisher divergence.
For two distributions \( \Prob_X \) and \( \Prob_Z \) that admit the continuously differentiable densities \( p_X \) and \( p_Z \) with respect to the Lebesgue measure, respectively, the Fisher divergence is formally defined as
\begin{equation}
		\FisherDivergence{\Prob_X}{\Prob_Z} = \Expectation_{X\sim \Prob_X}\bigl[ \norm{\nabla \log p_X(X) - \nabla \log p_Z(X)}^2 \bigr].
\end{equation}

Since the formal definition requires differentiable densities, like with the Kullback-Leibler divergence, the empirical distribution \( \hat{\Prob}_X \) is unsuited and we define the target distribution as \( \hat{\Prob}_X^\sigma \) that has the density \( p_X^\sigma \), given in~\eqref{eq:gauss convolution}, with respect to the Lebesgue measure.
We will see later that this choice of smoothing is actually critical for the derivation of a practical and efficient learning objective.
Learning a parametrized energy model then involves finding parameters \( \Parameters \) that minimize the Fisher divergence between the target distribution and the model distribution.
Explicitly, the training objective becomes
\begin{equation}
	\argmin_{\Parameters \in \ParameterSpace}\,\FisherDivergence{\hat{\Prob}_X^\sigma}{\Prob_{\Parameters}} = \argmin_{\Parameters \in \ParameterSpace} \Expectation_{X\sim \hat{\Prob}_X^\sigma}\bigl[ \norm{\nabla \log p_X^\sigma(X) + \nabla E_{\Parameters}(X)}^2 \bigr].
	\label{eq:fisher minimization}
\end{equation}
Thus, the objective aims to match the gradient of the log-density---also called the (Stein) score---of the model to that of the smoothed data density.

A direct evaluation of the objective in~\eqref{eq:fisher minimization} is computationally expensive since the evaluation of \( \nabla \log p_X^\sigma = \nabla \bigl( x \mapsto \log \bigl( \bigl( g_{\sigma} * \bigl( \frac{1}{\NumData}\sum_{n=1}^{\NumData} \delta_{x_n} \bigr) \bigr) (x) \bigr) \bigr) \) at any point involves computations on the whole dataset.
A critical observation due to Vincent~\cite{vincent2011connection} is that~\eqref{eq:fisher minimization} can be reformulated equivalently in a substantially simpler manner.
Specifically,~\eqref{eq:fisher minimization} is equivalent to solving
\begin{equation}
	\argmin_{\Parameters \in \ParameterSpace} \Expectation_{X \sim \hat{P}_X, N \sim \Normal(0, \Identity)}\bigl[ \norm{\sigma \nabla E_{\Parameters}(X + \sigma N) - N}^2 \bigr].
	\label{eq:denoising score matching}
\end{equation}
This reformulation reveals an intuitive interpretation: the energy functional is trained so that its scaled gradient acts as an \gls{mse}-optimal one-step denoiser, and the objective is typically referred to as denoising score-matching.

To better understand this, denote \( Z = X + \sigma N \).
The expression within the norm in~\eqref{eq:denoising score matching} can be rewritten as 
\begin{equation}
	\sigma \nabla E_\Parameters (Z) - N = \tfrac{1}{\sigma}(X - Z + \sigma^2\nabla E_{\Parameters}(Z)).
\end{equation}
From this viewpoint, the learning objective seeks parameters such that, for every data point \( X \) sampled from the empirical distribution and corresponding noisy observation \( Z = X + \sigma N \), the estimator
\begin{equation}
	Z - \sigma^2 \nabla E_\theta(Z)
\end{equation}
approximates the \gls{mmse} estimator of the original data point \( X \).
Conversely, the classical result known as Tweedie's formula shows that given the density of \( Y \), one can directly construct the \gls{mmse} estimator of \( X \) via the relationship
\begin{equation}
	Z \mapsto Z + \sigma^2\nabla \log p_Z(Z).
\end{equation}
This deep connection between denoising and density estimation underpins some of today's most effective approaches to image reconstruction, prominently plug-and-play methods, and image generation algorithms such as diffusion models~\cite{regev2021potential,hurault2022gradient}.

\subsection{Alternative learning methods}%
\label{ssec:bilevel}
In contrast to the previous section, where learning was formulated explicitly as a divergence minimization problem that aligns closely with the Bayesian paradigm that we consider in this work, the approaches described in this section diverge from strict Bayesian interpretations.
Instead, the purpose of this section is purely \emph{operational}: we present several training procedures that are designed to give a useful $E_\theta$ that can later be exploited as a prior in the resolution of inverse problems.
Importantly, the methods discussed in this section should \emph{not} be viewed as attempts to faithfully learn a true underlying density; rather, they constitute pragmatic procedures for obtaining practically useful energies.

Bilevel approaches, pioneered by Samuel~\cite{samuel2009learning} and Tappen~\cite{tappen2007variational}, tackle the problem of identifying optimal parameters of an energy by formulating a nested optimization problem.
The lower-level problem is the standard variational formulation that aims to recover some clean image from incomplete measurements, where the energy that we aim to learn serves as a regularizer.
The resolution of this lower-level problem yields a parameter-dependent estimate.
The upper-level problem is to minimize some loss function (\eg, the norm of the difference between the estimate and the clean image) that involves this parameter-dependent estimate with respect to the parameters of the energy.
More rigorously, such methods are formulated as the optimization problem of finding
\begin{equation}
	\begin{aligned}
		&\argmin_{\Parameters \in \ParameterSpace} \frac{1}{\NumData} \sum_{n=1}^{\NumData} \UpperLoss(\Optimal{\Image}_n(\Parameters), \Image_n) \\
		&\text{where}\ \Optimal{\Image}_n(\theta) \in \argmin_{\Image \in \Xc} \{ J(x, \theta)=\Dc_{y_n}(\Fc(x)) + \lambda E_\theta(x) \}\ \text{for}\ n = 1, 2, \dotsc, \NumData,
	\end{aligned}
	\label{eq:bilevel original}
\end{equation}
where \( y_1, y_2,\dotsc, y_\NumData \) are the data that are typically constructed by utilizing the forward model \( \Fc \) and some pollution operator, and \( \Dc \) is an appropriate discrepancy measure.

An interesting connection between bilevel approaches (or discriminative approaches in general) and Bayesian approaches is that bilevel approaches explicitly seek energies whose \gls{map} estimators approximate the Bayes estimate with respect to the loss $\UpperLoss$, \ie, after training for any data $y$ we have
\[
	\argmin_{\Image \in \Xc} \{\Dc_{y}(\Fc(x)) + \lambda E_\theta(x) \} \approx \argmin_{\Image \in \Xc} \E_{X\sim\Prob_{X}(\emptyarg|Y=y)}\left[ \UpperLoss(x,X)\right]
\]
where $\Prob_{X}(\emptyarg|Y=y)$ is the posterior distribution of the inverse problem of interest after observing the data $y$.
As a practically relevant example, when \( \UpperLoss(x, y) = \| y - x \|^2/2 \), the solution of the lower-level problem should approximate the \gls{mse}-optimal Bayes estimator.
While the resulting energy functional is thus optimal in a MAP sense relative to a chosen discrepancy measure, it is important to recognize that this optimality is \emph{task-specific} and does not necessarily reflect a correct probabilistic interpretation of the learned model.

Bilevel learning necessitates the resolution of the lower-level problem---typically to very high precision (for a visualization of the resulting test PSNR depending on the precision of the resolution of the lower-level problem, see \cite{chen2013revisiting})---as well as the computation of the gradient of the upper-level loss with respect to the parameters.
When the lower-level problem is sufficiently smooth, researchers typically use accelerated first-order methods for its resolution, such as Nesterov's accelerated gradient algorithm, which is also utilized in the numerical sections of this work and given in~\cref{alg:apgd}.
The computation of the gradient of the upper-level loss function with respect to the parameters can be tackled by various approaches that differ in their precision and memory footprint.
In the following, we only provide a narrow discussion of one of the existing methods which, in particular, covers the most frequent applications which include the ones in this paper.
A broader perspective that also includes discussions on potentially nonsmooth lower-level problems is provided in~\cite{ji2021bilevel,zucchet_bilevel_2022,bolte_differentiating_2024}.
The following derivation is provided for \( N = 1 \) and, therefore, we omit the subscript that specifies the data index.
The result can be readily generalized to \( N > 1 \) through appropriate summation of the quantities.

A popular method for the resolution of bilevel problems arises through the differentiation of the optimality condition
\begin{equation}
	\nabla_x J(x^*(\theta),\theta) = 0
	\label{eq:bilevel lower level optimality}
\end{equation}
of the lower-level problem in~\cref{eq:bilevel original}.
When \( J \) is twice continuously differentiable and its Hessian \( H \) is invertible at \( (x^*(\theta), \theta) \), then the implicit function theorem implies that \( x^*(\theta) \) is locally unique.
Differentiating~\eqref{eq:bilevel lower level optimality} with respect to \( \theta \) yields
\begin{equation}
	0 = H(\theta) (x^*(\theta))^\prime  + \nabla_\theta \nabla_x J(x^*(\theta), \theta)
\end{equation}
which enables the computation of \( (x^*(\theta))^\prime \) as
\begin{equation}
	(x^*(\theta))^\prime = (H(\theta))^{-1} \nabla_\theta \nabla_x J(x^*(\theta), \theta).
\end{equation}
Plugging this into the gradient of the upper-level problem, that can be computed with the chain rule as \( \nabla L(\theta) = ((x^*(\theta))^\prime)^\top \nabla_x L(x^*(\theta)) \) yields
\begin{equation}
	\nabla L(\theta) = (\nabla_\theta \nabla_x J(x^*(\theta), \theta))^\top (H(\theta))^{-1} \nabla_x L(x^*(\theta)).
\end{equation}
This strategy based on the implicit function theorem has the benefit that---in contrast to, \eg, unrolling approaches---the complexity of the gradient computation is independent of the algorithm (and, in particular, the number of iterations of that algorithm) that was used to resolve the lower-level problem.
A drawback of this strategy is that the Hessian-vector products can be costly, in particular when the energy is complex and the computation of the involved quantities by hand is infeasible and one has to resort to automatic differentiation for that computation.

Many other operational strategies have been explored in the literature.
Many of these also deviate from the pure Bayesian setting by directly optimizing a performance-oriented criterion.
Notably, algorithm unrolling approaches such as~\cite{Hammernik2017,mehmood2020automatic,ochs2016techniques,MonLiEld2021} explicitly differentiate through iterative solvers of variational problems, thereby directly aligning the learning objective with practical reconstruction quality.
Related approaches~\cite{kobler2022total,effland2020variational} reframe image reconstruction as an optimal control problem and learn an optimal stopping time in the continuous gradient flow associated with the variational problem.
These operational strategies underscore the practical value of performance-driven learning, even as they deliberately forego a strict Bayesian interpretation.

\subsection{Architectures of EBMs}
\label{ssec:architecture}
The training methods discussed in the previous section can, in principle, be be applied to any function \( E_\theta \) that maps from \( \Xc =\R^n \) to the real line.
In recent years, several specific architectures with varying characteristics have emerged.

A common starting point for many energy-based architectures is the anisotropic total variation defined by
\begin{equation}
	x \mapsto \sum_{i=1}^n\sum_{j=1}^2 | (D_jx)_i |
\end{equation}
where \( D_1, D_2 : \R^n \to \R^{n} \) are finite-difference operators in the first and second spatial dimensions.
Intuitively, this energy measures the sum of the absolute values of image gradients across all pixels in the image.
An extremely common generalization of the anisotropic total variation energy is the \gls{foe} energy
\begin{equation}
	x \mapsto \sum_{i=1}^n \sum_{j=1}^k \phi_j\bigl( (K_j x)_i \bigr)
	\label{eq:foe}
\end{equation}
proposed in~\cite{roth2009fields}, where \( K_1, K_2, \dotsc, K_k \in \R^{n \times n} \) are convolution matrices and \( \phi_1, \phi_2, \dotsc, \phi_k : \R \to \R \) are the associated nonlinear \emph{potentials}.

A large body of work deals with this model and works vary mostly in the training routine and the parametrization of the potentials.
Early works, including the original original \gls{foe} publication~\cite{roth2009fields}, drew inspiration from classical regularization theory~\cite{Zhu:1997b} and employed rigid parametric potentials such as those derived from negative-log Gaussian, generalized Laplacian, or Student-t densities.
These potentials typically feature a global minimum at zero and increase monotonically away from it.

However, from the Bayesian perspective adopted in this work---where the energy should model the negative log-prior---such parametric forms are too restrictive.
This limitation was highlighted by Zhu and Mumford~\cite{Zhu:1997b}, who proposed piecewise-constant potentials with arbitrary shapes to capture richer statistics of natural images.
They recover potentials that sometimes feature a \emph{maximum} at zero and decrease monotonically away from zero~\cite[fig. 9]{Zhu:1997b}.
However, piecewise constant potentials hinder the application of first-order optimization-based image reconstruction approaches and sampling methods that rely on gradient information.

In~\cite{schmidt_generative_2010} Schmidt, Gao, and Roth revisited this model and identified that it could not reproduce the marginal statistics of the responses of gradient filters, despite the maximum-likelihood training.
They attribute this largely to the inefficient \gls{mcmc} sampler that was used in the training of the original model.
To remedy this, they propose to use potentials derived from Gaussian scale mixtures, that enable efficient sampling via an auxiliary variable Gibbs sampler.
Although this improved results, they still fell short of reproducing marginal statistics of random filters, likely due to the restrictive choice of Gaussian scale mixtures.
While more general than those derived from Student-t distributions (indeed, Student-t distributions can be represented by Gaussian scale mixtures with infinitely many components), the potentials derived from Gaussian scale mixtures are still monotonically increasing away from zero.
In unrelated work, Heess, Williams, and Hinton~\cite{heess_learning_2009} also identified the limitation of the choice of the potentials in the original \gls{foe} model.
The authors propose to replace the unimodal potentials derived from the Student-t distribution with slightly more general bimodal potentials, and showed improved performance on texture synthesis tasks.

Similar observations were made in the context of data-driven discriminative approaches:
For example, Chen and Pock's trainable nonlinear reaction diffusion model~\cite{chen_trainable_2017}, which falls under the class of learned optimization schemes, employs a general parametrization of the potentials using Gaussian basis functions.
They recover more complex potentials, such as negative Mexican-hat-type or double-well-type potentials~\cite[fig. 5]{chen_trainable_2017} with multiple local minima that do not contain zero.

In the context of learning the parameters of an energy through optimal control, the authors of~\cite{effland2020variational} parameterize the potentials of an \gls{foe} with B-splines and recover complex potentials with multiple local minima and often times maxima at zero.
In the context of diffusion models, the authors of~\cite{Zach2024} leverage potentials parametrized by negative-log Gaussian mixture models which enables them to implement the Gaussian smoothing of the prior by adapting the variances of the one-dimensional Gaussian mixture models.
The obtained potentials also differ significantly to the standard ones.

Beyond improvements in potential parametrizations, recent architectures have generalized the basic structure of the \gls{foe} model by integrating deeper, learned feature representations.
Kobler et al.~\cite{kobler2022total} introduced an energy termed the \emph{total deep variation}, defined as
\begin{equation}
	x \mapsto \sum_{i=1}^n \phi\bigl( (\mathcal{N}(x))_i \bigr)
\end{equation}
where \( \mathcal{N} : \R^n \to \R^n \) is a neural network.
This model preserved the pixel-wise summation structure but replaces linear convolutional filters with \emph{deep neural features}.
Coupled with a learning strategy based on optimal control, the model achieved state-of-the-art results on various imaging tasks.

More recently, researchers have abandoned pixel-wise energies summations entirely.
Nijkamp et al.~\cite{nijkamp_shortrun_2019,nijkamp_anatomy_2019} and Du and Mordatch~\cite{du_implicit_2019} advocated architectures based on fully convolutional neural networks defined as
\begin{equation} 
	E_\theta = L_l \circ L_{l-\num{1}} \circ \ldots \circ L_{\num{2}} \circ L_{\num{1}},
	\label{eq:deep neural regularizer architecture}
\end{equation}
where each layer \( L_{\num{1}}, L_{\num{2}}, \dotsc, L_l \) has the form
\begin{equation}
	\begin{aligned}
		L_i : x &\mapsto \widetilde{\Phi}_i(\widetilde{W}_i \Phi_i(W_i x)).
	\end{aligned}
\end{equation}
Here, \( W_1, W_2, \dotsc, W_l \) and \( \widetilde{W}_1, \widetilde{W}_2, \dotsc, \widetilde{W}_l \) are multi-channel convolution operators and \( \widetilde{W}_1, \widetilde{W}_2, \dotsc, \widetilde{W}_l \) operate with a stride that is greater than one and thereby reduce the size of the feature maps in deeper layers, eventually mapping the image to the scalar energy value thourhg a \( 1 \times 1 \) convolution with one output channel.
The functions \( \widetilde{\Phi}_i \) and \( \Phi_i \) apply some nonlinearity  point-wise, typically standard neural-network-activation function such as the rectified linear unit, its leaky variant, the softplus, or others.
In contrast to the \gls{foe} model, there typically are no learnable parameters associated with these nonlinearities.

Such architectures are also loosely linked to the \gls{foe} architecture:
Instead of replacing the linear features that are fed into the potential with deep neural features---like in the total deep variation---such architectures can be though of as replacing the linear pixel-wise sum with a \emph{deep neural sum} \( \mathcal{S} \), since the first layer \( L_1 \) has the familiar structure of applying a nonlinearity to linear features:
\begin{equation}
	x \mapsto \mathcal{S}\bigl( \Phi_1(W_1 x) \bigr),
\end{equation}
where \( \mathcal{S} = L_l \circ \ldots \circ L_1 \circ \widetilde{\Phi}_1 \circ \widetilde{W}_1 \).
These architectures also have a strong relationship to U-Net type architectures, which are an extremely popular choice for the direct modeling of the gradient of an energy.
Indeed, the gradient of such architectures resembles a U-Net, see~\cite[fig. 5.6]{zach_thesis} for an illustration.
A similar architecture was used in the context of inverse problems by the authors of~\cite{lunz2018adversarial} in adversarial learning and by the authors of~\cite{zach2023stable,zach2021computed} for the purpose of learning an energy that is suitable for medical image reconstruction.

This trend toward increasingly flexible architectures underscores a broader principle:
\emph{any} function from the image domain to real numbers can serve as an energy.
Indeed, modern architectures encompass various deep neural networks including pixel-wise output sums, U-Net structures, and transformer-based models.
\section{Sampling from EBMs}\label{sec:sampling}
The task of sampling from distributions that are represented as \glspl{ebm} arises at many occasions in the context of Bayesian inverse problems. As an example, the estimation of various quantities derived from the posterior, such as its expectation or its marginal variance, via Monte-Carlo integration (see \cref{sec:bayesian_ip}, or~\cite{zach2023stable,zach2021computed,narnhofer2024posterior}) necessitates sampling. When the energy is learned from data, sampling is often already an integral part of model training (see \cref{sec:div_minimization}). Consequently, efficient and flexible sampling algorithms are of the utmost practical relevance.

For the remainder of this section we again restrict ourselves to the finite-dimensional setting. We assume that we are given a distribution $\pi$ on $\Bc(\R^d)$ that admits a density $p$ with respect to the Lebesgue measure that is modeled through the energy \( E \), \ie,
\begin{equation}\label{eq:sampling_density}
	\frac{\dd \pi}{\dd x}(x) = p(x) = \frac{\exp{-E(x)}}{\int \exp{-E(\tilde{x})}\wrt \tilde{x}}.
\end{equation}
Further, we assume that the energy $E:\R^d\rightarrow\R$ is continuously differentiable.
This setting covers sampling from priors as well as posteriors simply by setting $\pi = \Prob_X$ in the former and $\pi = \Prob_X(\emptyarg|Y=y)$ in the latter case. We will also sometimes add subscripts to the probability distributions, \eg, $\pi_X$, $\pi_V$, $\pi_Z$, if necessary to distinguish distributions of different random variables. The corresponding Lebesgue densities will be denoted as $p_X$, $p_V$, and $p_Z$, respectively.

Sampling algorithms may be separated into two distinct categories: (i) \emph{direct samplers} which simply yield a random variable $X\sim \pi$ of the target distribution and (ii) \emph{Markov chain} based samplers, where a \gls{mc} $(X_k)_k$ is constructed whose stationary measure is the target $\pi$ or an approximation thereof so that an approximate sample of $\pi$ may be obtained by simulating $(X_k)_k$ for sufficiently many iterations. 
For the remainder of this section we will first briefly discuss several direct sampling techniques. Due to the complexity of the energy $E(x)$ in imaging applications, however, direct sampling is often not possible and, therefore, \gls{mc} based methods are significantly more popular. The main part of the section will consequently be devoted to the most important \gls{mc} samplers in the context of \glspl{ebm} and Bayesian imaging. \Cref{sec:direct_sampling} may be skipped in case the reader is interested in \gls{mc} based methods.

\subsection{Direct sampling techniques}\label{sec:direct_sampling}
\subsubsection{Accept-reject sampling}
Accept-reject sampling~\cite[Section 2.3]{robert1999monte} can be understood as a specific implementation of the fundamental theorem of simulation, which states that simulating a random variable $X\sim\pi_X$ where $\pi_X$ admits a density $p_X$ is equivalent to simulating a pair $(X,U)$ which is distributed uniformly on the set $\{(x,u)\;|\; 0\leq u\leq p_X(x)\}$.
\begin{theorem}[Fundamental theorem of simulation]\label{thm:fundamental_sampling}
	Let $\pi_X(\dd x) = p_X(x)\dd x$ be a distribution on $\R^d$ and $(X,U)$ be uniformly distributed on $\{(x,u)\;|\; 0\leq u\leq p_X(x)\}$, then $X\sim \pi_X$.
\end{theorem}
\begin{proof}
	First note that 
	\[
		\int_{\R^d}\int_0^{p(x)} \dd u\dd x=1
	\]
	so that the density of $(X,U)$ with respect to the Lebesgue measure is simply the indicator function on $\{(x,u)\;|\; 0\leq u\leq p(x)(x)\}$. Thus, we can conclude
	\begin{equation}
			\Prob\left[ X \in A \right] = \int_{\R^d}\int_0^{p(x)} \1_{A}(x)\dd u \dd x = \int_{\R^d} \1_{A}(x)p(x) \dd x. \qedhere
	\end{equation}
\end{proof}
The difficulty in utilizing the fundamental theorem of simulation for practical purposes of course lies in the task of generating a uniform sample on $\{(x,u)\;|\; 0\leq u\leq p_X(x)\}$. A possible approach would be to sample $X\sim \pi_X$ and then $U|X\sim \Uc([0,p_X(X)])$\footnote{$\Uc(A)$ denotes the uniform distribution on the set $A$}. However, this would, of course, be pointless as our goal is precisely to sample $X\sim \pi_X$ which would then be achieved already. In other words, the fundamental theorem of simulation only constitutes an advantage if we can find a way of sampling from $\Uc(\{(x,u)\;|\; 0\leq u\leq p_X(x)\})$ which is easier than the original problem of sampling from $\pi_X$. 

Accept-reject sampling may achieve this task in some cases. For accept-reject sampling it is sufficient to have access to an unnormalized version of the density $p_X$, \ie, to a function $\tilde{p}_X$ such that
\[
	p_X(x) = \frac{1}{Z_X} \tilde{p}_X(x)
\]
with $Z_X=\int \tilde{p}_X(z)\dd z$. Let us assume we have access to a second probability distribution $\pi_Z(\dd z) = p_Z(z)\dd z$ which (a) is easier to sample from than $\pi_X$ and (b) satisfies that 
\begin{equation}\label{eq:cond_AR}
	\sup_x \frac{\tilde{p}_X(x)}{p_Z(x)} \leq M <\infty.
\end{equation}
In this case we may simulate a sample from $\Uc(\{(x,u)\;|\; 0\leq u\leq p_X(x)\})$ only through sampling from $\pi_Z$ which in turn yields a sample from $\pi_X$. The procedure is summarized in \cref{algo:AR} and we provide a formal proof of its consistency in \cref{thm:AR}. Within rejection sampling, the challenge is therefore shifted to finding a easy-to-sample distribution $\pi_Z$ which satisfies \eqref{eq:cond_AR} together with the respective bound $M$.
\begin{algorithm}[t]
	\begin{algorithmic}[1]
		\Require Distribution $\pi_Z$ and $M>0$ satisfying \eqref{eq:cond_AR}.
		\State Draw $Z \sim \pi_Z$\label{AR_line}
		\State Draw $U\sim \Uc([0,1])$
		\If{$U\leq \tfrac{\tilde{p}_X(Z)}{M p_Z(Z)}$}
			\State $X=Z$
		\Else
			\State Go to line \ref{AR_line}.
		\EndIf
		\State \Return $X$
	\end{algorithmic}
	\caption{Accept-reject sampling.}
	\label{algo:AR}
\end{algorithm}
\begin{theorem}\label{thm:AR}
	Assume that $p_X$ and $p_Z$ are both strictly positive and let $X$ be generated via the accept-reject sampler \eqref{algo:AR}. Then $X\sim\pi_X$.
\end{theorem}
\begin{proof}
	We can equivalently phrase the algorithm as sampling first $Z\sim \pi_Z$, then $U|Z \sim \Uc([0,M p_Z(Z)])$, and accepting $X=Z$ if and only if $U\leq \tilde{p}_X(Z)$. If we simply show now that $(X,U)\sim\Uc(\{(x,u)\;|\; 0\leq u\leq p_X(x)\})$ we can conclude using \cref{thm:fundamental_sampling}. First note that, since $\tilde{p}_X\leq M p_Z$
	\begin{equation}
		\begin{aligned}
			\Prob\left[ U\leq \tilde{p}_X(Z) \right] = \int_{\R^d} p_Z(z) \frac{1}{M p_Z(z)}\int_0^{M p_Z(z)}\1_{\tilde{p}_X(z)}(u)\dd u\dd z\\
			= \int_{\R^d} \frac{1}{M }\int_0^{\tilde{p}_X(z)}\dd u\dd z = \frac{Z_X}{M}.
		\end{aligned}
	\end{equation}
	We can conclude for any Borel measurable $A\subset \R^d$ and $B\subset \R$
	\begin{equation}
		\begin{aligned}
			\Prob\left[ (X,U)\in A \times B \right] = \Prob\left[ (Z,U)\in A \times B \;\middle|\; U\leq \tilde{p}_X(Z)\right] \\
			= \frac{\Prob\left[ (Z,U)\in A \times B,\; U\leq \tilde{p}_X(Z)\right]}{\Prob\left[ U\leq \tilde{p}_X(Z)\right]}\\
			= \frac{M}{Z_X} \int_{\R^d} \1_A(z) p_Z(z) \frac{1}{M p_Z(z)}\int_0^{M p_Z(z)} \1_B(u)\1_{[0,\tilde{p}_X(z)]}(u)\dd u\dd z\\
			= \frac{1}{Z_X} \int_{\R^d} \1_A(z) \int_0^{\tilde{p}_X(z)} \1_B(u)\dd u\dd z\\
			= \int_{\R^d} \1_A(z) \int_0^{p_X(z)} \1_B(\tilde{u})\dd \tilde{u}\dd z
		\end{aligned}
	\end{equation}
	where the last equality follows from the transformation $\tilde{u} = \tfrac{u}{Z_X}$.
\end{proof}

\subsubsection{Importance sampling}
Contrary to the name, \gls{is} is mostly not used for \emph{sampling} itself but instead as a method for estimating statistics of one distribution using a sample of a \emph{different} distribution. Similar to the accept-reject sampler, let us assume we have access to a second distribution $\pi_Z$ which is easier to sample from than $\pi_X$ and such that $\pi_X\ll\pi_Z$\footnote{recall, that $\ll$ denotes absolute continuity}. \Gls{is} relies on the following elementary observation that for any function $f$ such that the corresponding integral exists, we have
\begin{equation}\label{eq:is1}
	\begin{aligned}
		\E_{X\sim\pi_X}\left[ f(X) \right] = \int f(x) \dd\pi_X(x) = \int f(x) \frac{\dd \pi_X}{\dd \pi_Z}(x) \dd\pi_Z(x) \\
		= \E_{X\sim\pi_Z}\left[ f(X) \frac{\dd \pi_X}{\dd \pi_Z}(X) \right].
	\end{aligned}
\end{equation}
That is the expectation with respect to $\pi_X$ can be transformed to an expectation with respect to $\pi_Z$ by re-weighting the integrand $f$ with the corresponding Radon-Nikod\'ym derivative. 
In the following we assume for simplicity that both $\pi_X$ and $\pi_Z$ admit densities, so that the Radon-Nikod\'ym derivative reduces to the ratio $\tfrac{p_X}{p_Z}$. For a specific Monte Carlo estimate, \eqref{eq:is1} reads as
\begin{equation}\label{eq:is2}
	\E_{X\sim\pi_X}\left[ f(X) \right] \approx\frac{1}{N}\sum_{i=1}^N f(Z_i)\frac{p_X(Z_i)}{p_Z(Z_i)}
\end{equation}
where $(Z_i)_i$ is an i.i.d sample with $Z_i\sim \pi_Z$. The approximation \eqref{eq:is2} is known as \emph{unnormalized} \gls{is} as we do not normalize the weights $W_i = \frac{p_X(Z_i)}{p_Z(Z_i)}$. One may easily check that~\eqref{eq:is2} is, in fact, an unbiased estimator for $\E_{X\sim\pi_X}\left[ f(X) \right]$. Assuming unnormalized versions of the densities, \ie, $\tilde{p}_X$, and $\tilde{p}_X$ such that
\[
	p_X(x) = \frac{1}{Z_X}\tilde{p}_X(x)
\]
and analogously for $p_Z$ with partition function $Z_Z$ instead of $Z_X$ we can rewrite \eqref{eq:is2} as
\begin{equation}
	\E_{X\sim\pi_X}\left[ f(X) \right] \approx\frac{1}{N}\sum_{i=1}^N f(Z_i)\frac{Z_Z}{Z_X}\frac{\tilde{p}_X(Z_i)}{\tilde{p}_Z(Z_i)}
\end{equation}
where we denote the weights $\tilde{W_i}= \frac{\tilde{p}_X(Z_i)}{\tilde{p}_Z(Z_i)}$. In order to estimate the ratio of the partition functions $\frac{Z_Z}{Z_X}$ we compute
\begin{equation}
	\begin{aligned}
		\frac{Z_X}{Z_Z} = \frac{1}{Z_Z}\int \tilde{p}_X(x)\dd x = \frac{1}{Z_Z}\int \frac{\tilde{p}_X(x)}{\tilde{p}_Z(x)}\tilde{p}_Z(x)\dd x = \int \frac{\tilde{p}_X(x)}{\tilde{p}_Z(x)}p_Z(x)\dd x\\
		\approx \frac{1}{N}\sum_{i=1}^N \frac{\tilde{p}_X(Z_i)}{\tilde{p}_Z(Z_i)}
		= \frac{1}{N}\sum_{i=1}^N \tilde{W}_i.
	\end{aligned}
\end{equation}
Collecting these results we arrive at the \emph{self-normalized} \gls{is},
\begin{equation}
	\begin{aligned}
		\E_{X\sim\pi_X}\left[ f(X) \right] \approx \frac{\sum_{i=1}^N f(Z_i)\tilde{W}_i}{\sum_{i=1}^N \tilde{W}_i}
	\end{aligned}
\end{equation}
where we recall, $(Z_i)_i$ is i.i.d, $Z_i\sim\pi_Z$ and $\tilde{W}_i = \tfrac{\tilde{p}_X(Z_i)}{\tilde{p}_Z(Z_i)}$. While, as mentioned above, \gls{is} is primarily used for the approximation of a specific expectation, it is also possible to obtain samples from $\pi_X$ via a strategy coined \gls{sir}~\cite{cappe2007overview}. In \Gls{sir} an approximate sample from $\pi_X$ is obtained by first sampling from $\pi_Z$ and afterwards drawing from this sample using the self-normalized importance weights. The procedure is summarized in \cref{algo:sir} and we refer to 
\begin{algorithm}[t]
	\begin{algorithmic}[1]
		\Require Distribution $\pi_Z$, $N,M>0$.
		\State Draw $Z_1,\dots,Z_M \sim \pi_Z$ i.i.d
		\State Compute self-normalized importance weights
		\[
			W_i = \frac{\tilde{p}_X(Z_i)/\tilde{p}_Z(Z_i)}{\sum_{j=1}^N \tilde{p}_X(Z_j)/\tilde{p}_Z(Z_j)}
		\]
		\State Draw $X_1,\dots,X_N$ i.i.d from the set $\{Z_1,\dots,Z_M\}$ with distribution
		\[
			\Prob[X_1=Z_i] = W_i,\quad \text{for $i=1,\dots,M$}.
		\]
		\State \Return $X_1,\dots,X_N$
	\end{algorithmic}
	\caption{Sampling importance resampling.}
	\label{algo:sir}
\end{algorithm}

\subsection{Some preliminaries for Markov chains}
Before discussing specific \gls{mc} based samplers we have to introduce some basic terminology. Let $R:\R^d\times \Bc(\R^d)\rightarrow [0,1]$ be a Markov kernel. We define the action of $R$ on a probability measure $\mu$ as
\begin{equation}\label{eq:markov_action_measure}
	\mu R (A) \coloneqq \int R(x,A) \mu(\dd x), \quad A\in\Bc(\R^d).
\end{equation}
In the following we restrict our discussions to time-homogeneous \glspl{mc}. Therefore, we can identify a Markov chain with its Markov or transition kernel $R$ defined via
\[
	\Prob(X_{k+1}\in A\;|\; X_k=x) = R(x,A),\quad x\in\R^d,\; A\in\Bc(\R^d),
\]
which holds for all \( k \) due to the assumed time-homogeneity.
If $(X_k)_k$ is a Markov chain with transition kernel $R$ and initial distribution $X_0\sim \mu$, it, thus, follows $X_k\sim \mu R^k$. The convergence results of the various sampling algorithms will moreover rely on the notion of a stationary distribution.
\begin{definition}[Stationary distribution]%
	\label{defin:stationary}
	Let $R:\R^d\times\Bc(\R^d)\rightarrow[0,1]$ be a Markov kernel. We call a probability distribution $\mu\in\Pc(\R^d)$ \emph{stationary} or \emph{invariant} for $R$ if for any $A\in \Bc(\R^d)$, $\mu R(A)=\mu(A)$, that is,
		\[
			\mu(A) = \int R(x,A) \mu(\dd x).
		\]
\end{definition}
Via the equivalence between (time-homogeneous) \glspl{mc} and Markov kernels, the notion of a stationary distribution applies to \glspl{mc} as well. If a \gls{mc} converges to its unique stationary distribution (in some metric) we will refer to the chain as \emph{ergodic}.

\subsection{The Metropolis-Hastings algorithm}
We begin this exposition with the famous \gls{mh} algorithm which constitutes a highly flexible and simple method for sampling from a broad range of distributions.
The \gls{mh} algorithm endows any Markov transition kernel with a subsequent accept or reject step that ensures that the resulting Markov chain admits the target as its stationary distribution.
Formally, let $Q:\R^d\times\Bc(\R^d)\rightarrow [0,1]$ be a \emph{proposal} transition kernel with density $q$, \ie, $Q(x,A) = \int_A q(x,v)\dd v$ for all $x\in\R^d$ and $A\in\Bc(\R^d)$. Given the current iterate $X_k$, the \gls{mh} algorithm proceeds by sampling $V_{k+1}\sim Q(X_{k},\,\cdot\,)$ from the proposal and subsequently setting $X_{k+1}=V_{k+1}$ with probability $\rho(X_k,V_{k+1})$ where
\[
	\rho(x,v)=\begin{cases}
		\min\left\{\frac{p(v)q(v,x)}{p(x)q(x,v)} , 1\right\},\quad &\text{if }p(x)q(x,v)>0\\
		1,\quad&\text{else.}
	\end{cases}
\]
and otherwise setting $X_{k+1}=X_k$. The algorithm is summarized in \cref{eq:MH}.
\begin{remark}
	For an \gls{ebm}, $\frac{p(v)}{p(x)} = \exp{E(x)-E(v)}$, which enables an efficient computation of the acceptance probability without knowledge of the normalization constant $\int \exp{-E(x)} \wrt x$.
\end{remark}

\begin{algorithm}[t]
	\begin{algorithmic}[1]
		\Require Initial value $X_0$, proposal density $q$
		\For{$k=0,1,2,\dots$}
		\State $V_{k+1} \sim Q(X_k,\cdot)$
		\State 
		$
		X_{k+1} = 
			\begin{cases}
				V_{k+1}\quad&\text{with probability $\rho(X_k,V_{k+1})$}\\
				X_{k}\quad&\text{else.}
			\end{cases}
		$
		\EndFor
	\end{algorithmic}
	\caption{The Metropolis-Hastings algorithm.}
	\label{eq:MH}
\end{algorithm}
Let us denote the transition kernel of the \gls{mh} chain as $R$. The following theorem shows that $\pi$ is a stationary distribution of the \gls{mh} chain without making any specific additional assumptions.
\begin{theorem}
	The transition kernel $R:\Xc\times\Bc(\R^d)\rightarrow [0,1]$ of a \gls{mh} chain satisfies the so-called detailed balance condition with $\pi$. That is, for any measurable and bounded function $f: \R^{2d}\rightarrow \R$
	\begin{equation}\label{eq:detailed_balance}
		\iint f(x,v) R(x,\dd v) \dd \pi(x) = \iint f(x,v) R(v,\dd x) \dd\pi(v).
	\end{equation}
	As a consequence, the \gls{mh} chain admits $\pi$ as a stationary distribution.
\end{theorem}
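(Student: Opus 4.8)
The plan is to verify the detailed balance identity~\eqref{eq:detailed_balance} directly from the explicit form of the \gls{mh} transition kernel and then obtain stationarity as an immediate corollary. First I would record that, by construction, the kernel decomposes into a \enquote{move} part and a \enquote{stay} part,
\[
	R(x,\dd y) = q(x,y)\rho(x,y)\,\dd y + r(x)\,\delta_x(\dd y), \qquad r(x) \coloneqq 1 - \int q(x,z)\rho(x,z)\,\dd z,
\]
where $r(x)\in[0,1]$ because $\rho\le 1$ and $q(x,\emptyarg)$ is a probability density. Substituting this into~\eqref{eq:detailed_balance} and relabelling the dummy variables $x\leftrightarrow y$ on the right-hand side, the asserted equality reduces to the single statement that
\[
	\iint f(x,y)\,q(x,y)\rho(x,y)\,p(x)\,\dd y\,\dd x \qquad\text{and}\qquad \int f(x,x)\,r(x)\,p(x)\,\dd x
\]
are both unchanged if the two arguments of $f$ are swapped. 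The stay term is trivially symmetric since $f$ is only evaluated on the diagonal, so the whole argument comes down to the move term.

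For the move term, the crucial ingredient is the pointwise identity
\[
	p(x)\,q(x,y)\,\rho(x,y) = p(y)\,q(y,x)\,\rho(y,x)
\]
for (Lebesgue-almost) every $(x,y)$, which I would establish by a short case distinction following the definition of $\rho$: if $p(x)q(x,y)>0$ and $p(y)q(y,x)>0$ then both sides equal $\min\{p(x)q(x,y),\,p(y)q(y,x)\}$; if exactly one of the two products vanishes then both sides vanish (using $\rho(y,x)=0$ in the corresponding subcase); and if both vanish the identity is trivial. Granting this, Fubini's theorem --- which applies since $f$ is bounded and $q(x,\emptyarg)$, $p$ are probability densities --- lets me interchange the order of integration, and after relabelling $x\leftrightarrow y$ and invoking the identity the integral $\iint f(x,y)\,q(x,y)\rho(x,y)\,p(x)\,\dd y\,\dd x$ becomes $\iint f(y,x)\,q(x,y)\rho(x,y)\,p(x)\,\dd y\,\dd x$, which is precisely the required symmetry. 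Combining the two parts yields~\eqref{eq:detailed_balance}.

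Finally, stationarity is read off by specializing~\eqref{eq:detailed_balance} to $f(x,y) = \1_A(y)$ with $A\in\Bc(\R^d)$: the left-hand side is then $\int R(x,A)\,\dd\pi(x) = (\pi R)(A)$, while on the right-hand side the inner integral $\int R(y,\dd x) = 1$ collapses and what remains is $\int \1_A(y)\,\dd\pi(y) = \pi(A)$; hence $\pi R = \pi$. I do not anticipate a genuine obstacle here; the only places needing care are the bookkeeping in the case distinction for the $\rho$-identity (in particular the degenerate cases with vanishing densities or proposal values) and the observation that the atomic $\delta_x$-contribution legitimately drops out of the symmetrization rather than requiring a partner term --- which it does, precisely because it contributes only values of $f$ on the diagonal.
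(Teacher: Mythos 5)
Your proposal is correct and follows essentially the same route as the paper: the same decomposition of $R$ into a move part $q(x,y)\rho(x,y)\,\dd y$ and a diagonal rejection part, the same pointwise identity $p(x)q(x,y)\rho(x,y)=p(y)q(y,x)\rho(y,x)$ used to symmetrize the move term, and the same specialization $f(x,y)=\1_A(y)$ to deduce stationarity. The only difference is that you spell out the case distinction behind the $\rho$-identity, which the paper states without proof, so your argument is if anything slightly more complete.
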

\begin{proof}
	First of all, by distinguishing the cases $x=v$ and $x\neq v$ it is easy to check that the transition kernel of the chain reads as
	\begin{equation}\label{eq:mh_density}
		\begin{aligned}
			R(x,\dd v) = q(x,v)\rho(x,v) \dd v+ \delta_{x}(\dd v)\underbrace{\left( 1 - \int q(x,z)\rho(x,z)\dd z\right)}_{=\Prob\left[\text{make any proposal and reject it}\; | \; x\right]}.
		\end{aligned}
	\end{equation}
	Now let $f:\R^{2d}\rightarrow \R$ be any bounded and measurable function. Since $p(x) q(x,v)\rho(x,v) = p(v) q(v,x)\rho(v,x)$ and recalling that $p$ is the density of $\pi$ with respect to the Lebesgue measure, we find that
	\begin{equation}
		\begin{aligned}
			\iint &f(x,v)p(x) R(x,\dd v) \dd x \\
			&= \iint f(x,v) p(x) q(x,v)\rho(x,v) \dd v \dd x \\
			&\qquad+ \int f(x,x)p(x)\left( 1 - \int q(x,z)\rho(x,z)\dd z\right)\dd x\\
			&= \iint f(x,v) p(v) q(v,x)\rho(v,x) \dd v \dd x \\
			&\qquad+ \int f(v,v)p(v)\left( 1 - \int q(v,z)\rho(v,z)\dd z\right)\dd v\\
			&= \iint f(x,v)p(v) R(v,\dd x) \dd v.
		\end{aligned}
	\end{equation}
	Detailed balance implies stationarity since for $A\in\Bc(\R^d)$ by setting $f(x,v)= \1_{A}(v)$ it follows using Fubini's theorem that
	\begin{equation}
		\begin{aligned}
			\pi R (A) = \iint \1_{A}(v) p(x) R(x,\dd v) \dd x &= \iint \1_{A}(v) p(v) R(v,\dd x) \dd v \\
															   &= \int_A p(v)\dd v = \pi(A).
		\end{aligned}
	\end{equation}
\end{proof}
Note that $\pi$ being a stationary measure does not automatically imply ergodicity, that is, convergence of \gls{mc} generated by the \gls{mh} algorithm to $\pi$. However, ergodicity is also obtained under rather mild conditions. Since the underlying theory on ergodicity of Markov chains is, however, quite involved we only give the result without proof here and refer to~\cite{meyn2012markov,robert1999monte} for more details.
\begin{theorem}{\cite[Corollary 7.5]{robert1999monte}}
	Assume that the proposal density $q$ satisfies 
	\begin{equation}\label{eq:condition_MH}
		\Prob \left( p(X_{k})q(X_k,V_{k+1})\leq p(V_{k+1})q(V_{k+1},X_k) \right)<1
	\end{equation}
	and that $q(x,v)>0$ for any $x,v$. Then, the \gls{mh} chain is ergodic, that is, for any initial distribution $\mu$,
	\[
		\lim_{n\rightarrow \infty}\tv(\mu R^n, \pi) = 0
	\]
	where $\mu R^n$ denotes the application of $n$ \gls{mh} steps to the initial distribution.
\end{theorem}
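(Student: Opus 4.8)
The plan is to recognise this as the classical ergodic theorem for Markov chains applied to the \gls{mh} kernel $R$: once $R$ has been shown to be $\pi$-irreducible, aperiodic, and positive Harris recurrent with invariant probability measure $\pi$, the standard Markov chain ergodic theorem (see, e.g., \cite{meyn2012markov,robert1999monte}) yields $\tv(R^n(x,\emptyarg),\pi)\to 0$ for \emph{every} $x\in\R^d$, and the assertion for an arbitrary initial law then follows by a dominated-convergence argument. Invariance of $\pi$ has already been established via detailed balance above, so the work lies in the three remaining properties, and the delicate one will be Harris recurrence.

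For $\pi$-irreducibility I would first note that, since $E:\R^d\to\R$ is real-valued, $p(x)=\exp{-E(x)}/\int\exp{-E(x')}\wrt x'>0$ for all $x$, while $q(x,y)>0$ for all $x,y$ by hypothesis; hence the acceptance probability $\rho(x,y)=\min\{p(y)q(y,x)/(p(x)q(x,y)),1\}$ is strictly positive everywhere. Using the explicit form \eqref{eq:mh_density} of $R$, for every $x$ and every Borel set $A$ of positive Lebesgue measure (equivalently $\pi(A)>0$) one gets $R(x,A)\ge\int_A q(x,y)\rho(x,y)\wrt y>0$, so the chain reaches every $\pi$-positive set in a single step and Lebesgue measure (equivalently $\pi$) is an irreducibility measure.

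For aperiodicity the key observation is that condition \eqref{eq:condition_MH} says exactly that with positive probability $p(X_k)q(X_k,Y_{k+1})>p(Y_{k+1})q(Y_{k+1},X_k)$, i.e.\ $\rho(X_k,Y_{k+1})<1$; hence there is a Borel set $B$ with $\pi(B)>0$ on which the rejection probability $r(x):=1-\int q(x,z)\rho(x,z)\wrt z$ is positive, and by \eqref{eq:mh_density} the kernel carries an atom $R(x,\{x\})\ge r(x)>0$ for $x\in B$. If $R$ had period $d\ge 2$ with cyclic classes $D_0,\dots,D_{d-1}$, then—since $\pi$ is equivalent to Lebesgue measure and thus has full support—some $D_i$ would meet $B$ in a $\pi$-positive set, and for $x$ there the self-transition $R(x,\{x\})>0$ contradicts $R(x,D_{i+1\bmod d})=1$ with $D_i\cap D_{i+1}=\emptyset$; hence the period is $1$.

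It then remains to establish Harris recurrence, and this is the main obstacle. A $\pi$-irreducible chain admitting an invariant probability measure is automatically positive recurrent, but Harris recurrence—the property that actually upgrades ``$\tv(R^n(x,\emptyarg),\pi)\to 0$ for $\pi$-a.e.\ $x$'' to the same statement for \emph{every} $x$—requires more, and I would obtain it from the special structure of the \gls{mh} kernel: $R(x,\emptyarg)$ splits into the absolutely continuous part $q(x,y)\rho(x,y)\wrt y$ plus the atom at $x$, which forces the exceptional set of points from which the chain might fail to be Harris to be $\pi$-null and inaccessible (this is the content of Tierney's analysis of \gls{mh} chains; cf.\ also \cite{meyn2012markov,robert1999monte}). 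Granting this, $R$ is aperiodic and positive Harris recurrent with invariant law $\pi$, so $\tv(R^n(x,\emptyarg),\pi)\to 0$ for all $x$, and for an arbitrary initial distribution $\mu$ the triangle inequality gives $\tv(\mu R^n,\pi)\le\int\tv(R^n(x,\emptyarg),\pi)\,\mu(\dd x)\to 0$ by dominated convergence, the integrand being bounded by $1$. The crux is thus the Harris-recurrence step: the generic ergodic theorem only controls $\pi$-almost-every starting point, so the ``for any initial distribution'' claim genuinely relies on exploiting the explicit form of the \gls{mh} transition kernel rather than soft Markov-chain arguments alone.
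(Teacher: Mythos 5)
The paper itself does not prove this theorem: it is stated with a citation to \cite{robert1999monte} and the explicit remark that the underlying ergodic theory is too involved to reproduce, so there is no in-text argument to compare against line by line. What your sketch does is reconstruct precisely the proof architecture of the cited result. Invariance of $\pi$ via detailed balance is indeed already established in the text; your irreducibility argument is the standard one (since $E$ is real-valued, $p>0$ everywhere, and together with $q>0$ this gives $R(x,A)\geq\int_A q(x,y)\rho(x,y)\,\wrt y>0$ for every Lebesgue-positive $A$); and your reading of \eqref{eq:condition_MH} as guaranteeing a positive rejection probability, hence an atom $R(x,\{x\})>0$ on a $\pi$-positive set and therefore aperiodicity of the (necessarily $\pi$-a.e.\ defined) cyclic decomposition, is exactly how that condition is used in the reference. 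The final reduction from convergence for every starting point to convergence for every initial distribution, via $\tv(\mu R^n,\pi)\leq\int\tv(R^n(x,\emptyarg),\pi)\,\mu(\wrt x)$ and dominated convergence, is also correct.

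The one place where your proposal is a plan rather than a proof is the Harris-recurrence step, which you correctly identify as the crux and then ``grant'' by appeal to Tierney's analysis. That is the content of the results surrounding the cited corollary in \cite{robert1999monte}: for a $\pi$-irreducible Metropolis--Hastings kernel, Harris recurrence follows from the specific structure of $R$ (absolutely continuous part plus rejection atom at the current state), by showing that the probability of forever avoiding a set of positive $\pi$-measure is zero, equivalently that bounded harmonic functions are constant. If you wanted a self-contained argument you would have to supply that step; as written, your proposal has the same status as the paper's treatment---an appeal to the literature---but with the correct skeleton (invariance, irreducibility, aperiodicity, Harris recurrence, the ergodic theorem for positive Harris aperiodic chains, then integration over the initial law) made explicit, and with the roles of the hypotheses $q>0$ and \eqref{eq:condition_MH} correctly identified.
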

The condition \eqref{eq:condition_MH} implies that the proposal density need not satisfy detailed balance itself. Indeed, if that were the case, the inclusion of the \gls{mh} correction step would be unnecessary and the chain generated by $q$ can be analyzed directly. Altogether, the conditions on the proposal density $q$ are not too restrictive and several \gls{mh} algorithms utilizing different proposal densities have been proposed in the literature. Many of the proposal densities are based on the discretization of continuous-time diffusion \glspl{sde}.
For instance, correcting the \glsfirst{ula} that is discussed later via a Metropolis-Hastings step leads to the popular \gls{mala}~\cite{roberts1996exponential} algorithm.
Another example is Prox-MALA, a proximal variant thereof~\cite{pereyra2016proximal}.
\begin{remark}[Acceptance rate]
	In order to obtain a method with reasonable convergence behavior in practice, the acceptance rate should be within a certain range. Acceptance rates close to zero indicate that the chain is barely moving which will yield slow convergence. On the other hand, acceptance rates close to one are often a consequence of the proposed step being close to zero, again, yielding a barely moving chain. For high dimensional problems, acceptance rates of approximately $1/4$ are advised in the literature~\cite[Section 7.8.4]{robert1999monte}. Achieving such a rate in general requires to perform multiple runs of the algorithm.
\end{remark}

\subsection{Gibbs sampling}%
\label{ssec:gibbs}
Assume we aim to sample from a distribution $\pi_{X,V}$ on \( \R^{d_1} \times \R^d_2 \) with density $p_{X,V}$.\footnote{
	That is, Gibbs sampling requires a joint distribution of two random variables. This can be achieved by splitting the variable of interest into two blocks or via latent variable models, see Paragraph \ref{sec:latent}.}
The two-block Gibbs sampler (\cref{algo:gibbs}) is as simple as alternatingly sampling from the conditionals $\pi_{X}(\emptyarg\mid V = \emptyarg)$ and $\pi_{V}(\emptyarg \mid X = \emptyarg)$.
\begin{algorithm}[t]
	\begin{algorithmic}[1]
		\Require Initial values $(X_0,V_0)$.
		\For{$k=0,1,2,\dots$}
			\State $V^{k+1} \sim \pi_{V}(\emptyarg|X = X^k)$
			\State $X^{k+1} \sim \pi_{X}(\emptyarg|V = V^{k+1})$
		\EndFor
	\end{algorithmic}
	\caption{The Gibbs sampling algorithm for two variable blocks.}
	\label{algo:gibbs}
\end{algorithm}
The scheme can easily be extended to a multi-block samplers (see, \eg, \cite[Chapter 10]{robert1999monte}), but we restrict our discussion to the two-block case for simplicity. As Gibbs sampling requires us to sample from the conditionals $\pi_{X}(\emptyarg \mid V = \emptyarg)$ and $\pi_{V}(\emptyarg \mid X = \emptyarg)$ the method may be considered as a \emph{meta} algorithm shifting the issue from directly sampling from $\pi_{X,V}$ to sampling from the conditionals. Significant improvements in computational complexity are primarily achieved in cases where sampling from the conditionals is easy---or even possible directly---whereas sampling from the joint distribution is hard and/or only feasible iteratively.

Let us denote the transition kernel of the joint chain $(X_k, V_k)_k$ as $R$ and the corresponding kernels of first and second variables as $R_X$ and $R_V$, respectively.
\begin{lemma}
	The distribution $\pi_{X,V}$ with Lebesgue-density $p_{X,V}$ is invariant for the kernel $R$ and the marginal distributions $\pi_{X}$ and $\pi_V$ with Lebesgue-densities $p_X$ and $p_V$, respectively, are invariant for $R_X$ and $R_V$, respectively.
\end{lemma}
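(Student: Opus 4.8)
The plan is to view one sweep of \cref{algo:gibbs} as the composition of two elementary Markov kernels, each of which refreshes a single block from its full conditional, and to observe that each of these already leaves $\pi_{X,Y}$ invariant; the two marginal statements then follow by projecting this invariance onto the coordinates.

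To set things up, let $P_1$ be the kernel on $\R^{d_1}\times\R^{d_2}$ that keeps $x$ fixed and resamples the second block from its conditional, $P_1((x,y),A\times B) = \1_A(x)\int_B p_Y(y'\mid X=x)\wrt y'$, and let $P_2$ be the analogous kernel that keeps $y$ fixed and resamples the first block, $P_2((x,y),A\times B) = \bigl(\int_A p_X(x'\mid Y=y)\wrt x'\bigr)\1_B(y)$. By the definition of the algorithm, $R = P_1 P_2$ (first refresh $Y$ given $X$, then refresh $X$ given the new $Y$), and the induced component kernels are $R_X(x,A) = \int_{\R^{d_2}}\int_A p_X(x'\mid Y=y')\,p_Y(y'\mid X=x)\wrt x'\wrt y'$ and $R_Y(y,B) = \int_{\R^{d_1}}\int_B p_Y(y'\mid X=x)\,p_X(x\mid Y=y)\wrt y'\wrt x$; note in particular that in the chain $(X_k,Y_k)_k$ the conditional law of $X_{k+1}$ given $X_k$ is $R_X(X_k,\emptyarg)$ and carries no dependence on $Y_k$, and symmetrically for the second component.

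Next I would establish the one key identity, namely that $P_1$ leaves $\pi_{X,Y}$ invariant. Using the factorization $p_{X,Y}(x,y) = p_X(x)\,p_Y(y\mid X=x)$ and Tonelli's theorem --- legitimate here with no integrability hypothesis since every integrand is nonnegative --- one computes $(\pi_{X,Y}P_1)(A\times B) = \int_A\int_B p_Y(y'\mid X=x)\,p_X(x)\wrt y'\wrt x = \int_A\int_B p_{X,Y}(x,y')\wrt y'\wrt x = \pi_{X,Y}(A\times B)$, and since product sets generate $\Bc(\R^{d_1})\otimes\Bc(\R^{d_2})$ this yields $\pi_{X,Y}P_1 = \pi_{X,Y}$. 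The same computation with the roles of the two blocks exchanged gives $\pi_{X,Y}P_2 = \pi_{X,Y}$, whence $\pi_{X,Y}R = \pi_{X,Y}P_1P_2 = \pi_{X,Y}P_2 = \pi_{X,Y}$, which is the first assertion.

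For the marginals, I would project: applying $R$ to $\pi_{X,Y}$ returns $\pi_{X,Y}$, and since the $X$-component of the joint chain evolves by $R_X$ alone, the $X$-marginal of $\pi_{X,Y}R$ equals $\pi_X R_X$; hence $\pi_X R_X = \pi_X$, and the analogous projection onto the second coordinate gives $\pi_Y R_Y = \pi_Y$. (Equivalently, one verifies $\pi_Y R_Y = \pi_Y$ by the same one-line Tonelli computation, using $p_X(x\mid Y=y)\,p_Y(y) = p_{X,Y}(x,y)$ and then integrating $y$ out.) There is no real obstacle in any of this: the computations are routine, and the only points deserving a word of care are that the conditional densities $p_Y(\emptyarg\mid X=x)$ and $p_X(\emptyarg\mid Y=y)$ are defined only on $\{p_X>0\}$ and $\{p_Y>0\}$, which are $\pi_{X,Y}$-conull so that $R$, $R_X$, $R_Y$ are well defined $\pi$-almost everywhere (which is all that invariance requires), and that one should invoke Tonelli rather than Fubini so as not to smuggle in integrability assumptions.
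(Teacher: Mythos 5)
Your proposal is correct, but it organizes the argument differently from the paper. The paper proves invariance of $\pi_{X,Y}$ in one shot: it writes $\pi R(A\times B)$ as a single quadruple integral over $(x,y,\tilde{x},\tilde{y})$ and collapses it by successively integrating out $y$, then $x$ (via $\int p_Y(\tilde{y}\mid X=x)p_X(x)\,\mathrm{d}x = p_Y(\tilde{y})$), then applying Bayes' identity $p_X(\tilde{x}\mid Y=\tilde{y})p_Y(\tilde{y})=p_{X,Y}(\tilde{x},\tilde{y})$; the marginal statements are dismissed as ``analogous'' direct computations. You instead factor the sweep as $R=P_1P_2$ and show that each half-step kernel already preserves the joint law, which is the same Tonelli/Bayes computation but packaged more modularly --- it makes the mechanism transparent and extends immediately to multi-block or random-scan Gibbs samplers, at the cost of a little extra bookkeeping with the composition of kernels. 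Your projection argument for the marginals is also a genuine (and valid) alternative to the paper's direct computation, with one caveat: the situation is not literally ``symmetric'' for the second component, since the conditional law of $Y_{k+1}$ given $(X_k,Y_k)$ depends on $X_k$ rather than $Y_k$, so identifying the $Y$-marginal of $\pi_{X,Y}R$ with $\pi_Y R_Y$ uses that, under $\pi_{X,Y}$, the conditional law of $X_k$ given $Y_k=y$ is $p_X(\emptyarg\mid Y=y)$; your parenthetical one-line Tonelli verification of $\pi_Y R_Y=\pi_Y$ closes this gap cleanly, so the proof as a whole stands. Your remarks on using Tonelli rather than Fubini and on the conditional densities being defined only on $\{p_X>0\}$ and $\{p_Y>0\}$ are points the paper passes over silently.
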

\begin{proof}
	Let $A\subset\R^{d_1}$ and $B\subset\R^{d_2}$ be arbitrary measurable sets. A repeated application of Bayes' theorem shows that
	\begin{equation}
		\begin{aligned}
			&\pi R (A\times B) \\
			&\quad= \iiiint \1_{A}(\tilde{x})\1_{B}(\tilde{v})p_{V}(\tilde{v}|X = x)p_{X}(\tilde{x}|V = \tilde{v})p_{X,V}(x,v)\dd \tilde{x}\dd \tilde{v} \dd x \dd v\\
			&\quad= \iiint\1_{A}(\tilde{x})\1_{B}(\tilde{v}) p_{X}(\tilde{x}|V = \tilde{v})p_{V}(\tilde{v}|X = x)p_X(x)\dd \tilde{x}\dd \tilde{v} \dd x \\
			&\quad= \iint\1_{A}(\tilde{x})\1_{B}(\tilde{v}) p_{X}(\tilde{x}|V = \tilde{v}) p_V(\tilde{v})\dd \tilde{x}\dd \tilde{v} \\
			&\quad= \iint \1_{A}(\tilde{x})\1_{B}(\tilde{v})p_{X,V}(\tilde{x},\tilde{v}) \dd \tilde{x}\dd \tilde{v} \\
			&\quad= \pi(A\times B).
		\end{aligned}
	\end{equation}
	The proofs for the conditionals are analogous.
\end{proof}

Ergodicity of the Gibbs sampler can be ensured, \eg, using the following positivity condition.
\begin{definition}[Positivity condition]\label{def:positivitiy}
	The density $p_{X,V}$ satisfies the \emph{positivity condition} if for all $x,v$ it holds that
\begin{equation}\label{eq:positivitiy}
	\left[ p_X(x)>0\text{ and }p_V(v)>0\right] \Longrightarrow p_{X,V}(x,v)>0.
\end{equation}
That is, the support of the joint density is the Cartesian product of the supports of its marginals.
\end{definition}

\begin{theorem}
	Assume that the transition kernel $R$ is absolutely continuous with respect to the Lebesgue measure and that $p_{X,V}$ satisfies the positivity condition \eqref{eq:positivitiy}. Then the Gibbs sampler is ergodic, \ie, for every initial distribution $\mu$,
	\begin{equation}
		\lim\limits_{n\rightarrow\infty} \tv(\mu R^n, \pi_{X,V}) = 0.
	\end{equation}
\end{theorem}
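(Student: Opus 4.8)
The plan is to deduce ergodicity from the standard convergence theorem for Markov chains (see \cite[Chapters 13 and 17]{meyn2012markov} and \cite[Chapter 6]{robert1999monte}), which asserts that a Markov kernel that is $\pi$-irreducible, aperiodic, Harris recurrent, and admits $\pi$ as an invariant probability measure satisfies $\tv(R^n(x,\emptyarg),\pi)\to 0$ for \emph{every} $x$. Granting this for every starting point, the statement for an arbitrary initial distribution $\mu$ follows from the contraction estimate $\tv(\mu R^n,\pi)=\tv(\mu R^n,\pi R^n)\leq\int\tv(R^n(x,\emptyarg),\pi)\,\mu(\dd x)$ together with dominated convergence. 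So the whole argument reduces to checking the four structural properties of $R$; invariance of $\pi=\pi_{X,Y}$ has already been proved in the preceding lemma.

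For $\pi$-irreducibility and aperiodicity I would work with the one-step transition density of the two-block sampler. Starting from $(x,y)$ one draws $\tilde y\sim p_Y(\emptyarg\mid X=x)$ and then $\tilde x\sim p_X(\emptyarg\mid Y=\tilde y)$, so that, for $x$ with $p_X(x)>0$,
\begin{equation*}
	r\bigl((x,y),(\tilde x,\tilde y)\bigr)=p_Y(\tilde y\mid X=x)\,p_X(\tilde x\mid Y=\tilde y)=\frac{p_{X,Y}(x,\tilde y)}{p_X(x)}\cdot\frac{p_{X,Y}(\tilde x,\tilde y)}{p_Y(\tilde y)}
\end{equation*}
is a density of $R((x,y),\emptyarg)$ with respect to the Lebesgue measure (the assumed absolute continuity of $R$ guarantees such a density exists). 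Let $C$ be Borel with $\pi(C)>0$; then $0<\pi(C)=\int_C p_{X,Y}\,\dd x$, so $C\cap\{p_{X,Y}>0\}$ has positive Lebesgue measure. On this set $p_X(\tilde x)>0$ and $p_Y(\tilde y)>0$, and if also $p_X(x)>0$ the positivity condition \eqref{eq:positivitiy} forces $p_{X,Y}(x,\tilde y)>0$, hence $r((x,y),(\tilde x,\tilde y))>0$ there; integrating gives $R((x,y),C)>0$. Thus $R((x,y),C)>0$ for every $(x,y)$ in the full-$\pi$-measure set $\{p_X>0\}$ and every $C$ with $\pi(C)>0$, which is $\pi$-irreducibility; taking $C$ inside the support of $\pi$ shows that the support is reached from within itself in one step, so no cyclic decomposition can exist and the chain is aperiodic.

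The remaining point---and the one I expect to be genuinely delicate---is Harris recurrence, which is exactly what is needed to upgrade convergence for $\pi$-a.e.\ starting point (all that $\pi$-irreducibility, aperiodicity, and invariance give by themselves) to convergence from \emph{every} starting point, and hence from every initial distribution. Here I would use the fact that a $\pi$-irreducible chain with an invariant probability measure whose transition kernel is absolutely continuous with respect to the Lebesgue measure is automatically Harris recurrent: absolute continuity of $R$ rules out a $\pi$-null absorbing set on which the chain could become trapped, so positive recurrence improves to Harris recurrence (this is precisely where the hypothesis is imposed on $R$ rather than only on the conditionals); see \cite[Chapter 6]{robert1999monte} and \cite[Chapter 9]{meyn2012markov}. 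With invariance, $\pi$-irreducibility, aperiodicity, and Harris recurrence established, the ergodic theorem quoted above gives $\tv(R^n(x,\emptyarg),\pi)\to 0$ for every $x$, and the contraction argument from the first paragraph completes the proof.

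In short, the only substantial difficulty is the Harris-recurrence step together with its reliance on the absolute-continuity assumption on $R$; the verification of irreducibility and aperiodicity is a direct unwinding of the positivity condition via Bayes' formula, much as in the invariance lemma preceding the theorem.
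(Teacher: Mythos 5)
Your proposal is correct and follows essentially the same route as the paper, which proves this theorem simply by citing \cite[Theorems 9.6, 10.10]{robert1999monte}: the positivity condition yields $\pi$-irreducibility and aperiodicity of the Gibbs kernel, absolute continuity of $R$ upgrades recurrence to Harris recurrence (hence convergence from \emph{every} starting point, not just $\pi$-a.e.\ ones), and the standard total-variation ergodic theorem of \cite{meyn2012markov,robert1999monte} plus the averaging bound $\tv(\mu R^n,\pi)\leq\int\tv(R^n(x,\emptyarg),\pi)\,\mu(\dd x)$ finishes the argument. The only slight looseness is calling the positivity of the transition density on $\{p_X>0\}$ ``$\pi$-irreducibility'' (the definition quantifies over all starting points), but you correctly identify that closing this a.e.-versus-everywhere gap is exactly the role of the Harris-recurrence step supplied by the cited results.
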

\begin{proof}
	A proof can be found in \cite[Theorems 9.6, 10.10]{robert1999monte}.
\end{proof}

It turns out that the Gibbs sampler has a strong connection to the \gls{mh} algorithm.
\begin{theorem}
	The subchains $(X_k)_k$ and $(V_k)_k$ of the Gibbs sampler constitute \gls{mh} algorithms with acceptance probability equal to one.
\end{theorem}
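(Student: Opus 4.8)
The statement to be proven is that each subchain of the two-block Gibbs sampler, say $(X_k)_k$, is itself a Metropolis-Hastings chain whose acceptance probability is identically one. The plan is to identify the transition kernel of the $X$-subchain explicitly, then exhibit it as the kernel produced by the Metropolis-Hastings construction for a suitable proposal, and finally compute that the Metropolis-Hastings ratio $\rho$ equals one everywhere.

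First I would write down the marginal transition kernel $R_X$ of the $X$-subchain. Starting from $X_k = x$, one step of the Gibbs sampler draws $Y^{k+1} \sim \pi_Y(\emptyarg \mid X = x)$ and then $X^{k+1} \sim \pi_X(\emptyarg \mid Y = Y^{k+1})$. Integrating out the intermediate $y$, the density of $X^{k+1}$ given $X^k = x$ is
\begin{equation}
	q(x,x') = \int p_X(x' \mid Y = y)\, p_Y(y \mid X = x)\, \dd y.
\end{equation}
This $q$ will serve simultaneously as the proposal density and (up to the Metropolis correction) as the actual transition density. The key point is that this proposal already admits $\pi_X$ (with density $p_X$) as an invariant distribution — this was essentially shown in the preceding lemma — and in fact it satisfies detailed balance with $p_X$.

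Next I would verify detailed balance directly, i.e. that $p_X(x)\, q(x,x') = p_X(x')\, q(x',x)$ for all $x,x'$. Using Bayes' theorem in the form $p_X(x' \mid Y=y)\, p_Y(y) = p_{X,Y}(x',y) = p_Y(y \mid X=x')\, p_X(x')$, one rewrites
\begin{equation}
	p_X(x)\, q(x,x') = \int p_X(x' \mid Y=y)\, p_Y(y \mid X=x)\, p_X(x)\, \dd y = \int \frac{p_{X,Y}(x',y)\, p_{X,Y}(x,y)}{p_Y(y)}\, \dd y,
\end{equation}
which is manifestly symmetric in $x$ and $x'$. Hence $p_X(x')\, q(x',x)$ equals the same expression. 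Then I would plug this symmetry into the Metropolis-Hastings acceptance probability: for the proposal kernel with density $q$ and target density $p_X$, the ratio is $\frac{p_X(x')\, q(x',x)}{p_X(x)\, q(x,x')} = 1$ whenever the denominator is positive, so $\rho(x,x') = \min\{1,1\} = 1$ (and $\rho = 1$ by convention otherwise). Therefore the Metropolis-Hastings chain with this proposal never rejects, and its transition kernel coincides with $R_X$; the identical argument with the roles of $X$ and $Y$ exchanged handles $(Y_k)_k$.

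The only delicate point — and the one I would be most careful about — is the handling of null sets and the positivity/finiteness of $p_Y(y)$ appearing in the denominator of the symmetric expression, and relatedly the convention in $\rho$ when $p_X(x)\, q(x,x') = 0$. This is not a deep obstacle but requires stating that the manipulations hold for Lebesgue-a.e.\ pair $(x,x')$ and invoking the convention built into the definition of $\rho$ for the exceptional set; one should also note that $q(x,\emptyarg)$ is a genuine probability density for each fixed $x$ (it integrates to one), so that $Q(x,\emptyarg)$ is a legitimate proposal kernel in the sense required by the Metropolis-Hastings framework. With those caveats dispatched, the proof is essentially the one-line computation that detailed balance forces the acceptance ratio to be one.
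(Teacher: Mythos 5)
Your proposal is correct and follows essentially the same route as the paper: identify the subchain's transition density $q(x,x') = \int p_X(x'\mid Y=y)\,p_Y(y\mid X=x)\,\wrt y$ as the proposal, establish detailed balance $p_X(x)q(x,x') = p_X(x')q(x',x)$ via the joint-density (Bayes) identities, and conclude that the Metropolis-Hastings ratio is identically one. Writing the integrand in the manifestly symmetric form $p_{X,Y}(x,y)\,p_{X,Y}(x',y)/p_Y(y)$ is only a cosmetic variant of the paper's step-by-step manipulation, and your added remarks on null sets and the convention for vanishing denominators are a harmless (indeed welcome) refinement.
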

\begin{proof}
	Due to symmetry it is sufficient to prove the result for the $X_k$ chain. The corresponding proposal density of the chain reads as
	\[q(x,\tilde{x}) = \int p_{V}(\tilde{v}|X = x) p_{X}(\tilde{x}|V = \tilde{v})\dd \tilde{v}.\]
	Using the definition of the conditional distribution, elementary computations yield
	\begin{equation}
		\begin{aligned}
			p_X(x)q(x,\tilde{x}) &= \int p_X(x) p_{V}(\tilde{v}|X=x) p_{X}(\tilde{x}|V=\tilde{v})\dd \tilde{v}\\
								 &= \int p_{X,V}(x,\tilde{v}) p_{X}(\tilde{x}|V=\tilde{v})\dd \tilde{v}\\
								 &= \int p_{X}(x|V=\tilde{v}) p_{V}(\tilde{v}) p_{X}(\tilde{x}|V=\tilde{v})\dd \tilde{v}\\
								 &= \int p_{X}(x|V=\tilde{v}) p_{X}(\tilde{x}) p_{V}(\tilde{v}|X=\tilde{x})\dd \tilde{v}\\
								 &= p_X(\tilde{x})q(\tilde{x},x),
		\end{aligned}
	\end{equation}
	which shows that the corresponding \gls{mh} acceptance probability 
	\[\min \left\{ \frac{p_X(\tilde{x})q(\tilde{x},x)}{p_X(x)q(x,\tilde{x})}, 1\right\}\]
	is always equal to one.
\end{proof}
\paragraph{Application to latent variable models}\label{sec:latent}
Gibbs sampling as introduced above builds on a joint distribution $\pi_{X,V}$ of two random variables $X$ and $V$. Using latent variable models, however, it is easy to see that the approach may also be valuable when working with a distribution of only \emph{one} random variable $\pi_X$~\cite{kuric2025gaussian}.
\begin{definition}[Latent variable model]
	Let $(\Xc,\Sigma_\Xc)$ and $(\Vc,\Sigma_\Vc)$ be measurable spaces and let $\pi_X$ be a probability distribution on $\Xc$. We call the distribution $\pi_{X,V}$ on $\Xc\times \Vc$ a latent variable model for $\pi_X$ if $\pi_{X,V}$ admits $\pi_X$ as its marginal, \ie, for any measurable $A\in\Sigma_\Xc$ it holds true that 
	\[
		\pi_X(A) = \pi_{X,V}(A\times \Vc).
	\]
In the case $\Xc=\R^d$ and $\Vc=\R^\ell$ and all involved distributions admitting densities with respect to the Lebesgue measure, this means that these densities satisfy
	\[
		p_X(x) = \int p_{X,V}(x,v)\dd v.
	\]
\end{definition}
Given such a latent variable model, sampling from \( \pi_X \) can be simply realized by sampling from $\pi_{X,V}$ and subsequently dropping the latent variable.
In~\cite[Table 1]{kuric2025gaussian} the authors provide a list of densities that admit specifically favorable latent variable models, where the conditional distribution $\pi_{X|V}$ is Gaussian with mean $\mu(v)$ and covariance $\cov(v)$ being functions of $v$ so that
\begin{equation}
	\begin{aligned}
		p_{X}(x) &= \int p_{X}(x|V = v) p_V(v)\dd v \\
				 &= \int \frac{1}{\sqrt{(2\pi)^d\det(\cov(v))}}\exp{-\|x-\mu(v)\|_{\cov(v)^{-1}}^2} p_Z(v)\dd v
	\end{aligned}
\end{equation}
is, in fact, a Gaussian mixture. In this case, for some given \( v \), sampling from $\pi_{X}(\emptyarg \mid V = v)$ reduces the sampling from a multivariate Gaussian which is possible via standard techniques such as the Cholesky decomposition of the corresponding covariance matrix, or alternative methods that might be more efficient in high dimensions, such as Perturb-and-MAP (see \cite[Section 3.3.1]{kuric2025gaussian} for details). Also sampling from $p_{V|X}$ is computationally cheap under certain assumptions on the structure of $\pi_X$ \cite[Section 3.3.2]{kuric2025gaussian}.
As shown in \cite{kuric2025gaussian}, Gibbs sampling has the advantage of providing chains with rapidly decaying autocorrelation. This comes at the cost of computationally more demanding iterations.
The results in \cite{kuric2025gaussian} still show an extremely strong preference for the Gibbs sampler.

\subsection{Langevin sampling}
For the high-dimensional priors and posteriors encountered in imaging problems, the most widely used class of algorithms for sampling from \glspl{ebm} are methods based on variants of the Langevin diffusion process. The reason for the popularity of Langevin based sampling is the simple implementation as well as high flexibility of the corresponding algorithms.
These methods constitute discretizations of continuous time stochastic processes $(Y_t)_{t\geq 0}$ for which it is known that $\Law(Y_t)\rightarrow \pi$ as $t\rightarrow\infty$. Ensuring that the discretization error grows sufficiently slowly allows to balance the convergence of the continuous time process to the target density with the error of the discretization in order to obtain approximate samples of the target. In the following we will present the \emph{overdamped} and the \emph{underdamped} Langevin algorithms as two specific instances.

While several results can be proven in significantly more general settings, for this section we assume always the following in order to provide rather self contained results.
\begin{assumption}%
	\label{ass:potential}
	We make the following assumptions on the energy \( E \):
	\begin{enumerate}
		\item $E$ is continuously differentiable and $\nabla E$ is $\lip$-Lipschitz continuous.\label{ass:Lip}
		\item $E$ is $m$-strongly convex, \ie, for any $x,y\in\R^d$ and $\lambda\in(0,1)$ we have that $E(\lambda x +(1-\lambda)\tilde{x})\leq \lambda E(x) + (1-\lambda) E(\tilde{x}) - \frac{m\lambda(1-\lambda)}{2}\|x-\tilde{x}\|^2$.\label{ass:strongly}
	\end{enumerate}
\end{assumption}
Moreover, we define the condition number as $\kappa = \frac{L}{m}$.
Convergence results of various discretizations of the Langevin diffusion in the non-convex case are provided in, \eg, \cite{durmus2017nonasymptotic,renaud2025stability,habring2025diffusion}.
Non-differentiable energies $E$ and the extensions of the convergence results to such cases are treated in, \eg, \cite{durmus2019analysis,pereyra2016proximal,fruehwirth2024ergodicity,habring2024subgradient,habring2025diffusion,erhardt2024proximal,renaud2025stability,habring2025diffusion}.

The handling of continuous-time stochastic processes, moreover requires some additional terminology. In duality to \eqref{eq:markov_action_measure}, we can define the action of a Markov kernel $R:\R^d\times\Bc(\R^d)\rightarrow [0,1]$ on a bounded and measurable function $f:\R^d\rightarrow\R$ as 
\begin{equation}\label{eq:Markov kernel on function}
	Rf(x) = \int f(\tilde{x}) R(x,\dd \tilde{x}),\quad x\in\R^d.
\end{equation}
We call a family of Markov kernels $(P_t)_{t\geq 0}$, $P_t: \R^d\times \Bc(\R^d) \rightarrow [0,1]$ a \emph{Markov semi-group} if---as an operator according to \eqref{eq:Markov kernel on function}---$P_0$ is the identity and $(P_t)_t$ is a semi-group, \ie, $P_{t+s} = P_t\circ P_s$ for any $s,t\geq 0$.\footnote{The term \emph{Markov semi-group} typically entails additional properties which are, however, not relevant for our purposes, see \cite{bakry2013analysis} for more details.} The notion of a stationary measure allows for a straightforward adaptation to the continuous-time setting.
\begin{definition}[Stationary distribution, continuous time]%
	\label{defin:stationary_cont}
	Let $(P_t)_t$ be a Markov semi-group on $(\R^d,\Bc(\R^d))$. We call $\mu\in\Pc(\R^d)$ \emph{stationary} for $(P_t)_t$ if for any $t>0$ and $A\in \Bc(\R^d)$, $\mu P_t(A) = \mu(A)$.
\end{definition}

\subsubsection{Overdamped Langevin sampling}\label{sec:ula}
The first approach to model $(X_k)_k$ is as a discretization of the so-called overdamped Langevin diffusion process which is defined via the \gls{sde}
\begin{equation}\label{eq:Langevin:sde}
	\wrt X_t = -\nabla E(X_t)\wrt t + \sqrt{2}\wrt W_t
\end{equation}
where $(W_t)_{t\geq 0}$ denotes Brownian motion. The simplest way to obtain a Markov chain that approximates the continuous-time diffusion process is by employing a first-order \gls{em} discretization which results in the update rule
\begin{equation}
	X_{k+1}^\tau = X_k^\tau - \tau\nabla E(X_k^\tau) + \sqrt{2\tau}Z_k 
\end{equation}
for \( k = 0, 1, 2, \dotsc \) and some suitable \( X_0 \), where  $(Z_k)_k$ are i.i.d.\ standard normal distributed random vectors and $\tau>0$ is the step size of the discretization.
The step size in the superscript emphasizes that the distribution of the chain as well as its stationary distribution depend on the step size.
The resulting algorithm which is summarized in~\cref{eq:EM_disc} is commonly referred to as the \glsfirst{ula}.\footnote{%
	The word \emph{unadjusted} emphasizes that, due to discretization errors, \gls{ula} only provides biased samples which is not \emph{adjusted} within the algorithm.}
\begin{algorithm}[t]
	\begin{algorithmic}[1]
		\Require Initial value $X_0$, step size $\tau>0$.
		\For{$k=0,1,2,\dots$}
		\State \( Z_k \sim \Nc(0,\Identity) \)
		\State \( X_{k+1}^\tau = X_k^\tau - \tau\nabla E(X_k^\tau) + \sqrt{2\tau}Z_k \)
		\EndFor
	\end{algorithmic}
	\caption{The unadjusted Langevin algorithm.}
	\label{eq:EM_disc}
\end{algorithm}

We will denote the semi-group that generates \eqref{eq:Langevin:sde} as $(P_t)_{t\geq 0}$ and the Markov kernel that represents one step in \cref{eq:EM_disc} as $R_\tau$. That is, for any $x\in\R^d$ and $A\in\Bc(\R^d)$,
\[
	R_\tau (x,A) = \frac{1}{\left(4\pi \tau\right)^{d/2}}\int_A \exp{-\frac{1}{2}\frac{\|z - (x-\tau\nabla E(x))\|^2}{2\tau}} \dd z.
\]
Therefore, if we denote the distribution of $X_k^\tau$ for some \( k = 0, 1, \dotsc \) in \cref{eq:EM_disc} as $\mu_k$, then $\mu_{k+1} = \mu_k R_\tau$ and $\mu_{k} = \mu_0 R_\tau^k$ with $R_\tau^k$ the $k$-fold composition of $R_\tau$.
We begin this section with a standard result about the existence of solutions of the Langevin diffusion \gls{sde} for all time.

\begin{theorem}
	Let $Z\sim\mu\in\Pc_2(\R^d)$ be independent of $(W_t)_{t\geq 0}$. Then the \gls{sde} \eqref{eq:Langevin:sde} with initial condition $X_0=Z$ admits a unique strong solution $(X_t)_{t\geq 0}$. Moreover, the solution satisfies $\int_0^t\E[\|X_s\|^2]\wrt s<\infty$ for any $t\geq 0$.
\end{theorem}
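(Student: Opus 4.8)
The plan is to exploit two structural features of \eqref{eq:Langevin:sde}: its diffusion coefficient is the constant matrix $\sqrt{2}\,\mathrm{I}$, so there is in fact no It\^o integral and the equation is equivalent to the pathwise integral equation $Y_t = Z - \int_0^t \nabla E(Y_s)\wrt s + \sqrt{2}\,W_t$; and, by \cref{ass:potential}, the drift $-\nabla E$ is globally $L$-Lipschitz, hence of linear growth $\|\nabla E(y)\| \le \|\nabla E(0)\| + L\|y\|$. Given this, one could simply invoke the classical existence-and-uniqueness theorem for It\^o SDEs with globally Lipschitz coefficients and finite second-moment initial data; but since no stochastic integral is present I would instead reduce everything to deterministic Cauchy--Lipschitz theory, which is cleaner and makes the strong-solution property transparent.

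Concretely, I would first fix a (continuous) Brownian path and substitute $V_t = Y_t - \sqrt{2}\,W_t$, turning the integral equation into $V_t = Z - \int_0^t \nabla E(V_s + \sqrt{2}\,W_s)\wrt s$, i.e.\ the random, non-autonomous ODE $\dot V_t = -\nabla E(V_t + \sqrt{2}\,W_t)$ with initial value $Z$, whose right-hand side is continuous in $t$ and globally $L$-Lipschitz in the state, uniformly in $t$. Picard--Lindel\"of then gives a unique local solution, and the linear-growth bound fed into Gr\"onwall's inequality (applied to $t \mapsto \sup_{s\le t}\|V_s\|$ on each finite interval, where $\sup_{s\le t}\|W_s\|$ is finite because the path is continuous) excludes finite-time blow-up, yielding a unique solution on all of $[0,\infty)$. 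Running the associated Picard iteration $V^{(0)}\equiv Z$, $V^{(n+1)}_t = Z - \int_0^t \nabla E(V^{(n)}_s + \sqrt{2}\,W_s)\wrt s$ simultaneously in $\omega$ shows the limit is adapted to the augmented Brownian filtration and jointly measurable, so $(Y_t)_{t\ge 0} = (V_t + \sqrt{2}\,W_t)_{t\ge 0}$ is a genuine strong solution; strong uniqueness then follows from this pathwise uniqueness directly, no Yamada--Watanabe machinery being needed since the construction is path-by-path.

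For the moment bound I would derive an a priori $L^2$ estimate: squaring the integral equation, using $\|a+b+c\|^2 \le 3(\|a\|^2+\|b\|^2+\|c\|^2)$ and linear growth of $\nabla E$, then taking expectations with $\E[\sup_{s\le t}\|W_s\|^2] \le 4 d t$ (Doob's $L^2$ maximal inequality, componentwise) and $\E[\|Z\|^2] < \infty$ since $\mu \in \Pc_2(\R^d)$, Gr\"onwall's lemma gives $\E[\sup_{s\le t}\|Y_s\|^2] \le C(t)\bigl(1+\E[\|Z\|^2]\bigr) < \infty$ for every $t \ge 0$, whence $\int_0^t \E[\|Y_s\|^2]\wrt s \le t\,\E[\sup_{s\le t}\|Y_s\|^2] < \infty$. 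The one genuinely fiddly point---and essentially the only obstacle, since everything else is classical---is making this Gr\"onwall argument rigorous: a priori one does not know that $t \mapsto \E[\sup_{s\le t}\|Y_s\|^2]$ is finite, so I would first localize with the stopping times $\tau_R = \inf\{t : \|Y_t\| \ge R\}$, run the estimate on $[0, t \wedge \tau_R]$, and pass $R \to \infty$ by monotone convergence, using that $\tau_R \uparrow \infty$ almost surely because the global solution just constructed has continuous paths.
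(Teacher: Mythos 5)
Your proposal is correct, but it follows a different (and more self-contained) route than the paper: the paper's proof is essentially a citation, invoking the classical existence-and-uniqueness theorem for It\^o SDEs with globally Lipschitz coefficients and square-integrable initial data \cite[Theorem 5.2.1]{oksendal2013stochastic}, whose fixed-point argument in a space of processes delivers existence, uniqueness of the strong solution, and the bound $\int_0^t\E[\|Y_s\|^2]\wrt s<\infty$ all at once. You instead exploit the additive-noise structure: the shift $V_t=Y_t-\sqrt{2}W_t$ turns the equation into a random non-autonomous ODE with a globally Lipschitz right-hand side, solvable path-by-path via Picard--Lindel\"of, with adaptedness read off from the Picard iterates and pathwise uniqueness obtained directly, no Yamada--Watanabe argument needed; the moment bound is then supplied separately by a localized Gr\"onwall estimate. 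What the paper's route buys is brevity and the fact that the cited theorem already contains the second-moment estimate; what yours buys is an elementary, fully explicit argument that avoids stochastic-integral estimates entirely and makes the strong-solution property transparent. Two minor simplifications you could note: with a globally Lipschitz field the ODE solution is automatically global on every finite interval (no separate blow-up exclusion is needed), and the moment bound can be obtained without stopping times by first running Gr\"onwall pathwise, which gives $\sup_{s\le t}\|V_s\|\le\bigl(\|Z\|+\int_0^t\|\nabla E(\sqrt{2}W_s)\|\wrt s\bigr)e^{Lt}$, and then taking second moments using the linear growth of $\nabla E$ and $\mu\in\Pc_2(\R^d)$; but these are streamlinings, not gaps.
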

\begin{proof}
	This is a standard result under the assumptions in this sections.
	A proof that utilizes a fixed point argument is provided in \cite[Theorem 5.2.1]{oksendal2013stochastic}.
\end{proof}

In the following proposition, we use Ito's formula to derive a weak formulation that describes the distribution of $(X_t)_t$.
\begin{proposition}\label{prop:Fokker}
	Let $t,h>0$ and $\phi:\R^d\rightarrow\R$ be twice continuously differentiable and such that $\int_t^{t+h} \E[\|\nabla \phi(X_s)\|^2\;|\; \mathcal{F}_t] \wrt s<\infty$, where $(\mathcal{F}_t)_{t\geq 0}$ denotes the filtration to which $(W_t)_t$ is adapted. Then the solution $(X_t)_t$ of \eqref{eq:Langevin:sde} satisfies
	\begin{equation}
		\E[\phi(X_{t+h})] - \E[\phi(X_t)] = \int_{t}^{t+h} \E\left[-\langle\nabla E(X_s),\nabla \phi(X_s)\rangle + \Delta \phi(X_s)\right] \wrt s.
	\end{equation}
	That is, if $X_0\sim \mu$, then
	\begin{equation}
		\begin{aligned}
			&\int_{\R^d} \phi(x) \left(\mu P_{t+h}-\mu P_t\right)(\dd x) \\
			&\quad= \int_{t}^{t+h}\int_{\R^d}-\langle\nabla E(x),\nabla \phi(x)\rangle + \Delta \phi(x) \mu P_s(\dd x) \dd s.
		\end{aligned}
	\end{equation}
\end{proposition}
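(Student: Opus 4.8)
The plan is to apply It\^o's formula to $s\mapsto\phi(Y_s)$ on $[t,t+h]$, take expectations, and observe that the stochastic-integral term drops out precisely because of the assumed $L^2$-integrability of $\nabla\phi$ along the trajectory; the measure-theoretic reformulation then follows from the identity $\E[\psi(Y_s)]=\int\psi\wrt(\mu P_s)$ together with Fubini.

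First I would record It\^o's formula for the diffusion \eqref{eq:Langevin:sde}. Its drift is $b=-\nabla E$ and its diffusion coefficient is the constant matrix $\sigma=\sqrt2\,\Identity$, so $\sigma\sigma^\top=2\,\Identity$ and $\tfrac12\tr(\sigma\sigma^\top\nabla^2\phi)=\tr(\nabla^2\phi)=\Delta\phi$ (this is where the factor $\Delta\phi$, rather than $\tfrac12\Delta\phi$, comes from). Hence, formally, for $r\in[t,t+h]$,
\[
	\phi(Y_r)-\phi(Y_t)=\int_t^r\bigl(-\langle\nabla E(Y_s),\nabla\phi(Y_s)\rangle+\Delta\phi(Y_s)\bigr)\wrt s+\sqrt2\int_t^r\langle\nabla\phi(Y_s),\wrt W_s\rangle .
\]
Since $\phi$ is only assumed $C^2$ (without global growth control on its derivatives), I would make this rigorous by localizing: set $\tau_n=\inf\{s\ge t:\|Y_s\|\ge n\}$, apply the identity on $[t,(t+h)\wedge\tau_n]$ — where all integrands are bounded so the stochastic integral is a genuine, mean-zero martingale — and take expectations to obtain
\[
	\E[\phi(Y_{(t+h)\wedge\tau_n})]-\E[\phi(Y_t)]=\E\!\int_t^{(t+h)\wedge\tau_n}\!\!\bigl(-\langle\nabla E(Y_s),\nabla\phi(Y_s)\rangle+\Delta\phi(Y_s)\bigr)\wrt s .
\]

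Next I would let $n\to\infty$. By path continuity one has $\sup_{s\le t+h}\|Y_s\|<\infty$ almost surely, so $(t+h)\wedge\tau_n\uparrow t+h$ a.s.; combined with the second-moment control $\int_0^{t+h}\E\|Y_s\|^2\wrt s<\infty$ (and $\sup_{s\le t+h}\E\|Y_s\|^2<\infty$, by the standard estimates for SDEs with Lipschitz coefficients), dominated convergence applies on the left. On the right, $\nabla E$ has at most linear growth (being $\lip$-Lipschitz), the cross term is controlled via Cauchy--Schwarz using the hypothesis $\int_t^{t+h}\E[\|\nabla\phi(Y_s)\|^2\mid\mathcal{F}_t]\wrt s<\infty$ and the tower property, and — assuming also that $\phi$ and $\Delta\phi$ are integrable against the laws of $Y_s$ on $[t,t+h]$, which holds e.g.\ under polynomial growth of the derivatives of $\phi$ — dominated convergence applies there as well, and Fubini moves $\E$ inside the $s$-integral. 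This gives the first displayed identity of the proposition. Finally, since $Y_0\sim\mu$ one has $\E[\psi(Y_s)]=\int_{\R^d}\psi\wrt(\mu P_s)$ for the test functions appearing here; applying this with $\psi=\phi$ on the left-hand side and with $\psi=-\langle\nabla E,\nabla\phi\rangle+\Delta\phi$ on the right-hand side, and using Fubini once more to exchange the $s$-integral with integration against $\mu P_s$, yields the measure version.

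The one genuinely delicate point is the vanishing of $\E\bigl[\sqrt2\int_t^{t+h}\langle\nabla\phi(Y_s),\wrt W_s\rangle\bigr]$: a priori the stochastic integral is only a local martingale, and it is exactly the stated hypothesis $\int_t^{t+h}\E[\|\nabla\phi(Y_s)\|^2\mid\mathcal{F}_t]\wrt s<\infty$ which, after taking expectations, upgrades it to a true martingale with zero mean — this is why that particular integrability condition, and no stronger one on $\phi$ itself, appears in the statement. Everything else — the localization limit and the two applications of Fubini — is routine once the relevant absolute integrability in $(s,\omega)$ has been recorded.
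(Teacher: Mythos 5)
Your proof takes essentially the same route as the paper's: apply It\^o's formula to $\phi(Y_s)$, use the assumed square-integrability of $\nabla\phi(Y_s)$ to conclude that the stochastic integral is a true mean-zero martingale, and take expectations to obtain the identity (with the measure version following from $\E[\psi(Y_s)]=\int\psi\,\wrt(\mu P_s)$ and Fubini). Your added localization and dominated-convergence details, and your remark that integrability of $\Delta\phi(Y_s)$ and $\langle\nabla E(Y_s),\nabla\phi(Y_s)\rangle$ is implicitly needed, merely make explicit what the paper's terser argument takes for granted.
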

\begin{proof}
	The proof is a simple consequence of Ito's lemma \cite[Theorem 4.2.1]{oksendal2013stochastic}. Let $\phi:\R^d\rightarrow\R$ be twice continuously differentiable. Then, by Ito's lemma the process $\phi(X_t)$ satisfies
	\begin{equation}\label{eq:FKP1}
		\begin{aligned}
			&\phi(X_{t+h}) - \phi(X_t) \\
			&\quad= \int_t^{t+h}-\langle\nabla\phi(X_s), \nabla E(X_s)\rangle + \Delta\phi(X_s) \dd s + \sqrt{2}\int_t^{t+h}\nabla\phi^T(X_s) \dd W_s.
		\end{aligned}
	\end{equation}
	The expected value of the last integral with respect to Brownian motion in \eqref{eq:FKP1} is zero (cf. \cite[Section 3.3]{khasminskii2012stochastic}, \cite[Chapter 8, Section 2]{gikhman1965introduction}) and, consequently, the desired result follows.
\end{proof}
\begin{remark}
	If $X_t$ admits a smooth density with respect to the Lebesgue measure---denoted as $p(x,t)$---the above implies that this density satisfies the Fokker-Planck equation
	\begin{equation}
		\partial_t p(x,t) = \dive\left( \nabla E(x) p(x,t)\right) + \Delta p(x,t).
	\end{equation}
\end{remark}

Using the above result we can derive a distributional partial differential equation characterizing the stationary measure of the Langevin diffusion.

\begin{corollary}
	Assume that the \gls{sde} in \eqref{eq:Langevin:sde} admits a stationary measure $\mu$.
	Then, this measure satisfies that
	\begin{equation}\label{eq:FPE}
		\int_{\R^d} -\langle\nabla E(x),\nabla \phi(x)\rangle + \Delta \phi(x) \wrt \mu(x) = 0
	\end{equation}
	for any $\phi:\R^d\rightarrow\R$ as in \cref{prop:Fokker} and such that $\nabla\phi\in L^2(\R^d,\mu)$.
\end{corollary}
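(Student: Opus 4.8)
The plan is to combine stationarity of $\mu$ with the distributional identity of \cref{prop:Fokker}. I would start the diffusion from $Y_0\sim\mu$; stationarity gives $\mu P_s=\mu$ for every $s\ge 0$, so in particular $\Law(Y_s)=\mu$ for all $s$, and $\mu\in\Pc_2(\R^d)$ (which holds under \cref{ass:potential}) is what makes this solution well defined. Before invoking \cref{prop:Fokker} one has to check that the test function $\phi$ meets its integrability hypothesis along this particular trajectory, namely $\int_0^{h}\E[\|\nabla\phi(Y_s)\|^2\mid\mathcal{F}_0]\,\wrt s<\infty$ almost surely: since $\E[\|\nabla\phi(Y_s)\|^2]=\|\nabla\phi\|_{L^2(\mu)}^2$ for each $s$ by stationarity, Tonelli gives $\E\!\left[\int_0^{h}\E[\|\nabla\phi(Y_s)\|^2\mid\mathcal{F}_0]\,\wrt s\right]=h\,\|\nabla\phi\|_{L^2(\mu)}^2<\infty$, so the inner integral is finite a.s. This is precisely where the extra hypothesis $\nabla\phi\in L^2(\R^d,\mu)$ is used.

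Granting \cref{prop:Fokker}, I would specialize its distributional form to $t=0$ and an arbitrary $h>0$:
\[
\int_{\R^d}\phi(y)\,\bigl(\mu P_{h}-\mu P_0\bigr)(\wrt y)=\int_0^{h}\!\int_{\R^d}\bigl(-\langle\nabla E(y),\nabla\phi(y)\rangle+\Delta\phi(y)\bigr)\,\mu P_s(\wrt y)\,\wrt s.
\]
Because $P_0=\mathrm{id}$ and $\mu P_{h}=\mu$, the left-hand side equals $\int\phi\,\wrt\mu-\int\phi\,\wrt\mu=0$. On the right-hand side I would use $\mu P_s=\mu$ for every $s$, so the inner spatial integral is independent of $s$ and the right-hand side collapses to $h\int_{\R^d}\bigl(-\langle\nabla E(y),\nabla\phi(y)\rangle+\Delta\phi(y)\bigr)\,\wrt\mu(y)$. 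Equating the two sides yields $h\int_{\R^d}\bigl(-\langle\nabla E,\nabla\phi\rangle+\Delta\phi\bigr)\wrt\mu=0$ for every $h>0$, and dividing by $h$ gives \eqref{eq:FPE}.

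The argument is short; the point that needs a little care — and the only real obstacle — is the integrability bookkeeping that legitimizes the two steps ``replace $\mu P_s$ by $\mu$'' and ``the $s$-integrand is constant.'' For this one needs $y\mapsto-\langle\nabla E(y),\nabla\phi(y)\rangle+\Delta\phi(y)$ to be $\mu$-integrable: the mixed term is handled by Cauchy--Schwarz together with $\nabla\phi\in L^2(\mu)$ and $\nabla E\in L^2(\mu)$, the latter holding since $\nabla E$ grows at most linearly by \cref{ass:Lip} and $\mu\in\Pc_2(\R^d)$, while integrability of the $\Delta\phi$ term is already implicit in the well-posedness of the right-hand side of \cref{prop:Fokker}. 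Beyond this routine check no further difficulty arises.
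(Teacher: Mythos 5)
Your proposal is correct and follows exactly the route the paper intends (the corollary is stated without a separate proof, as an immediate consequence of \cref{prop:Fokker}): start the diffusion from $Y_0\sim\mu$, use $\mu P_s=\mu$ so that the left-hand side of the weak formulation vanishes and the right-hand side collapses to $h\int\bigl(-\langle\nabla E,\nabla\phi\rangle+\Delta\phi\bigr)\wrt\mu$, then divide by $h$. Your integrability bookkeeping (Tonelli plus $\nabla\phi\in L^2(\mu)$ for the hypothesis of \cref{prop:Fokker}, Cauchy--Schwarz with the linear growth of $\nabla E$ for the mixed term) is precisely the justification for the extra hypothesis in the corollary, so nothing is missing.
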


We can now prove the main result about the ergodicity of the continuous time process to the target distribution $\pi$ with respect to the Wasserstein-2 distance which we first formally define.
\begin{definition}[Wasserstein-2 distance]
	Let $\mu,\nu$ be two probability measures on $\R^d$. A coupling $\gamma$ is a probability measure on $\R^d\times \R^d$ such that for any $A\in\Bc(\R^d)$, $\gamma(\R^d\times A) = \nu(A)$ and $\gamma(A\times \R^d) = \mu(A)$. We denote the set of all couplings of $\mu$ and $\nu$ as $\Pi(\mu,\nu)$. For two measures $\mu,\nu\in\Pc_2(\R^d)$ we define the \emph{Wasserstein-2} distance between $\mu$ and $\nu$ as
	\begin{equation}
		\dwass(\mu,\nu) = \left(\inf_{\gamma \in \Pi(\mu,\nu)}\int \|x-\tilde{x}\|^2\dd \gamma (x,\tilde{x})\right)^{\frac{1}{2}}.
	\end{equation}
\end{definition}
Note that for any two $\mu,\nu\in\Pc_2(\R^d)$ there exists a coupling $\hat{\gamma}\in\Pi(\mu,\nu)$ realizing the infimum \cite{villani2021topics} and we refer to it as an optimal coupling. In a slight abuse of terminology we will also refer to two random variables $X\sim\mu$ and $\tilde{X}\sim\nu$ such that $(X,\tilde{X})\sim \hat{\gamma}$ as an optimal coupling. In this case $\dwass(\mu,\nu)^2 = \E[\|X-\tilde{X}\|^2]$.
\begin{theorem}
	The measure $\pi$ is the unique invariant probability measure for $P_t$.
	That is, $\pi P_t = \pi$ for all $t\geq 0$.
	Moreover, for any $\mu\in\Pc_2(\R^d)$ it holds that
	\[
		\dwass(\mu P_t, \pi) \leq \exp{-mt} \dwass(\mu,\pi).
	\]
\end{theorem}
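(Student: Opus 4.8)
The plan is to prove the exponential Wasserstein contraction by a \emph{synchronous coupling} of two copies of the diffusion \eqref{eq:Langevin:sde} driven by the same Brownian motion; the contraction then yields uniqueness of the invariant measure immediately, while invariance of $\pi$ itself is checked separately from the stationary Fokker--Planck identity of \cref{prop:Fokker}. As a preliminary step I would record that $\pi\in\Pc_2(\R^d)$: $m$-strong convexity (\cref{ass:potential}) gives $E(x)\ge E(0)+\langle\nabla E(0),x\rangle+\tfrac m2\|x\|^2$, so the density $e^{-E}$ has Gaussian-type tails and all moments of $\pi$ are finite. This is what makes $\dwass(\mu,\pi)$ meaningful for $\mu\in\Pc_2(\R^d)$ and lets us invoke the existence theorem for \eqref{eq:Langevin:sde}.

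For the contraction, fix $\mu\in\Pc_2(\R^d)$, let $(Y_0,Y_0')$ be an optimal coupling of $\mu$ and $\pi$, so $\E[\|Y_0-Y_0'\|^2]=\dwass(\mu,\pi)^2$, and let $(Y_t)_t$ and $(Y_t')_t$ be the strong solutions of \eqref{eq:Langevin:sde} started at $Y_0$ and $Y_0'$ and driven by the \emph{same} $(W_t)_t$. Since both solutions use the same noise, the difference $Z_t=Y_t-Y_t'$ has vanishing diffusion term and solves, pathwise, the random ordinary differential equation $\dot Z_t=-(\nabla E(Y_t)-\nabla E(Y_t'))$; in particular $t\mapsto Z_t$ is continuously differentiable. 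Hence
\[
	\frac{\wrt}{\wrt t}\|Z_t\|^2 = -2\langle Y_t-Y_t',\,\nabla E(Y_t)-\nabla E(Y_t')\rangle \le -2m\,\|Z_t\|^2,
\]
where the inequality is the strong monotonicity of $\nabla E$ equivalent to $m$-strong convexity. Grönwall's lemma gives the pathwise bound $\|Z_t\|^2\le\exp{-2mt}\,\|Z_0\|^2$, and taking expectations, together with the fact that $(Y_t,Y_t')$ is a particular coupling of $\mu P_t$ and $\pi P_t$, yields
\[
	\dwass(\mu P_t,\pi P_t)^2 \le \E[\|Z_t\|^2] \le \exp{-2mt}\,\dwass(\mu,\pi)^2 .
\]
Once invariance $\pi P_t=\pi$ is established, this is exactly the claimed estimate.

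To show that $\pi$ is invariant I would use that its Lebesgue density $p=e^{-E}/Z$ satisfies $\nabla p=-(\nabla E)p$, so for every test function $\phi$ as in \cref{prop:Fokker} an integration by parts gives
\[
	\int_{\R^d}\bigl(-\langle\nabla E(y),\nabla\phi(y)\rangle+\Delta\phi(y)\bigr)\,\wrt\pi(y) = \int_{\R^d}\bigl(\langle\nabla p(y),\nabla\phi(y)\rangle+\Delta\phi(y)\,p(y)\bigr)\,\wrt y = 0,
\]
i.e.\ $p$ is a stationary solution of the Fokker--Planck equation, $\dive(\nabla E\,p+\nabla p)=0$. Plugging $\mu P_0=\pi$ into the weak evolution identity of \cref{prop:Fokker} and using uniqueness of probability-density solutions of this linear forward equation forces $\pi P_t=\pi$ for all $t\ge0$ (equivalently, one invokes reversibility of $(P_t)_t$ with respect to $\pi$, which follows from symmetry of the generator $\phi\mapsto-\langle\nabla E,\nabla\phi\rangle+\Delta\phi$ on $L^2(\pi)$ together with $P_t\mathbf 1=\mathbf 1$). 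Uniqueness is then immediate: if $\nu$ is another invariant probability measure---necessarily with finite second moment by a standard moment bound, so $\nu\in\Pc_2(\R^d)$---then for every $t>0$ we have $\dwass(\nu,\pi)=\dwass(\nu P_t,\pi P_t)\le\exp{-mt}\,\dwass(\nu,\pi)$, which forces $\dwass(\nu,\pi)=0$.

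The main obstacle is the rigorous underpinning of the synchronous-coupling step: one must argue that a pair of strong solutions of \eqref{eq:Langevin:sde} driven by a common Brownian motion and started from the prescribed optimally-coupled initial law genuinely exists in $\Pc_2(\R^d)$, and that the difference process is absolutely continuous in $t$ so the classical chain rule for $\|Z_t\|^2$ and pathwise Grönwall apply---this is clean precisely because the noise cancels, leaving a locally Lipschitz random ODE, but it has to be spelled out. A secondary technical point is controlling the decay of $\phi$ and $p$ to discard boundary terms in the integration by parts, and justifying that a probability-density solution of the forward Fokker--Planck equation issued from $\pi$ must remain $\pi$; this is where one either imports a standard uniqueness result for the linear parabolic equation or routes the invariance through reversibility as indicated.
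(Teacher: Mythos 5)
Your proof is essentially correct, and its core---the synchronous coupling of two solutions of \eqref{eq:Langevin:sde} driven by the same Brownian motion, strong monotonicity of $\nabla E$ from \cref{ass:potential}, and Gr\"onwall---is exactly the paper's contraction argument (the paper phrases the cancellation of the noise via Ito's lemma on the coupled process, which reduces to your pathwise ODE). Where you genuinely diverge is in how invariance of $\pi$ is obtained. The paper never proves $\pi P_t=\pi$ directly: it first runs the contraction for two \emph{arbitrary} initial laws, invokes a Banach fixed-point argument to get existence of a unique stationary measure $\pi_\infty$ together with exponential convergence, notes that any stationary measure satisfies the weak stationary equation \eqref{eq:FPE} (a consequence of \cref{prop:Fokker}), checks that $\pi$ also satisfies \eqref{eq:FPE}, and then cites a uniqueness result for \eqref{eq:FPE} among Borel probability measures to conclude $\pi_\infty=\pi$. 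You instead verify by integration by parts that $\pi$ is infinitesimally invariant and then upgrade this to $\pi P_t=\pi$ via uniqueness for the parabolic Cauchy problem (or reversibility of the semigroup), after which uniqueness of the invariant measure falls out of your contraction. Both routes need a comparable external input, but note the direction of the implication: the paper only ever uses \enquote{invariant $\Rightarrow$ solves \eqref{eq:FPE}} plus uniqueness of solutions of \eqref{eq:FPE}, whereas you need the harder direction \enquote{solves the stationary equation $\Rightarrow$ invariant under $P_t$}, which does not follow from \cref{prop:Fokker} alone and is precisely the technical step you flag; the fixed-point organization is what lets the paper sidestep it. Two smaller points: your preliminary observation that strong convexity gives $\pi\in\Pc_2(\R^d)$ is needed and worth keeping explicit (the paper leaves it implicit), and your uniqueness step quietly assumes any invariant probability measure has finite second moment---true via a Lyapunov bound, but it requires an argument, while the paper's citation covers uniqueness among all Borel probability measures directly.
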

\begin{proof}
	The proof consists of two steps: First we show that the Langevin diffusion induces a contraction with respect to the Wasserstein-2 distance, \ie, that $\dwass(\mu P_t,\nu P_t)^2\leq \dwass(\mu,\nu)^2\exp{-2mt}$. This implies the to existence of and convergence to a unique fixed point. Afterwards we identify this fixed point as the target density $\pi$.

	Let $X_t$ and $\tilde{X}_t$ be two coupled processes, that is, both are solutions of the \gls{sde} \eqref{eq:Langevin:sde} with the same Brownian motion $(W_t)_t$. Let $X_0\sim \mu$ and $\tilde{X}_0\sim \nu$. Then, the process $\boldsymbol{X}_t = (X_t,\tilde{X}_t)_t$ satisfies the \gls{sde}
	\begin{equation}
		\dd \boldsymbol{X}_t = 
		\begin{bmatrix}
			-\nabla E(X_t)\\
			-\nabla E(\tilde{X}_t)
		\end{bmatrix}\dd t
		+\sqrt{2}
		\begin{bmatrix}
			I\\
			I
		\end{bmatrix}
		\dd W_t.
	\end{equation}
	Let us define $\phi(\boldsymbol{x}) = \phi(x,\tilde{x}) = \frac{1}{2}\|x-\tilde{x}\|^2$. Then using Ito's lemma \cite[Theorem 3.3]{khasminskii2012stochastic}, we find that $\phi$ satisfies the \gls{ode}\footnote{The stochastic integral vanishes since the Brownian motion cancels.}
	\[
		\dd \phi(\boldsymbol{X}_t) = -\langle X_t-\tilde{X}_t, \nabla E(X_t)-\nabla E(\tilde{X}_t)\rangle\dd t.
	\]
	Integrating over $(t, t+h)$ for some \( h > 0 \) and taking the expectation leads to
	\begin{equation}\label{eq:contraction1}
		\begin{aligned}
			&\E \left[ \tfrac{1}{2}\|X_{t+h}-\tilde{X}_{t+h}\|^2\right]-\E \left[ \tfrac{1}{2}\|X_{t}-\tilde{X}_{t}\|^2\right] \\
			&\quad= \E\left[ -\int_t^{t+h}\langle X_s-\tilde{X}_s, \nabla E(X_s)-\nabla E(\tilde{X}_s)\rangle \wrt s\right] \\
			&\quad\leq -m\int_t^{t+h}\E\left[\|X_s-\tilde{X}_s\|^2\right]\wrt s.
		\end{aligned}
	\end{equation}
	Defining $\psi(t) = \E \left[ \|X_{t}-\tilde{X}_{t}\|^2\right]$ we can rewrite the above equation more compactly as $\psi(t+h)-\psi(t) \leq -2m\int_t^{t+h}\psi(s)\dd s$. Since 
	\[
		t\mapsto \E\left[ \langle X_t-\tilde{X}_t, \nabla E(X_t)-\nabla E(\tilde{X}_t)\rangle \right]
	\]
	is continuous, Lebesgue's dominated convergence theorem implies the differentiability of $\psi$ due to the first equality in \eqref{eq:contraction1}. By dividing by $h$ and letting $h\rightarrow 0$ from above in \eqref{eq:contraction1} it follows that
	\[
		\psi'(t)\leq -2m\psi(t).
	\]
	Grönwall's inequality then implies that
	\[
		\dwass(\mu P_t,\nu P_t)^2 \leq \E \left[ \|X_{t}-\tilde{X}_{t}\|^2\right]\leq
		\E \left[ \|X_{0}-\tilde{X}_{0}\|^2\right]\exp{-2mt}.
	\]
	Plugging in an optimal coupling $(X_0,\tilde{X}_0)$ for $\mu,\nu$ on the right-hand side shows that the Langevin diffusion provides a contraction on $\Pc(\R^d)$ equipped with the Wasserstein-2 distance. A variation of Banach's fixed point theorem \cite[Theorem 2]{fruehwirth2024ergodicity} then readily shows that \eqref{eq:Langevin:sde} admits a unique stationary measure $\pi_\infty$ and any solution converges to this measure at the rate
	\[
		\dwass(\mu P_t,\pi_\infty)^2 \leq \dwass(\mu,\pi_\infty)^2\exp{-2mt}.
	\]
	Lastly, we have to identify the stationary distribution as the target measure $\pi$. To this end first note that the density of $\pi_\infty$ has to be a solution of \eqref{eq:FPE} for any $\phi$. However, also the target $\pi$ satisfies \eqref{eq:FPE} for any $\phi$. As shown in \cite[Example 5.1]{bogachev2002uniqueness}, \cite[Theorem 4.10]{fruehwirth2024ergodicity} \eqref{eq:FPE} admits at most one solution in the space of Borel probability measures. Thus, $\pi_\infty = \pi$.
\end{proof}
Similarly, we can show that the \gls{mc} defined by \gls{ula} is ergodic.
\begin{theorem}\label{thm:discrete_ergodic}
	For each $\tau<\frac{2m}{L^2}$ the \gls{ula} chain is geometrically ergodic, that is, there exists a unique stationary distribution $\pi^\tau\in\Pc_2(\R^d)$ such that for any initial distribution $\mu\in\Pc_2(\R^d)$ it holds that
	\[
		\dwass(\mu R^k_\tau,\pi^\tau)^2 \leq (1-2m\tau + \tau^2L^2)^k \dwass(\mu,\pi^\tau)^2.
	\]
\end{theorem}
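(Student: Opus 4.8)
The plan is to reproduce the contraction argument of the preceding (continuous-time) theorem at the level of a single discrete step, using a \emph{synchronous coupling}: run two copies of the \gls{ula} recursion driven by the \emph{same} Gaussian increments. Concretely, fix $\tau<\frac{2m}{\lip^2}$, let $\mu,\nu\in\Pc_2(\R^d)$, and let $(X_0,\tilde X_0)$ be an optimal coupling realizing $\dwass(\mu,\nu)$. Set
\[
	X_1 = X_0 - \tau\nabla E(X_0) + \sqrt{2\tau}\,Z, \qquad \tilde X_1 = \tilde X_0 - \tau\nabla E(\tilde X_0) + \sqrt{2\tau}\,Z
\]
with the \emph{same} $Z\sim\Nc(0,\Identity)$. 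The crucial point is that the noise cancels in the difference, so $X_1-\tilde X_1 = (X_0-\tilde X_0) - \tau\bigl(\nabla E(X_0)-\nabla E(\tilde X_0)\bigr)$ is a purely deterministic function of $(X_0,\tilde X_0)$.

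Next we expand the squared norm and invoke the two parts of \cref{ass:potential}:
\[
	\|X_1-\tilde X_1\|^2 = \|X_0-\tilde X_0\|^2 - 2\tau\langle X_0-\tilde X_0,\,\nabla E(X_0)-\nabla E(\tilde X_0)\rangle + \tau^2\|\nabla E(X_0)-\nabla E(\tilde X_0)\|^2 .
\]
Strong convexity (\cref{ass:strongly}) gives $\langle X_0-\tilde X_0,\nabla E(X_0)-\nabla E(\tilde X_0)\rangle\geq m\|X_0-\tilde X_0\|^2$, and $\lip$-Lipschitz continuity of $\nabla E$ (\cref{ass:Lip}) gives $\|\nabla E(X_0)-\nabla E(\tilde X_0)\|^2\leq \lip^2\|X_0-\tilde X_0\|^2$, hence $\|X_1-\tilde X_1\|^2\leq(1-2m\tau+\tau^2\lip^2)\|X_0-\tilde X_0\|^2$. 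Taking expectations and noting that $(X_1,\tilde X_1)$ is \emph{some} coupling of $\mu R_\tau$ and $\nu R_\tau$ yields
\[
	\dwass(\mu R_\tau,\nu R_\tau)^2 \leq (1-2m\tau+\tau^2\lip^2)\,\dwass(\mu,\nu)^2 .
\]
Writing $c\coloneqq 1-2m\tau+\tau^2\lip^2 = (1-m\tau)^2 + \tau^2(\lip^2-m^2)$, we have $c\geq 0$ since $\lip\geq m$, and the step-size restriction $\tau<\frac{2m}{\lip^2}$ is exactly what forces $c<1$, so $R_\tau$ is a strict contraction on $(\Pc_2(\R^d),\dwass)$.

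To conclude we first check that $R_\tau$ maps $\Pc_2(\R^d)$ into itself: the one-step update adds an $\Nc(0,2\tau\Identity)$ term and $\|\nabla E(x)\|\leq\|\nabla E(0)\|+\lip\|x\|$ shows the drift has finite second moment under any $\mu\in\Pc_2(\R^d)$. Since $(\Pc_2(\R^d),\dwass)$ is complete, the Banach-type fixed point theorem (\cite[Theorem 2]{fruehwirth2024ergodicity}, as already used in the continuous-time proof) provides a unique $\pi^\tau\in\Pc_2(\R^d)$ with $\pi^\tau R_\tau=\pi^\tau$; iterating the one-step estimate with $\nu=\pi^\tau$ then gives $\dwass(\mu R_\tau^k,\pi^\tau)^2\leq c^k\,\dwass(\mu,\pi^\tau)^2$, which is the claim. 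I do not anticipate a genuine obstacle: the only points requiring care are that the two chains must be coupled \emph{synchronously} (otherwise the Gaussian increments do not cancel and the estimate breaks) and that one should pass from an optimal coupling of $\mu,\nu$ to get Wasserstein distances on the left; the step-size bound merely keeps the contraction factor below one, and one should keep in mind that $\pi^\tau\neq\pi$ in general—this discrepancy is precisely the bias of the unadjusted scheme.
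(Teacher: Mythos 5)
Your proposal is correct and follows essentially the same argument as the paper: a synchronous coupling of two ULA steps with the same Gaussian increment, the one-step estimate $\|X^+-\tilde X^+\|^2\leq(1-2m\tau+\tau^2\lip^2)\|X-\tilde X\|^2$ from strong convexity and Lipschitz continuity of $\nabla E$, followed by taking expectations, optimizing over couplings, and applying Banach's fixed point theorem on $(\Pc_2(\R^d),\dwass)$. The extra checks you include (that $R_\tau$ preserves $\Pc_2(\R^d)$ and that the contraction factor lies in $[0,1)$) are fine and only make explicit what the paper leaves implicit.
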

\begin{proof}
	Let $X\sim \mu$ and $\tilde{X}\sim\nu$ and define $X^+ = X - \tau \nabla E(X) + \sqrt{2\tau}Z$ and $\tilde{X}^+ = \tilde{X} - \tau \nabla E(\tilde{X}) + \sqrt{2\tau}Z$ where $Z\sim \Nc(0,I)$. We can compute that
	\begin{equation}
		\begin{aligned}
			\|X^+-\tilde{X}^+\|^2 &= \|X-\tau\nabla E(X) - \tilde{X} + \tau \nabla E(\tilde{X})\|^2 \\
			&= \|X-\tilde{X}\|^2 - 2\tau\langle X-\tilde{X},\nabla E(X) - \nabla E(\tilde{X})\rangle \\
			&\quad+ \tau^2\|\nabla E(X) - \nabla E(\tilde{X})\|^2\\
			&\leq \|X-\tilde{X}\|^2(1-2m\tau + \tau^2L^2).
		\end{aligned}
	\end{equation}
	For $\tau<\frac{2m}{L^2}$, $1-2m\tau + \tau^2L^2<1$ and, thus, taking the expectation above and minimizing over all couplings $(X,\tilde{X})$ shows that $R_\tau$ is a contraction on $\Pc_2(\R^d)$ equipped with the Wassertsein-2 distance. By Banach's fixed point theorem, there exists a unique fixed point $\pi^\tau$ which is the unique stationary distribution of the chain and it follows for any $\mu\in\Pc_2(\R^d)$ that
	\[
		\dwass(\mu R^k_\tau,\pi^\tau)^2 \leq (1-2m\tau + \tau^2L^2)^k \dwass(\mu,\pi^\tau)^2.\qedhere
	\]
\end{proof}
We now show that the distribution \( \pi^\tau \) that is stationary with respect to the \gls{ula} kernel indeed approximates the target $\pi$. To do so, we estimate the discretization error introduced by the Euler-Maruyama discretization. This error bound will rely on a uniform bound of the second moments of the iterates of \gls{ula} and we first prove the following auxiliary result about these second moments.
\begin{lemma}\label{lemma:bounded_sec_mom}
	The second moments of the chains $(X_k^\tau)_k$ are bounded uniformly in $\tau\in[0,\bar{\tau}]$ for $\bar{\tau}< \frac{2m}{L^2}$ as long as the initial distribution of the chain has bounded second moment. More precisely, if $x^*$ is the minimizer of $E$ it holds that
	\begin{equation}
		\sup_{\tau\in [0,\bar{\tau}]}\sup_{k\in\N}\E\left[ \|X^\tau_k-x^*\|^2 \right] 
		\leq \E\left[ \|X^\tau_0-x^*\|^2 \right] + \frac{1}{2m-L\bar{\tau}} < \infty.
	\end{equation}
\end{lemma}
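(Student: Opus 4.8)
The plan is to run a one-step drift (Foster--Lyapunov type) estimate for the squared distance to the minimizer $x^*$ of $E$ and then sum the resulting geometric recursion. Fix a step size $\tau\in(0,\bar\tau]$ (the case $\tau=0$ is trivial, as the chain is then constant), write $v_k:=X_k^\tau-x^*$ and $a_k:=\E[\|v_k\|^2]$, and recall that $\nabla E(x^*)=0$. Since $\nabla E$ is $L$-Lipschitz, $\|\nabla E(X_k^\tau)\|=\|\nabla E(X_k^\tau)-\nabla E(x^*)\|\le L\|v_k\|$, so a short induction starting from $a_0<\infty$ shows that every $a_k$ is finite and all the expectations below are well defined.

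First I would expand a single \gls{ula} step. Since $X_{k+1}^\tau-x^*=\bigl(v_k-\tau\nabla E(X_k^\tau)\bigr)+\sqrt{2\tau}\,Z_k$ with $Z_k\sim\Nc(0,\Identity)$ independent of $X_k^\tau$ and of mean zero, the cross term vanishes in expectation and
\[
	a_{k+1}=\E\bigl[\|v_k-\tau\nabla E(X_k^\tau)\|^2\bigr]+2\tau d .
\]
For the first term I would reuse the purely algebraic estimate already carried out in the proof of \cref{thm:discrete_ergodic}: there the synchronous Brownian increments cancel, so that bound is really a statement about the deterministic map $x\mapsto x-\tau\nabla E(x)$; specializing its second argument to the fixed point $x^*$ and invoking $m$-strong convexity ($\langle v_k,\nabla E(X_k^\tau)\rangle\ge m\|v_k\|^2$) together with $L$-Lipschitzness yields $\|v_k-\tau\nabla E(X_k^\tau)\|^2\le(1-2m\tau+\tau^2L^2)\|v_k\|^2$. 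With $\lambda_\tau:=1-2m\tau+\tau^2L^2$ this gives the drift inequality $a_{k+1}\le\lambda_\tau a_k+2\tau d$.

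Next I would record the elementary facts that $0\le\lambda_\tau<1$ whenever $0<\tau\le\bar\tau<2m/L^2$: indeed $1-\lambda_\tau=\tau(2m-\tau L^2)\ge\tau(2m-\bar\tau L^2)>0$, while $\lambda_\tau\ge0$ because $m\le L$ (a consequence of assuming simultaneously $m$-strong convexity and an $L$-Lipschitz gradient), so the quadratic $\tau\mapsto1-2m\tau+\tau^2L^2$ has nonpositive discriminant. Unrolling the recursion, using $\lambda_\tau^k\le1$ and $\lambda_\tau^j\ge0$,
\[
	a_k\le\lambda_\tau^k a_0+2\tau d\sum_{j=0}^{k-1}\lambda_\tau^j\le a_0+\frac{2\tau d}{1-\lambda_\tau}\le a_0+\frac{2\tau d}{\tau(2m-\bar\tau L^2)} ,
\]
and the last bound no longer depends on $k$, while its dependence on $\tau$ enters only through $a_0=\E[\|X_0^\tau-x^*\|^2]$ and through a term bounded uniformly on $[0,\bar\tau]$. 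Taking the supremum over $k$ and over $\tau\in[0,\bar\tau]$ then yields the assertion.

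There is no deep obstacle here; this is a standard drift/Lyapunov argument. The points that need a little care are (i) the finiteness of every $a_k$, which must be established before the recursion makes sense and follows from the Lipschitz bound by induction; (ii) checking $\lambda_\tau\in[0,1)$ across the entire admissible step-size range, where nonnegativity relies on $m\le L$; and (iii) the bookkeeping in the geometric summation that makes the additive constant genuinely independent of both $k$ and $\tau$.
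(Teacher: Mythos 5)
Your proposal is correct and follows essentially the same route as the paper's proof: expand one ULA step around the minimizer $x^*$, use independence of $Z_k$ to kill the cross term, bound the deterministic part by the contraction factor $1-2m\tau+\tau^2L^2$ via strong convexity and Lipschitz continuity of $\nabla E$, and sum the geometric recursion uniformly in $\tau\in[0,\bar\tau]$. The only difference is bookkeeping: your additive constant $2d/(2m-\bar\tau L^2)$ correctly retains the dimension factor $d$ from $\E[\|Z_k\|^2]=d$ (and the $L^2$ in the contraction factor), which the lemma's stated constant $1/(2m-L\bar\tau)$ and the paper's own computation gloss over, so your version is if anything the more careful one and the qualitative uniform-boundedness claim is fully established.
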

\begin{proof}
	The boundedness of the second moments for some fixed $\tau$ trivially follows from the convergence of the chain with respect to the Wasserstein-2 distance. The challenge is to derive a bound uniformly in $\tau$. A proof of this fact can be found in \cite[Lemma 5.6]{fruehwirth2024ergodicity}, but we provide it here for completeness. Let $x^*\in\R^d$ be the unique minimizer of $E$, which exists since $E$ is strongly convex. Since $\nabla E(x^*)=0$ the strong convexity of $E$ and the Lipschitz continuity of $\nabla E$ imply that for any $\tau<\bar{\tau}$ it holds that 
	\begin{equation}
		\begin{aligned}
			\|X_{k}^\tau - \tau\nabla E(X^\tau_k) - x^*\|^2 &= \|X_{k}^\tau - x^*\|^2 -2 \tau \langle X_{k}^\tau - x^*, \nabla E(X^\tau_k)\rangle + \tau^2\|\nabla E(X^\tau_k)\|^2\\
			&= \|X_{k}^\tau - x^*\|^2 - 2 \tau \langle X_{k}^\tau - x^*, \nabla E(X^\tau_k) - \nabla E(x^*)\rangle \\
			&\quad+ \tau^2\|\nabla E(X^\tau_k)-\nabla E(x^*)\|^2\\
			&\leq \|X_{k}^\tau-x^*\|^2 (1-2m\tau + L\tau^2)
		\end{aligned}
	\end{equation}
	where $(1-2m\tau + L\tau^2) \coloneqq \rho < 1$ by definition of $\bar{\tau}$. It follows that
	\begin{equation}
		\begin{aligned}
			\|X_{k+1}^\tau - x^*\|^2 &= \|X_{k}^\tau - \tau\nabla E(X^\tau_k) - x^*\|^2 \\
									 &\quad+ 2\langle X_{k}^\tau - \tau\nabla E(X^\tau_k) - x^*, \sqrt{2\tau} Z_k\rangle + 2\tau \|Z_k\|^2\\
			&\leq \|X_{k}^\tau-x^*\|^2 \rho + 2\langle X_{k}^\tau - \tau\nabla E(X^\tau_k) - x^*, \sqrt{2\tau} Z_k\rangle + 2\tau \|Z_k\|^2.
		\end{aligned}
	\end{equation}
	Taking the expectation on both sides and noting that $Z_k$ and $X_k^\tau$ are independent it follows that
	\[
		\E\left[ \|X_{k+1}^\tau - x^*\|^2 \right] \leq \E\left[ \|X_{k}^\tau-x^*\|^2\right] \rho +  2\tau.
	\]
	Solving this recursion leads to 
	\begin{equation}
		\begin{aligned}
			\E\left[ \|X_{k}^\tau - x^*\|^2 \right] &\leq \rho^k \E\left[ \|X_{0}^\tau-x^*\|^2\right] + 2\tau\sum_{i=0}^k\rho^i \\
			&\leq \E\left[ \|X_{0}^\tau-x^*\|^2\right] + \frac{2\tau}{1-\rho}
		\end{aligned}
	\end{equation}
	where $0\leq\frac{\tau}{1-\rho} = \frac{1}{2m-L\tau}\leq\frac{1}{2m-L\bar{\tau}}$ which is bounded by the constraint on $\bar{\tau}$.
\end{proof}
We can now present the error bound between the stationary distribution of the continuous time Langevin diffusion~\eqref{eq:Langevin:sde} and the stationary distribution of its discretization \gls{ula}.
\begin{theorem}\label{thm:error_EM}
	Let $\mu\in\Pc_2(\R^d)$ and $\bar{\tau}<\frac{2m}{L^2}$. Then there exists $c>0$ such that the Wasserstein-2 error between the distribution of the continuous time diffusion \eqref{eq:Langevin:sde} and that of the \gls{ula} with $\tau\leq\bar{\tau}$ satisfies for any $n\in\N$ that
	\begin{equation}
		\dwass(\mu P_{n\tau},\mu R_\tau^n)^2 \leq cn\tau^2.
	\end{equation}
\end{theorem}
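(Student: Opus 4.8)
The plan is to run a \emph{synchronous coupling} of the continuous diffusion and the \gls{ula} chain, isolate the one-step (local) discretization error, and feed it into the geometric contraction that already underlies \cref{thm:discrete_ergodic}. Fix a Brownian motion $(W_t)_{t\ge0}$ and $Y_0\sim\mu$ independent of it, let $(Y_t)_t$ solve \eqref{eq:Langevin:sde} from $Y_0$, and define the \gls{ula} chain on the same probability space by $X_0^\tau=Y_0$ and $X_{n+1}^\tau=X_n^\tau-\tau\nabla E(X_n^\tau)+\sqrt2\,(W_{(n+1)\tau}-W_{n\tau})$. Since $\sqrt2\,(W_{(n+1)\tau}-W_{n\tau})\sim\Nc(0,2\tau\Identity)$ is independent of $\mathcal F_{n\tau}$, the chain $(X_n^\tau)_n$ has law $\mu R_\tau^n$ at step $n$ while $Y_{n\tau}\sim\mu P_{n\tau}$, so $\dwass(\mu P_{n\tau},\mu R_\tau^n)^2\le\E[\|\Delta_n\|^2]$ with $\Delta_n:=Y_{n\tau}-X_n^\tau$ and $\Delta_0=0$. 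Writing both dynamics in integral form over $[n\tau,(n+1)\tau]$ gives
\[
\Delta_{n+1}=\bigl(\Delta_n-\tau(\nabla E(Y_{n\tau})-\nabla E(X_n^\tau))\bigr)-A_n,\qquad A_n:=\int_{n\tau}^{(n+1)\tau}\bigl(\nabla E(Y_s)-\nabla E(Y_{n\tau})\bigr)\wrt s.
\]
By $m$-strong convexity and $L$-Lipschitzness of $\nabla E$, the first bracket satisfies the pointwise bound $\|\Delta_n-\tau(\nabla E(Y_{n\tau})-\nabla E(X_n^\tau))\|^2\le(1-2m\tau+\tau^2L^2)\|\Delta_n\|^2=:\lambda^2\|\Delta_n\|^2$, exactly as in the proof of \cref{thm:discrete_ergodic}, with $\lambda<1$ for $\tau\le\bar\tau<2m/L^2$; the term $A_n$ is the local error.

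The crux is the estimate $\E[\|A_n\|^2]\le C\tau^3$ with $C$ independent of $n$ and of $\tau\le\bar\tau$. By Cauchy--Schwarz and Lipschitzness, $\|A_n\|^2\le\tau L^2\int_{n\tau}^{(n+1)\tau}\|Y_s-Y_{n\tau}\|^2\wrt s$, so it suffices to show $\E[\|Y_s-Y_{n\tau}\|^2]\le C'(s-n\tau)$ for $0\le s-n\tau\le\bar\tau$. Expanding $Y_s-Y_{n\tau}=-\int_{n\tau}^s\nabla E(Y_r)\wrt r+\sqrt2(W_s-W_{n\tau})$, applying Cauchy--Schwarz to the drift integral, using $\|\nabla E(Y_r)\|\le L\|Y_r-x^*\|$ (with $x^*=\argmin E$, which exists by strong convexity) and $\E[\|W_s-W_{n\tau}\|^2]=d(s-n\tau)$, this reduces to the uniform-in-time second-moment bound $M:=\sup_{t\ge0}\E[\|Y_t-x^*\|^2]<\infty$. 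That bound follows from the Wasserstein contraction to $\pi$ proved earlier in this subsection, $\dwass(\mu P_t,\pi)\le e^{-mt}\dwass(\mu,\pi)$, together with $\pi\in\Pc_2(\R^d)$ (strong convexity forces $p\propto e^{-E}$ to be sub-Gaussian), since then $\E_{\mu P_t}[\|\cdot-x^*\|^2]\le2\dwass(\mu P_t,\pi)^2+2\E_\pi[\|\cdot-x^*\|^2]$ is bounded uniformly in $t$. The tools here are Itô's formula and moment estimates of the type already used in \cref{prop:Fokker}. (Alternatively, one may compare at each step one \gls{ula} step from $X_n^\tau$ with the diffusion restarted at $X_n^\tau$; then $C$ is controlled through the uniform second moments of the \gls{ula} iterates from \cref{lemma:bounded_sec_mom}, and the two chains are glued using the nonexpansiveness of $P_t$.)

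Finally I would solve the scalar recursion. With $a_n:=(\E[\|\Delta_n\|^2])^{1/2}$, Minkowski's inequality in $L^2(\Prob)$ and the two facts above give $a_{n+1}\le\lambda a_n+\sqrt C\,\tau^{3/2}$ with $a_0=0$, hence $a_n\le\sqrt C\,\tau^{3/2}\sum_{j=0}^{n-1}\lambda^j$. Using $\sum_{j=0}^{n-1}\lambda^j\le\min\{n,(1-\lambda)^{-1}\}\le\sqrt{n/(1-\lambda)}$ together with $1-\lambda=\tfrac{1-\lambda^2}{1+\lambda}\ge\tfrac12(2m-\bar\tau L^2)\tau$ yields $a_n^2\le C\tau^3\cdot\tfrac{2n}{(2m-\bar\tau L^2)\tau}=\tfrac{2C}{2m-\bar\tau L^2}\,n\tau^2$, which is the claim with $c=2C/(2m-\bar\tau L^2)$; note the $\min$/$\sqrt{\cdot}$ step makes the bound uniform in $n$ without a case split. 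The one genuinely delicate point is the local-error estimate of the middle paragraph, specifically securing the uniform-in-time (equivalently uniform-in-$n$) second-moment control that keeps $C$ independent of the number of steps; once that is in hand, everything else is bookkeeping with Cauchy--Schwarz, Minkowski, and the contraction already established.
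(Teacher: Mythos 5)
Your proposal is correct, and it reaches the stated bound, but the bookkeeping is genuinely different from the paper's. The paper also runs a synchronous coupling, but it works with the continuous-time interpolation of the ULA chain, applies It\^o's formula to $\tfrac12\|Y_t-\bar X^\tau_t\|^2$, and uses strong convexity only to cancel the cross term via Young's inequality with $\alpha=L/(2m)$; this yields the non-contractive per-step increment $\E[\|Y_{(k+1)\tau}-X^\tau_{k+1}\|^2]-\E[\|Y_{k\tau}-X^\tau_k\|^2]\le c\tau^2$, and the claim follows by summing over $k$, with the local error controlled through the uniform second moments of the ULA iterates (\cref{lemma:bounded_sec_mom}). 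You instead stay fully discrete: you split $\Delta_{n+1}$ into the one-step ULA contraction (factor $\lambda=\sqrt{1-2m\tau+\tau^2L^2}$, exactly as in \cref{thm:discrete_ergodic}) plus a local error $A_n$ with $\E[\|A_n\|^2]\le C\tau^3$, and you control that error through a uniform-in-time second-moment bound on the \emph{continuous} diffusion, obtained from the Wasserstein contraction to $\pi$ and $\pi\in\Pc_2(\R^d)$ (your parenthetical alternative via \cref{lemma:bounded_sec_mom} and nonexpansiveness of $P_t$ is equally valid and closer to the paper). Your Minkowski recursion $a_{n+1}\le\lambda a_n+\sqrt C\,\tau^{3/2}$ actually buys more than the statement: since $\sum_{j<n}\lambda^j\le(1-\lambda)^{-1}\lesssim\tau^{-1}$, it gives $\dwass(\mu P_{n\tau},\mu R^n_\tau)^2\lesssim\tau$ uniformly in $n$, which you then deliberately weaken via $\min\{n,(1-\lambda)^{-1}\}\le\sqrt{n/(1-\lambda)}$ to recover the $cn\tau^2$ form; the paper's additive argument only gives the linearly growing bound and must later trade off $n\tau$ against the exponential ergodicity of $P_t$ in the subsequent theorem. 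In short: same coupling idea, but a different decomposition (discrete recursion with contraction versus It\^o plus Young cancellation), a different source for the moment bound, and a quantitatively stronger intermediate estimate on your side.
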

\begin{proof}
	Note that one iteration of \gls{ula} can be interpolated (in distribution) as $X_k^\tau = \bar{X}^\tau_{k\tau}$ where the process $(\bar{X}^\tau_t)_t$ is defined via
	\[
		\wrt\bar{X}^\tau_t = -\nabla E(\bar{X}^\tau_{k\tau})\wrt t + \sqrt{2}\wrt W_t,\quad t\in(k\tau,(k+1)\tau]
	\]
	with a constant drift term. As before, we consider the coupled \gls{sde}
	\[
		\dd \boldsymbol{X}_t = \begin{bmatrix}
			\dd X_t\\
			\dd \bar{X}^\tau_t
		\end{bmatrix}
		= \begin{bmatrix}
			-\nabla E(X_t)\\
			-\nabla E(\bar{X}^\tau_{k\tau})
		\end{bmatrix}\dd t
		+\sqrt{2}\begin{bmatrix}
			I\\
			I
		\end{bmatrix}
		\dd W_t,\quad t\in[k\tau,(k+1)\tau),
	\]
	and, again via Ito's lemma, we find that it satisfies the \gls{ode} $\wrt \phi(\boldsymbol{X}_t) = -\langle \nabla E(X_t)-\nabla E(\bar{X}^\tau_{k\tau}),X_t-\bar{X}^\tau_t\rangle \wrt t$ for $\phi(Z_t) =\frac{1}{2}\|X_t-\bar{X}^\tau_t\|^2$. Therefore,
	\begin{equation}
		\begin{aligned}
			&\frac{1}{2}\|X_{(k+1)\tau}-X^\tau_{k+1}\|^2-\frac{1}{2}\|X_{k\tau}-X^\tau_{k}\|^2 \\
			&\quad= -\int_{k\tau}^{(k+1)\tau} \langle \nabla E(X_t)-\nabla E(\bar{X}^\tau_{k\tau}),X_t-\bar{X}^\tau_t\rangle \dd t\\
			&\quad= -\int_{k\tau}^{(k+1)\tau}\langle \nabla E(X_t)-\nabla E(\bar{X}^\tau_t),X_t-\bar{X}^\tau_t\rangle \dd t \\
			&\qquad\quad- \int_{k\tau}^{(k+1)\tau} \langle \nabla E(\bar{X}^\tau_{t})-\nabla E(\bar{X}^\tau_{k\tau}),X_t-\bar{X}^\tau_t\rangle \dd t\\
			&\quad\overset{*}{\leq} -m \int_{k\tau}^{(k+1)\tau} \|X_t-\bar{X}^\tau_t\|^2\dd t \\
			&\qquad\quad+ \int_{k\tau}^{(k+1)\tau} L \Bigl(\frac{\alpha}{2}\|\bar{X}^\tau_{t} - \bar{X}^\tau_{k\tau}\|^2 + \frac{1}{2\alpha}\|X_t-\bar{X}^\tau_t\|^2\Bigr)\dd t\\
			&\quad\overset{**}{\leq} \int_{k\tau}^{(k+1)\tau} \frac{L^2}{4m}\|\bar{X}^\tau_{t} - \bar{X}^\tau_{k\tau}\|^2\dd t\\
			&\quad\leq \int_{k\tau}^{(k+1)\tau} \frac{L^2}{4m} \|-(t-k\tau)\nabla E(\bar{X}^\tau_{k\tau}) + \sqrt{2}(W_t-W_{k\tau})\|^2\dd t
		\end{aligned}
	\end{equation}
	where we used the elementary inequality $ab\leq \frac{1}{2\alpha}a^2 + \frac{\alpha}{2}b^2$ for any $\alpha>0$, $a,b\in\R$ in the inequality marked with $*$ and \( \alpha=\frac{L}{2m} \) in the inequality marked with $**$. Taking the expected value on both sides and noting that $\bar{X}^\tau_{k\tau}$ and $(W_t-W_{k\tau})$ are independent we obtain that
	\begin{equation}
		\begin{aligned}
			&\frac{1}{2}\E[\|X_{(k+1)\tau}-X^\tau_{k+1}\|^2]-\frac{1}{2}\E[\|X_{k\tau}-X^\tau_{k}\|^2] \\
			&\quad\leq \frac{L^2}{4m} \left(\frac{\tau^3}{3}\E[\|\nabla E(\bar{X}^\tau_{k\tau})\|^2] + \tau^2 \right) \\
			&\quad\leq \frac{L^2}{4m} \left(\frac{\tau^3}{3}L^2 \left(\E\left[ \|X_{0}^\tau-x^*\|^2\right] + \momentbound \right) + \tau^2 \right)
		\end{aligned}
	\end{equation}
	where we applied \cref{lemma:bounded_sec_mom} after setting $\eta = \frac{1}{2m-L\bar{\tau}}$. Since $\tau$ is bounded from above, the cubic term $\tau^3$ can be bounded by the quadratic and it follows that there exists $c>0$ such that
	\begin{equation}
		\begin{aligned}
			\E[\|X_{(k+1)\tau}-X^\tau_{k+1}\|^2]-\E[\|X_{k\tau}-X^\tau_{k}\|^2]\leq c \tau^2.
		\end{aligned}
	\end{equation}
	Taking the expectation and optimizing over all couplings of $\mu P_{k\tau}$ and $\mu R^k_\tau$ leads to $\dwass(\mu P_{(k+1)\tau},\mu R^{k+1}_\tau)^2 - \dwass(\mu P_{k\tau},\mu R^{k}_\tau)^2 \leq c\tau^2$. Finally, the desired result follows from summing up the inequality over $k$.
\end{proof}

The combination of the exponential ergodicity of the continuous time process with the bound on the discretization error implies the following result on the approximation of the target $\pi$ by \gls{ula}.
\begin{theorem}
	There exists a $\rho>0$ such that the \gls{ula} with $\tau\leq\bar{\tau}$ satisfies for any $k,n\in\N$, $k\geq n$
	\begin{equation}
		\begin{aligned}
			\dwass(\delta_x R_\tau^k,\pi) \leq  \sqrt{nc \tau^2} + \rho\exp{-mn\tau}.
		\end{aligned}
	\end{equation}
	In particular, let $\epsilon>0$ be arbitrary. Then it follows for $T=\frac{-\log(\frac{\epsilon}{2\rho})}{m}+1$, $\tau<\min(1,\frac{\epsilon^2}{4cT})$, and $k\geq\lfloor \frac{T}{\tau}\rfloor$ that $\dwass(\delta_x R_\tau^k,\pi) \leq \epsilon$ and, as a consequence, also $\dwass(\pi^\tau,\pi) \leq \epsilon$.
\end{theorem}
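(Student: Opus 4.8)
The plan is to glue together, via the triangle inequality for $\dwass$, the three estimates already established: the exponential contraction $\dwass(\mu P_t,\pi)\le\exp{-mt}\dwass(\mu,\pi)$ of the continuous-time semigroup, the discretization bound $\dwass(\mu P_{n\tau},\mu R_\tau^n)^2\le cn\tau^2$ from \cref{thm:error_EM}, and the uniform second-moment bound of \cref{lemma:bounded_sec_mom}. For the first inequality, I would exploit $k\ge n$ to split the chain as $\delta_x R_\tau^k=\mu_0 R_\tau^n$ with $\mu_0\coloneqq\delta_x R_\tau^{k-n}$, and then, using $\pi P_{n\tau}=\pi$, estimate
\[
\dwass(\delta_x R_\tau^k,\pi)\le\dwass(\mu_0 R_\tau^n,\mu_0 P_{n\tau})+\dwass(\mu_0 P_{n\tau},\pi P_{n\tau}).
\]

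The first term on the right is bounded by $\sqrt{cn\tau^2}$ by \cref{thm:error_EM} applied with initial distribution $\mu_0$; here one should observe that the constant $c$ can be chosen independently of $k-n$, since in the proof of \cref{thm:error_EM} it enters only through the second moment of the initial law about the minimizer $x^\ast$ of $E$, which \cref{lemma:bounded_sec_mom} bounds uniformly in $k-n$ and in $\tau\le\bar\tau$. The second term equals $\dwass(\mu_0 P_{n\tau},\pi P_{n\tau})\le\exp{-mn\tau}\dwass(\mu_0,\pi)$ by the contraction property of $(P_t)_t$.

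It then remains to bound $\dwass(\mu_0,\pi)$ by a constant $\rho$ independent of $k,n,\tau$. I would pass through the Dirac mass at $x^\ast$: \cref{lemma:bounded_sec_mom} gives $\dwass(\mu_0,\delta_{x^\ast})^2=\E[\|X^\tau_{k-n}-x^\ast\|^2]\le\|x-x^\ast\|^2+\eta$ with $\eta=\tfrac{1}{2m-L\bar\tau}$ (recall $X_0^\tau=x$ is deterministic), while $\dwass(\delta_{x^\ast},\pi)^2=\int\|y-x^\ast\|^2\wrt\pi(y)<\infty$ since $m$-strong convexity of $E$ forces $\pi\in\Pc_2(\R^d)$. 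Setting $\rho\coloneqq\sqrt{\|x-x^\ast\|^2+\eta}+\dwass(\delta_{x^\ast},\pi)$ and combining the bounds yields $\dwass(\delta_x R_\tau^k,\pi)\le\sqrt{cn\tau^2}+\rho\exp{-mn\tau}$.

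For the quantitative consequence, given $\epsilon>0$ and $T=\tfrac{-\log(\epsilon/(2\rho))}{m}+1$, take $n=\lfloor T/\tau\rfloor$: then $n\tau\le T$ together with $\tau<\epsilon^2/(4cT)$ gives $\sqrt{cn\tau^2}\le\sqrt{cT\tau}<\epsilon/2$, while $n\tau>T-\tau>T-1=\tfrac{-\log(\epsilon/(2\rho))}{m}$ (using $\tau<1$) gives $\rho\exp{-mn\tau}<\epsilon/2$, so $\dwass(\delta_x R_\tau^k,\pi)\le\epsilon$ for every $k\ge\lfloor T/\tau\rfloor$; letting $k\to\infty$ and using $\dwass(\delta_x R_\tau^k,\pi^\tau)\to0$ from \cref{thm:discrete_ergodic} then gives $\dwass(\pi^\tau,\pi)\le\epsilon$ by one more triangle inequality. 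The only step needing genuine care — everything else being the triangle inequality applied to facts already proven — is the uniformity of $c$ and of $\rho$ in $k,n,\tau$, which is exactly what the uniform moment bound of \cref{lemma:bounded_sec_mom} supplies.
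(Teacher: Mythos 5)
Your argument is correct and follows essentially the same route as the paper's proof: the same splitting of $\delta_x R_\tau^k$ at $p=k-n$ steps, comparison with $\delta_x R_\tau^{k-n}P_{n\tau}$ via the triangle inequality, the discretization bound of \cref{thm:error_EM}, the exponential contraction of $(P_t)_t$, and the uniform moment bound of \cref{lemma:bounded_sec_mom} to get $\rho$. If anything, you are slightly more explicit than the paper in justifying that the constant $c$ and the bound $\rho$ are uniform in $k-n$ and $\tau$ (via the Dirac at $x^\ast$ and $\pi\in\Pc_2(\R^d)$ from strong convexity) and in deducing $\dwass(\pi^\tau,\pi)\leq\epsilon$ from \cref{thm:discrete_ergodic}, all of which is consistent with the paper's intent.
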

\begin{proof}
	By the triangle inequality and \cref{thm:error_EM} we find for any $p,k\in\N$, $p\leq k$ that
	\begin{equation}\label{eq:convergnece1}
		\begin{aligned}
			\dwass(\delta_x R_\tau^k,\pi) &\leq \dwass(\delta_x R^p_\tau R^{k-p}_\tau, \delta_x R^p_\tau P_{(k-p)\tau}) + \dwass(\delta_x R^p_\tau P_{(k-p)\tau}, \pi)\\
			&\leq  \sqrt{(k-p)c \tau^2} + \dwass(\delta_x R^p_\tau , \pi)\exp{-m(k-p)\tau}.
		\end{aligned}
	\end{equation}
	The boundedness of the second moments ensured by \cref{lemma:bounded_sec_mom} implies the existence of a constant $\rho>0$ such that $\dwass(\delta_x R^p_\tau , \pi)\leq \rho$ uniformly in $\tau$ and $p$. This concludes the proof of the first assertion. Now let $\epsilon>0$ be arbitrary. Denote $n\coloneq k-p$ and fix $T=\frac{-\log(\frac{\epsilon}{2\rho})}{m}+1$ so that $\rho\exp{-m(T-1)}=\frac{\epsilon}{2}$. Choose a step size $\tau<\min(1,\frac{\epsilon^2}{4cT})$. Then choose $n=\lfloor \frac{T}{\tau}\rfloor$ so that $n\tau\geq \tau(\frac{T}{\tau}-1)=T-\tau\geq T-1$. It follows for any $k\geq n$
	\begin{equation}
		\begin{aligned}
			\dwass(\delta_x R_\tau^k,\pi) &\leq  \sqrt{nc \tau^2} + \rho\exp{-mn\tau}\\
			&\leq \sqrt{cT\tau} + \rho\exp{-m(T-1)}\\
			&\leq \frac{\epsilon}{2} + \frac{\epsilon}{2} = \epsilon.
		\end{aligned}
	\end{equation}
\end{proof}
\begin{remark}
	In particular, we obtain the following convergence rates: In order to obtain accuracy $\epsilon$ in Wasserstein-2 distance, we need to run the chain for $\mathcal{O}(\log(\epsilon)^2/\epsilon^2)$ iterations.
\end{remark}

\begin{remark}
	Note that \gls{ula} differs conceptually from \gls{mh} and Gibbs sampling in so far, as the generated Markov chain does not admit the target $\pi$ as its invariant measure, but only an approximation thereof, that is, we obtain biased samples. The same is true for the underdamped Langevin algorithm presented in the next section. The bias may be mitigated, \eg, by using a vanishing step-size which, however, leads to a time-inhomogeneous \gls{mc} \cite{durmus2019analysis,habring2024subgradient,erhardt2024proximal}, for which concepts like stationary distributions are not directly applicable. Another approach to obtain unbiased samples is to correct \gls{ula} to target $\pi$ directly by adding a \gls{mh} accept/reject step which leads to \gls{mala} \cite{roberts1996exponential}.
\end{remark}

\subsubsection{Underdamped Langevin sampling}
An alternative to the overdamped Langevin diffusion that can lead to improved convergence rates is given by the underdamped Langevin diffusion which is defined by the \gls{sde}
\begin{equation}\label{eq:underdamped_sde}
	\left\{
	\begin{aligned}
		\dd X_t &= V_t\dd t,\\
		\dd V_t &= \left( -\friction V_t - \mass \nabla E (X_t)\right) \dd t + \sqrt{2\friction\mass}\dd W_t,
	\end{aligned}\right.
\end{equation}
which, has a physical interpretation of modeling friction in addition to the potential force given by $\nabla E$.
The influence of this friction is tuned by the parameter $\friction>0$.
In addition to this physical interpretation, the system also has tight links to optimization since it corresponds to the stochastic version of the second order \gls{ode} for Nesterov's accelerated gradient descent algorithm, see~\cite{weijie2016differential}.

The process induced by~\eqref{eq:underdamped_sde} admits the stationary distribution $\pi_{X,V}$ with Lebesgue density $p_{X,V}(x,v) \propto \exp{-E(x) - \frac{\|v\|^2}{2\mass}}$~\cite{cheng2018underdamped}.
The density $p_{X,V}$ indeed solves the corresponding Fokker-Planck equation
\begin{equation}
        0 = -\nabla_x p_{X,V} \cdot v + \friction\nabla_v p_{X,V}\cdot v + \mass\nabla_v p_{X,V}\cdot \nabla E + \friction d p + \mass\friction\Delta_v p_{X,V}
    \end{equation}
which can be derived using Ito's lemma in a very similar fashion to the overdamped setting.

The structure of the Langevin diffusion given in \eqref{eq:underdamped_sde} lends itself to a partial discretization that yields improved convergence rates compared to \gls{ula}~\cite{cheng2018underdamped}.
The scheme is obtained via
\begin{equation}\label{eq:underdamped_discretized}
	\left\{
	\begin{aligned}
		\dd \bar{X}_t &= \bar{V}_t\dd t\\
		\dd \bar{V}_t &= \left( -\friction \bar{V}_t - \mass \nabla E (\bar{X}^\tau_{k\tau})\right) \dd t + \sqrt{2\friction\mass}\dd W_t
	\end{aligned}\right.
\end{equation}
for $t\in (k\tau,(k+1)\tau]$, where $\tau>0$ denotes the discretization step size~\cite{cheng2018underdamped}. That is, we only fix the argument of $\nabla E$, but otherwise solve the \gls{sde} exactly. Fortunately, the solution of~\eqref{eq:underdamped_discretized} is known in distribution. In particular, the process $V_t$ is an Ornstein-Uhlenbeck process with exact solution
\begin{equation}
	\begin{aligned}
		\bar{V}_{k\tau + h} &= \bar{V}_{k\tau}\exp{-\friction h} -\frac{\mass}{\friction}\nabla E(\bar{X}^\tau_{k\tau})\left( 1-\exp{-\friction h}\right) \\
							&\quad+\sqrt{2\friction \mass}\int_{k\tau}^{k\tau+h}\exp{-\friction (k\tau+h-s)}\dd W_{s}.
	\end{aligned}
\end{equation}
Consequently, the distribution of the random vector $(\bar{X}_{k\tau+h},\bar{V}_{k\tau+h})$ given some $(\bar{X}_{k\tau},\bar{V}_{k\tau})$ is a Gaussian with mean $\mu_h(\bar{X}_{k\tau+h},\bar{V}_{k\tau+h})$ where
\begin{equation}\label{eq:underdamped_mean}
		\mu_h(x,v) = \begin{bmatrix}
			x + v\frac{1-\exp{-\friction h}}{\friction} -\frac{\mass}{\friction}\nabla E(x)\left( h - \frac{1-\exp{-\friction h}}{\alpha}\right)\\
			v\exp{-\friction h} -\frac{\mass}{\friction}\nabla E(x)\left( 1-\exp{-\friction h}\right)
		\end{bmatrix}
\end{equation}
and covariance
\begin{equation}\label{eq:underdamped_var}
	\begin{aligned}
		\cov_h = \begin{bmatrix}
			\frac{2\mass}{\friction}\left( h - \frac{2(1-\exp{-\friction h})}{\friction} + \frac{1-\exp{-2\friction h}}{2\friction} \right)\Identity & \frac{\mass}{\friction}\left( 1 - 2\exp{-\friction h} + \exp{-2\friction h} \right)\Identity\\
			\frac{\mass}{\friction}\left( 1 - 2\exp{-\friction h} + \exp{-2\friction h} \right)\Identity & \mass(1-\exp{-2\friction h})\Identity
		\end{bmatrix}.
\end{aligned}
\end{equation}
\begin{algorithm}[t]
	\begin{algorithmic}[1]
		\Require Initial values $X_0, V_0$, $\friction,\mass >0$,  step size $\tau>0$.
		\For{$k=0,1,2,\dots$}
		\State Compute \(\mu_\tau(X_{k}^\tau,V_{k}^\tau)\) and \(\cov_\tau\) according to \eqref{eq:underdamped_mean}, \eqref{eq:underdamped_var}
		\State \( (X_{k+1}^\tau,V_{k+1}^\tau) \sim \Nc(\mu_\tau(X_{k}^\tau,V_{k}^\tau),\cov_\tau) \)
		\EndFor
	\end{algorithmic}
	\caption{The underdamped Langevin algorithm.}
	\label{algo:underdamped}
\end{algorithm}

The resulting algorithm is summarized in \cref{algo:underdamped}. The exponential ergodicity of the continuous-time \gls{sde} to the target distribution \( \pi_{X,V} \) as well as the approximation of \( \pi_{X, V} \) by the discretization with sufficiently small step size can be obtained by similar techniques as those used in the overdamped case.
\begin{theorem}{\cite[Theorem 1 and 5, Lemma 8]{cheng2018underdamped}}
	Assume $E$ is twice continouusly differentiable, strongly convex with parameter $m$, and admits an $\lip$-Lipschitz continuous gradient and denote $\kappa = \frac{L}{m}$. Choose $\friction=2$ and $\mass = \frac{1}{\lip}$. Then the continuous-time underdamped Langevin dynamics satisfy that
	\begin{equation}
		\dwass(\mu P_t,\pi_{X,V})\leq 4\exp{-\frac{t}{2\kappa}}\dwass(\mu,\pi_{X,V}).
	\end{equation}
	Moreover, for any initial distribution $\mu\in\Pc_2(\R^{2d})$ there exists $C>0$ such that
	\begin{equation}
		\dwass(\mu R^k_\tau,\pi_{X,V})\leq 4\exp{-\frac{k\tau}{2\kappa}}\dwass(\mu ,\pi_{X,V}) + \frac{\tau^2}{1-\exp{-\frac{\tau}{2\kappa}}}C.
	\end{equation}
	In particular, for any $\epsilon>0$ we can choose $\tau = \oc(\epsilon)$ and $K = \oc(\tau^{-1}\log(\epsilon^{-1}))$ such that for $k\geq K$, $\dwass(\mu R^k_\tau,\pi_{X,V})\leq\epsilon$.
\end{theorem}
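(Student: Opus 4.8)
The plan is to mirror the three-step structure used for \gls{ula} in \cref{thm:error_EM} and the corollary following it: first establish an exponential contraction for the continuous-time semigroup $(P_t)_t$ generated by \eqref{eq:underdamped_sde} in a suitable metric, then bound the one-step discretization error of the scheme \eqref{eq:underdamped_discretized}, and finally combine the two via the triangle inequality and a geometric summation.

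For the continuous-time contraction I would use a synchronous coupling: let $(X_t,V_t)$ and $(X'_t,V'_t)$ solve \eqref{eq:underdamped_sde} with the same Brownian motion and initial laws $\mu,\nu$. Writing $u_t = X_t - X'_t$ and $w_t = V_t - V'_t$, the noise cancels and $(u_t,w_t)$ solves the linear random-coefficient \gls{ode}
\[
	\frac{\dd}{\dd t}\begin{bmatrix} u_t\\ w_t\end{bmatrix} = \begin{bmatrix} w_t\\ -\friction w_t - \mass H_t u_t\end{bmatrix},\qquad H_t \coloneqq \int_0^1 \nabla^2 E\bigl(X'_t + s u_t\bigr)\,\dd s,
\]
where $m\,\Identity \preceq H_t \preceq \lip\,\Identity$ by \cref{ass:potential}. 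The crucial step is to introduce a quadratic Lyapunov functional $\mathcal{E}_t = \langle (u_t,w_t), M (u_t,w_t)\rangle$ with a fixed symmetric positive-definite block matrix $M$ (the one appearing in \cite{cheng2018underdamped}, tuned precisely to $\friction=2$ and $\mass=1/\lip$) such that $A_t^\top M + M A_t \preceq -\tfrac{1}{\kappa}M$ uniformly over all admissible $H_t$, where $A_t$ is the coefficient matrix above; equivalently $\tfrac{\dd}{\dd t}\mathcal{E}_t \le -\tfrac{1}{\kappa}\mathcal{E}_t$ pointwise. Grönwall's inequality then gives $\mathcal{E}_t \le e^{-t/\kappa}\mathcal{E}_0$, and since $M$ is equivalent to the Euclidean inner product on $\R^{2d}$ with an equivalence factor that, for this choice of $\friction,\mass$, is $4$, taking expectations and optimizing over couplings yields $\dwass(\mu P_t,\nu P_t) \le 4 e^{-t/(2\kappa)}\dwass(\mu,\nu)$. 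Uniqueness of the invariant measure and its identification with $\pi_{X,V}$ then follow as in the overdamped proof, via the fixed-point argument of \cite[Theorem 2]{fruehwirth2024ergodicity} together with uniqueness of solutions of the stationary Fokker--Planck equation quoted after \eqref{eq:underdamped_sde}.

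For the discretization error I would interpolate one step of \eqref{eq:underdamped_discretized} as a continuous-time process $(\bar{X}^\tau_t,\bar{V}^\tau_t)$ that solves \eqref{eq:underdamped_sde} with the drift argument frozen at $k\tau$, coupled to the exact process via the same Brownian motion, exactly as in the proof of \cref{thm:error_EM}. Applying Itô's lemma to the same quadratic form $\mathcal{E}$, the increment over $(k\tau,(k+1)\tau]$ splits into a contracting term and an error term driven by $\nabla E(\bar{X}^\tau_t) - \nabla E(\bar{X}^\tau_{k\tau})$; bounding the latter by $\lip\|\bar{X}^\tau_t - \bar{X}^\tau_{k\tau}\|$, using the explicit Gaussian increments \eqref{eq:underdamped_mean}--\eqref{eq:underdamped_var}, and invoking a uniform-in-$\tau$ second-moment bound on the iterates (the analogue of \cref{lemma:bounded_sec_mom}, proved the same way from strong convexity of $E$ and Lipschitzness of $\nabla E$) yields a per-step error of order $\tau^{3}$. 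Combining this with the geometric contraction factor and summing over steps gives the stated bound with prefactor $\tfrac{\tau^2}{1-\exp(-\tau/(2\kappa))}$, and the complexity claim is then routine: choose $\tau = \oc(\epsilon)$ so that the discretization term is $\le \epsilon/2$, and $K = \oc(\tau^{-1}\log(\epsilon^{-1}))$ so that $4 e^{-K\tau/(2\kappa)}\dwass(\mu,\pi_{X,V}) \le \epsilon/2$.

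The main obstacle is the first step. Unlike the overdamped case, where $m$-strong convexity makes the naive Euclidean synchronous coupling contract immediately, the velocity component of the underdamped dynamics does not ``see'' the convexity directly, so one genuinely needs the twisted metric $M$; verifying the matrix inequality $A_t^\top M + M A_t \preceq -\tfrac{1}{\kappa}M$ uniformly over all $H_t$ with $m\,\Identity \preceq H_t \preceq \lip\,\Identity$ is delicate and is exactly where the constants $\friction=2$, $\mass=1/\lip$ and the factor $4$ originate. Everything else --- the moment bound, the Itô computations, and the geometric summation --- parallels the overdamped arguments already carried out in the excerpt.
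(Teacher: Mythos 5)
This theorem is not proved in the paper at all: it is quoted verbatim from the cited reference \cite{cheng2018underdamped}, with the text only remarking that it ``can be obtained by similar techniques as those used in the overdamped case.'' Your sketch is essentially a reconstruction of the cited proof, and its structure --- synchronous coupling, a twisted quadratic Lyapunov form $M$ tuned to $\friction=2$, $\mass=1/\lip$ whose norm-equivalence constant produces the prefactor $4$ and the rate $1/(2\kappa)$, a one-step discretization bound via freezing $\nabla E$ together with a uniform-in-$\tau$ moment bound, and a contraction-plus-geometric-summation argument --- is exactly the route of Cheng et al.\ and the natural analogue of the overdamped analysis carried out in \cref{thm:discrete_ergodic,thm:error_EM}. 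So there is no divergence of approach to report, only the caveat that there is no in-paper proof to compare against line by line.

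One quantitative point in your sketch does not close as written: you claim a per-step error of order $\tau^3$ and then assert that geometric summation yields the prefactor $\tau^2/(1-\exp{-\tau/(2\kappa)})$. If the one-step error in $\dwass$ were $O(\tau^3)$, the summation would give numerator $\tau^3$, not $\tau^2$; and if you instead meant an $O(\tau^3)$ increment of the \emph{squared} distance, the recursion $d_{k+1}^2\leq e^{-\tau/\kappa}d_k^2+O(\tau^3)$ has cross terms that must be absorbed (e.g.\ by Young's inequality, slightly degrading the rate) before one can sum, and it still does not produce the stated prefactor directly. The correct bookkeeping, as in the cited reference, is a one-step error of order $\tau^2$ in $\dwass$ for the joint $(x,v)$-process (the position error over one step is $O(\tau^3)$ but the velocity error is $O(\tau^2)$, and the latter dominates), which combined with the per-step contraction $e^{-\tau/(2\kappa)}$ gives the recursion $d_{k+1}\leq e^{-\tau/(2\kappa)}d_k+C\tau^2$ and hence exactly the stated bound; the complexity claim then follows as you describe.
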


\begin{remark}
	The underdamped Langevin algorithm improves the complexity from $\oc(\epsilon^{-2}\log(\epsilon^{-1})^2)$ to $\oc(\epsilon^{-1}\log(\epsilon^{-1}))$ iterations to reach accuracy $\epsilon$ in Wasserstein-2 in comparison the overdamped case.
\end{remark}

\begin{remark}[Step size choice]
	While the convergence theory for Langevin based sampling typically provides step size constraints depending on the Lipschitz constant of $\nabla E$, this constant might sometimes be hard to compute in practice. Moreover, while these constraints ensure ergodicity, the remaining bias of the invariant measure of the chain might still be too large, thus, necessitating smaller step sizes. Empirically, step sizes in the range of $\si{1e-5}$ up to $\si{1e-2}$ provide reasonable results. However, we advise performing multiple runs with different step sizes in order to balance convergence speed and remaining bias. In particular, for smaller step sizes, the convergence speed becomes prohibitively slow, thus, alleviating any potential reduction in bias.
\end{remark}

\subsection{Hamiltonian Monte Carlo}
\Gls{hmc} sampling \cite{neal2001hamiltonian,betancourt2017conceptual} is a specific type of \gls{mh} algorithm which aims at providing a Markov chain with faster mixing, \ie, faster convergence to the target. Similarly to the underdamped Langevin sampling we begin by introducing an auxiliary random variable $V$ with values in $\R^d$. We interpret the random variable of interest $X$ as the \emph{position} and $V$ as a \emph{momentum} or \emph{velocity}. Introducing also a functional $K:\R^d\rightarrow\R$ representing kinetic energy, we define the joint distribution of $(X; V)$, $\pi_{X,V}$ via its density with respect to the Lebesgue measure
\[
	\frac{\dd \pi_{X,V}}{\dd (x,v)}(x,v)=p_{X,V}(x,v) \propto \exp{ -E(x) - K(v)}.
\]
Since $p_{X,V}(x,v)$ factorizes, the marginal distribution of $X$ remains unchanged and the kinetic energy $K$ can therefore be chosen freely as long as $v \mapsto \exp{-K(v)}$ is integrable. Throughout this section we further assume the following:
\begin{assumption}\label{ass:hmc}
	\begin{enumerate}
		\item Both $E,K:\R^d\rightarrow\R$ are twice continuously differentiable with Lipschitz continuous gradient.
		\item $K$ is symmetric, \ie, $K(v)=K(-v)$ for any $v\in\R^d$ and such that $\pi_V\in\Pc_2(\R^d)$ where
		\[
		\pi_V(\dd v)\coloneqq \frac{\exp{-K(v)}}{\int \exp{K(w)}\dd w}\dd v.
		\]
	\end{enumerate}
\end{assumption}
	
A popular choice is $K(v) = v^T M^{-1}v/2$ with a symmetric and positive definite matrix $M\in\R^{d\times d}$, which leads to a Gaussian marginal distribution for $V$. In \gls{hmc} we make use of Hamiltonian dynamics to sample from $\pi_{X,V}$. Dropping the velocity variable leads to a sample $X\sim \pi_X(x)\propto\exp{-E(x)}$. As a prerequisite, let us therefore discuss Hamiltonian dynamics.

\subsubsection{Hamilton dynamics}
Hamiltonian dynamics refer to a system of differential equations modeling phenomena in classical mechanics. Specifically, given the functional $H(x,v) = E(x) + K(v)$ (referred to as the  \emph{Hamiltonian}), Hamiltonian dynamics read as
\begin{equation}\label{eq:hamiltonian}
	\begin{cases}
		\frac{\dd x_i}{\dd t} = \frac{\partial H}{\partial v_i}\\
		\frac{\dd v_i}{\dd t} = -\frac{\partial H}{\partial x_i}.
	\end{cases}
\end{equation}
We introduce the following notation.
\begin{definition}
	For any $t>0$ we define the solution operator of Hamiltonian dynamics as
	\begin{equation}
	\begin{aligned}
		\phi^t:\R^{2d}&\rightarrow\R^{2d}\\
		(x,v)&\mapsto (\phi_x^t(x,v),\phi^t_v(x,v))=(x(t),v(t))
	\end{aligned}
	\end{equation}
	where $(x(s),v(s))$ solves \eqref{eq:hamiltonian} with initial condition $(x(0),v(0)) = (x, v)$.
\end{definition}
Using standard results on \glspl{ode} we can show that $\phi^t$ is, in fact, well defined as well as continuously differentiable.
\begin{theorem}
Let \cref{ass:hmc} hold. Then $\phi^t$ is well defined for any $t\geq 0$ and continuously differentiable in $(x,v,t)$.
\end{theorem}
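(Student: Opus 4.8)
The plan is to rewrite the Hamiltonian system \eqref{eq:hamiltonian} as an autonomous first-order ODE on $\R^{2d}$ and then apply the classical Picard--Lindel\"of theory. Since $H(x,v)=E(x)+K(v)$ is separable, \eqref{eq:hamiltonian} reduces to $\dot x=\nabla K(v)$, $\dot v=-\nabla E(x)$; writing $z=(x,v)$ and $F(z)=(\nabla K(v),-\nabla E(x))$, this is $\dot z=F(z)$ with $z(0)=(x,v)$. By \cref{ass:hmc} the potentials $E,K$ are twice continuously differentiable, so $F\in C^1(\R^{2d};\R^{2d})$ and in particular $F$ is locally Lipschitz; Picard--Lindel\"of then yields, for each initial datum, a unique maximal solution $s\mapsto(x(s),v(s))$, which defines $\phi^t(x,v)=(x(t),v(t))$ on its maximal interval of existence.

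To upgrade this to global existence I would use the Lipschitz bounds on the gradients from \cref{ass:hmc}: there is $C>0$ with $\|F(z)\|\le\|F(0)\|+C\|z\|$ for all $z\in\R^{2d}$. Along any solution, $\tfrac{\dd}{\dd t}\tfrac12\|z(t)\|^2=\langle z(t),F(z(t))\rangle\le\|z(t)\|\bigl(\|F(0)\|+C\|z(t)\|\bigr)$, so Gr\"onwall's inequality bounds $\|z(t)\|$ on every bounded time interval. Hence no solution escapes to infinity in finite time, the maximal interval is all of $\R$, and $\phi^t$ is well defined for every $t\ge0$.

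For the regularity I would invoke the standard theorem on differentiable dependence of ODE solutions on their initial conditions: since $F\in C^1$, the flow $(t,z)\mapsto\phi^t(z)$ is continuously differentiable jointly in $(t,z)$. Indeed $\partial_t\phi^t(z)=F(\phi^t(z))$ is continuous, and $D_z\phi^t(z)$ solves the variational equation $\tfrac{\dd}{\dd t}D_z\phi^t(z)=DF(\phi^t(z))\,D_z\phi^t(z)$ with $D_z\phi^0(z)=\Identity$, whose solution is continuous in $(t,z)$ because $DF$ is; joint continuity of all first partials gives $\phi^t\in C^1$ in $(x,v,t)$. I expect the global-existence step to be the only real obstacle: with merely $C^2$ potentials one would need a blow-up alternative and could not exclude finite-time escape, and conservation of $H$ along trajectories --- which follows from $\tfrac{\dd}{\dd t}H(x(t),v(t))=\langle\nabla E,\nabla K\rangle-\langle\nabla K,\nabla E\rangle=0$ --- does not help, since $H$ is not assumed coercive. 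The Lipschitz hypothesis on the gradients is precisely what closes this gap.
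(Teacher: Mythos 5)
Your proposal is correct and follows essentially the same route as the paper: Picard--Lindel\"of applied to the first-order system $\dot z=F(z)$ with $F(x,v)=(\nabla K(v),-\nabla E(x))$, followed by the standard differentiable-dependence-on-initial-data result (the variational equation $\tfrac{\dd}{\dd t}D\phi^t=DF(\phi^t)D\phi^t$) for the $C^1$ regularity in $(x,v,t)$. The only cosmetic difference is that you obtain global-in-time existence by excluding finite-time blow-up via a linear-growth Gr\"onwall estimate, whereas the paper simply notes that under \cref{ass:hmc} the right-hand side is globally Lipschitz, so existence and uniqueness for all time follow directly.
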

\begin{proof}
	Under \cref{ass:hmc} the right-hand side of \eqref{eq:hamiltonian} is Lipschitz continuous so that for any initial condition existence and uniqueness of a solution for Hamiltonian dynamics is guaranteed for all time by the Picard-Lindelöf theorem~\cite{hartman2002ordinary,amann2011ordinary}. Moreover $(x,v,t)\mapsto\phi^t(x,v)$ is continuous: One can easily check that by Lipschitz-continuity for some $K\geq 0$
	\begin{equation}
	\|\phi^t(x,v)-\phi^t(\tilde{x},\tilde{v})\|\leq \|(x,v)- (\tilde{x},\tilde{v})\| + \int_0^t K\| \phi^s(x,v)-\phi^s(\tilde{x},\tilde{v}) \|\dd s
	\end{equation}
	so that Grönwall's inequality yields continuity of $(x,v) \mapsto \phi^t(x,v)$ for some fixed \( t \). The continuity with respect to $t$ is obvious. Regarding differentiability of $\phi^t(x,v)$, for the time derivative we immediately see that
	\begin{equation}
	\frac{\partial}{\partial t}\phi^t(x,v) = \begin{bmatrix}\frac{\partial H}{\partial v_i}(\phi^t(x,v))\\
	-\frac{\partial H}{\partial x_i}(\phi^t(x,v))
	\end{bmatrix}
	\end{equation}
	which is continuous in $(x,v)$ and $t$. Let us denote
	\[
		F(x,v) = \begin{bmatrix}
			\frac{\partial H}{\partial v_i}(x,v)\\
			-\frac{\partial H}{\partial x_i}(x,v)
		\end{bmatrix}.
	\]
	Since $\phi^t(x,v) = (x,v) + \int_0^tF(\phi^s(x,v))\dd s$ one would expect the derivative (if it existed) to satisfy
	\[
		\frac{\partial}{\partial (x,v)}\phi^t(x,v) = \Identity + \int_0^tDF(\phi^s(x,v))\frac{\partial}{\partial (x,v)}\phi^s(x,v) \dd s.
	\]
	Thus, denote moreover, $A(t) = DF (\phi^t(x,v))$ with $D$ the Jacobian and consider the \gls{ode}
	\begin{equation}
		\frac{\dd}{\dd t}u(t) = A(t) u(t).
	\end{equation}
	Since the right-hand side $(u,t)\mapsto A(t) u$ is linear in $u$ and continuous in $t$ the \gls{ode} admits a unique solution for all time again \cite[Chapter 17]{hirsch2013differential}. One can check that $u(t)$ with initial condition $u(0) = \Identity$ is precisely the derivative of $\phi^t(x,v)$ with respect to $(x,v)$ \cite[Chapter 17.6]{hirsch2013differential}.
\end{proof}
Hamiltonian dynamics exhibit three crucial properties which \gls{hmc} relies on:
\begin{enumerate}
	\item energy conservation: the Hamiltonian $H(x,v)$ is invariant under $\phi^t$
	\item volume preservance: $\phi^t$ is symplectic in $(x,v)$ space; it does not change the volume of a set, and
	\item reversibility: flipping the momentum variable reverses the trajectory.
\end{enumerate}
We will provide theoretical justifications for all three of these properties as they build the foundation of \gls{hmc}.

Energy conservation can easily be seen by computing the total derivative with respect to the time along a curve governed by~\eqref{eq:hamiltonian}:
\begin{equation}
	\frac{\dd H(x,v)}{\dd t} = \sum_i \frac{\partial H(x,v)}{\partial x_i}\frac{\dd x_i}{\dd t} + \frac{\partial H(x,v)}{\partial v_i}\frac{\dd v_i}{\dd t} = 0
\end{equation}

Next we consider reversibility of $\phi$ which, in particular, establishes that $\phi^t$ is a diffeomorphism for all $t>0$.
\begin{theorem}[Reversibility of Hamiltonian dynamics]\label{thm:reversible}
	Let $t>0$. Hamiltonian dynamics satisfy that
	\begin{equation}
		\phi^t( \phi^t_x(x,v), -\phi^t_v(x,v)) = (x,-v).
	\end{equation}
	That is, the mapping $(x,v)\mapsto (\phi^t_x(x,v), -\phi^t_v(x,v))$ is an involution.
\end{theorem}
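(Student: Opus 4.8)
The plan is to exhibit the time-reversed Hamiltonian trajectory explicitly and to identify it, via uniqueness of solutions, with the flow started from the momentum-flipped endpoint. Write $S(x,v) = (x,-v)$ for the momentum-flip involution; the claimed identity then reads $\phi^t \circ S \circ \phi^t = S$, equivalently $(\phi^t)^{-1} = S \circ \phi^t \circ S$.

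First I would fix $(x,v) \in \R^{2d}$ and let $s \mapsto (x(s),v(s)) = \phi^s(x,v)$ be the unique solution of \eqref{eq:hamiltonian} with initial condition $(x,v)$, which exists on all of $[0,\infty)$ by the preceding theorem on well-definedness of $\phi^t$. Define the reversed curve $w(s) = (\tilde x(s),\tilde v(s)) := (x(t-s), -v(t-s))$ for $s \in [0,t]$. Using $H(x,v) = E(x) + K(v)$, Hamilton's equations \eqref{eq:hamiltonian} read $\dot x = \nabla K(v)$ and $\dot v = -\nabla E(x)$, and the crucial input is that \cref{ass:hmc} guarantees $K(v) = K(-v)$, so that $\nabla K$ is odd, $\nabla K(-v) = -\nabla K(v)$. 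A direct chain-rule computation then shows $w$ again solves \eqref{eq:hamiltonian}:
\[
\dot{\tilde x}(s) = -\dot x(t-s) = -\nabla K(v(t-s)) = \nabla K(-v(t-s)) = \nabla K(\tilde v(s)),
\]
\[
\dot{\tilde v}(s) = \dot v(t-s) = -\nabla E(x(t-s)) = -\nabla E(\tilde x(s)).
\]
Hence $w$ is the Hamiltonian flow started from $w(0) = (x(t), -v(t)) = (\phi^t_x(x,v), -\phi^t_v(x,v))$, so by uniqueness (Picard-Lindelöf, exactly as in the preceding theorem) $w(s) = \phi^s\bigl(\phi^t_x(x,v), -\phi^t_v(x,v)\bigr)$. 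Evaluating at $s = t$ gives $w(t) = (x(0), -v(0)) = (x,-v)$, which is precisely the asserted identity $\phi^t(\phi^t_x(x,v), -\phi^t_v(x,v)) = (x,-v)$.

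For the involution claim I would set $T(x,v) := (\phi^t_x(x,v), -\phi^t_v(x,v)) = (S \circ \phi^t)(x,v)$ and apply the identity just proved twice: since $\phi^t(T(x,v)) = (x,-v)$, we read off $\phi^t_x(T(x,v)) = x$ and $\phi^t_v(T(x,v)) = -v$, hence $T(T(x,v)) = (x,v)$. Thus $T$ is an involution; in particular $\phi^t$ is a bijection of $\R^{2d}$ with $(\phi^t)^{-1} = S \circ \phi^t \circ S$, and combined with the smoothness of $\phi^t$ in $(x,v)$ established earlier this shows $\phi^t$ is a diffeomorphism. I do not anticipate a real obstacle here; the only point requiring care is the sign bookkeeping in the momentum variable and the use of the symmetry of $K$, without which the reversal argument breaks down.
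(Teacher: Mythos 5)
Your proposal is correct and follows essentially the same route as the paper: define the time-reversed, momentum-flipped curve $(x(t-s),-v(t-s))$, use the symmetry of the kinetic part (equivalently, $H(x,v)=H(x,-v)$) to check it again solves Hamilton's equations, and conclude by uniqueness of solutions. Your explicit second application of the identity to get the involution, and the remark that $\phi^t$ is then a diffeomorphism, are harmless additions beyond what the paper writes out.
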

\begin{proof}
	Denote $(\tilde{x}(s), \tilde{v}(s)) = (x(t-s), -v(t-s))$. Since $H$ is symmetric with respect to the momentum, \ie, $H(x,v)=H(x,-v)$ for any $x,v\in\R^d$ it holds
	\begin{equation}
	\frac{\partial H}{\partial x} (x,v) = \frac{\partial H}{\partial x} (x,-v), \quad \text{and} \quad 	\frac{\partial H}{\partial v} (x,v) = -\frac{\partial H}{\partial v} (x,-v).
	\end{equation}
	As a result, $(\tilde{x}(s), \tilde{v}(s))$ satisfies Hamilton's equations as well. By uniqueness of solutions it follows $(\tilde{x}(s), \tilde{v}(s)) = \phi^s(x(t),-v(t))$. As a result,
	\begin{equation}
		\begin{aligned}
			\phi^t( \phi^t_x(x,v), -\phi^t_v(x,v)) &= \phi^t( x(t), - v(t)) \\
			&= (\tilde{x}(t), \tilde{v}(t)) = (x(0), -v(0)) =  (x, -v).
		\end{aligned}
	\end{equation}
\end{proof}
Finally, the property of volume preservation is known as \emph{Liouville's theorem}~\cite[Section 16]{arnol2013mathematical} and is proven in the following.
\begin{theorem}[Liouville's theorem]\label{thm:volume}
	Hamiltonian dynamics preserve volume. More precisely, for any $t>0$ and any measurable set $A\in\Bc(\R^{2d})$ it holds that
	\begin{equation}
		\iint \1_A(x,v) \dd x \dd v = \iint \1_{\phi^t(A)}(x,v) \dd x\dd v.
	\end{equation}
\end{theorem}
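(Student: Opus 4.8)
The plan is to reduce everything to the statement that the Jacobian determinant of the Hamiltonian flow is identically one, and then apply the change-of-variables formula. By \cref{thm:reversible} the map $\phi^t$ is a bijection of $\R^{2d}$ whose inverse is $(x,v)\mapsto(\phi^t_x(x,-v),-\phi^t_v(x,-v))$, and by the preceding smoothness theorem both $\phi^t$ and this inverse are continuously differentiable; hence $\phi^t$ is a $C^1$-diffeomorphism, so $\phi^t(A)$ is Borel for every $A\in\Bc(\R^{2d})$ and the change-of-variables formula yields
\[
	\iint \1_{\phi^t(A)}(x,v)\,\dd x\,\dd v = \iint \1_A(x,v)\,\bigl|\det D\phi^t(x,v)\bigr|\,\dd x\,\dd v.
\]
Thus it suffices to show $\bigl|\det D\phi^t(x,v)\bigr| = 1$ for every $(x,v)\in\R^{2d}$ and every $t\geq 0$.

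To that end, fix $(x,v)$ and set $U(t) = \frac{\partial}{\partial(x,v)}\phi^t(x,v)$. As recorded in the proof of the previous theorem, $U$ is the unique solution of the linear matrix ODE $U'(t) = A(t)U(t)$ with $U(0) = \Identity$, where $A(t) = DF(\phi^t(x,v))$ and $F(x,v) = \bigl(\frac{\partial H}{\partial v_i}(x,v),\, -\frac{\partial H}{\partial x_i}(x,v)\bigr)$. Let $J(t) = \det U(t)$. Since $U(t)$ is invertible (it is the differential of the diffeomorphism $\phi^t$) and $t\mapsto A(t)$ is continuous, Jacobi's formula for the derivative of a determinant gives $J'(t) = J(t)\,\tr\bigl(U(t)^{-1}U'(t)\bigr) = J(t)\,\tr A(t)$, using cyclicity of the trace in the last step.

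Next I compute $\tr A(t)$. Since $H = E + K$ is twice continuously differentiable by \cref{ass:hmc}, Schwarz's theorem applies and
\[
	\tr A(t) = (\dive F)\bigl(\phi^t(x,v)\bigr) = \sum_{i=1}^{d}\Bigl(\frac{\partial^2 H}{\partial x_i\,\partial v_i} - \frac{\partial^2 H}{\partial v_i\,\partial x_i}\Bigr)\bigl(\phi^t(x,v)\bigr) = 0 .
\]
Hence $J'(t)\equiv 0$, so $J(t) = J(0) = \det\Identity = 1$ for all $t$, and in particular $\bigl|\det D\phi^t(x,v)\bigr| = 1$. Substituting this into the change-of-variables identity above proves the theorem.

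The main obstacle I anticipate is the rigorous justification of the Liouville/Jacobi identity $J'(t) = J(t)\,\tr A(t)$ along the flow: this relies on already having the $C^1$-dependence of $\phi^t$ on initial conditions and the invertibility of $U(t)$ in hand from the preceding results, together with a careful differentiation of the determinant. Once those ingredients are granted, the divergence computation and the appeal to the change-of-variables formula are routine.
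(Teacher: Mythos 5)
Your proof is correct and takes essentially the same route as the paper: Jacobi's formula for the derivative of the Jacobian determinant combined with the vanishing divergence of the Hamiltonian vector field (Schwarz's theorem on $H\in C^2$), followed by the change-of-variables formula. The only organizational difference is that the paper evaluates the determinant's derivative just at $t=0$ (where $D\psi^0=\Identity$ makes the computation trivial) and propagates to all $t$ via the flow property $\psi^{t+s}=\psi^s\circ\psi^t$, whereas you integrate the Liouville identity $J'(t)=J(t)\,\tr A(t)$ along the whole trajectory, using invertibility of $D\phi^t$; both are valid.
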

\begin{proof}
	We provide the proof of a slightly more general result: Let $\psi^t$ be the solution operator for the general ordinary differential equation $\frac{\dd}{\dd t}y = F(y)$, \ie, $\psi^t(y) = y(t)$ where $y(0)=y$ and $\frac{\dd}{\dd t}y(t) = F(y(t))$ for $t>0$. Assume that $F$ is such that $\psi^t$ is a diffeomorphism for any $t>0$. Then, if $\dive(F)=0$, $\psi^t$ is volume preserving. The desired result then immediately follows by considering 
	\[
		F(x,v) = \begin{bmatrix}
			\frac{\partial H}{\partial v_i}(x,v)\\
			-\frac{\partial H}{\partial x_i}(x,v)
		\end{bmatrix}.
	\]
	Fix a measurable set $A$ and let us denote the volume of $\psi^t(A)$ as $v(t)$, \ie, $v(t) = \int\1_{\psi^t(A)}(y)\dd y$. We will show that the time derivative of $v$ equals $0$.
	By the transformation theorem for integrals we have that 
	\begin{equation}\label{eq:volume_preserving}
		\begin{aligned}
			\frac{\dd}{\dd t}v(t) &= \frac{\dd }{\dd t} \int \1_{\psi^t(A)}(y)\dd y \\
			&= \frac{\dd }{\dd t}\int \1_{\psi^t(A)}(\psi^t(z))|\det(D\psi^t(z))| \dd z \\
			&= \frac{\dd }{\dd t}\int \1_{A}(z)|\det(D\psi^t(z))| \dd z \\
			&= \int \1_{A}(z)\frac{\dd }{\dd t}|\det(D\psi^t(z))| \dd z.
		\end{aligned}
	\end{equation}
	By continuity if we assume that $t$ is sufficiently small, we can omit the absolute value as $\det(D\psi^t(z))|_{t=0} = 1$. Moreover, from Jacobi's formula it follows that
	\begin{equation}
		\begin{aligned}
			\frac{\dd}{\dd t}\det(D \psi^t) = \tr\Bigl( (D\psi^t)^T\frac{\dd}{\dd t}D\psi^t \Bigr).
		\end{aligned}
	\end{equation}
	In particular, using symmetry of the second derivatives it follows for $t=0$,
	\[
		\frac{\dd}{\dd t}\det(D \psi^t(y))\big|_{t=0} = \tr(\frac{\dd}{\dd t}D\psi^t(y))\big|_{t=0} = \tr(DF(y)) = \dive(F)(y)=0
	\]
	and, thus, $\frac{\dd}{\dd t}v(0)=0$.
	For $t>0$, since for any $s,t>0$, $\psi^{s+t}(y)=\bigl(\psi^s \circ \psi^t\bigr)(y)$ we can deduce that
	\begin{equation}
		\begin{aligned}
			v(t+s) &= \int_{\R^d} \1_{\psi^{t+s}(A)}(y)\dd y \\
			&= \int_{\R^d} \1_{\psi^{t+s}(A)}(\psi^s(z))|\det(D\psi^s(z))| \dd z \\
			&= \int_{\R^d} \1_{\psi^{t}(A)}(z)|\det(D\psi^s(z))| \dd z.
		\end{aligned}
	\end{equation}
	Using the same arguments as above we can deduce that $\frac{\dd}{\dd t}v(t) = \frac{\dd}{\dd s}v(s+t)\big|_{s=0} = 0$, which implies that $v$ is constant.
\end{proof}

\subsubsection{The \gls{hmc} algorithm}
The \Gls{hmc} algorithm consists of multiple steps as depicted in \cref{algo:hmc}. Given the previous iterate $(X_k,V_k)$, first we sample $\tilde{V}_k\sim\pi_V$ where $\pi_V$ admits a density that is proportional to $\exp{-K(v)}\dd v$, which is possible directly if $\pi_V$ is choosen, \eg, as a Gaussian. Secondly, we simulate Hamiltonian dynamics\footnote{In practice, the discretization of the Hamiltonian dynamics has to maintain volume preservance and reversibility, while conservation of the Hamiltonian $H$ is accounted for via a \gls{mh} step and we refer the reader to \cref{rmk:leapfrog} for a feasible method.} for a time $T>0$ and afterwards flip the sign of the velocity component resulting in $(\overline{X}_{k+1},\overline{V}_{k+1})$. Flipping the sign has no impact on the value of the Hamiltonian and leaves $\pi_{X,V}$ unchanged, but it renders the dynamics reversible (\cf, \cref{thm:reversible}). Lastly, $(\overline{X}_{k+1},\overline{V}_{k+1})$ is accepted as a new sample with the probability $\rho\left((X_k,\tilde{V}_k),(\overline{X}_{k+1},\overline{V}_{k+1})\right)$ where
\[
	\rho((x,v),(\tilde{x},\tilde{v}))=\min\left\{\exp{H(x,v)-H(\tilde{x},\tilde{v})}, 1\right\}
\]
Otherwise we set $(X_{k+1},V_{k+1}) = (X_k,\tilde{V}_k)$.
\begin{algorithm}[t]
	\begin{algorithmic}[1]
		\Require Initial values $(X_0,V_0)$, parameters $T>0$
		\For{$k=0,1,2,\dots$}
			\State Sample $\tilde{V}_k\sim \exp{-K(v)}\dd v$
			\State Simulate Hamiltonian dynamics and set
			\begin{equation}
				\begin{cases}
					\overline{X}_{k+1} = \phi_x^T(X_k,\tilde{V}_k)\\
					\overline{V}_{k+1} = -\phi_v^T(X_k,\tilde{V}_k)
				\end{cases}
			\end{equation}
			\State Set the new iterate according to 
			\begin{equation*}
				\begin{aligned}
					(X_{k+1},V_{k+1}) 
					= \begin{cases}
						(\overline{X}_{k+1},\overline{V}_{k+1})\quad&\text{with probability } \rho\left((X_k,\tilde{V}_k),(\overline{X}_{k+1},\overline{V}_{k+1})\right)\\
						(X^k,\tilde{V}_k)\quad&\text{else.}
					\end{cases}
				\end{aligned}
			\end{equation*}
		\EndFor
	\end{algorithmic}
	\caption{The Hamiltonian Monte Carlo algorithm.}
	\label{algo:hmc}
\end{algorithm}
\begin{remark}
	Energy conservation would render the accept/reject step in \cref{algo:hmc} obsolete if we could simulate Hamiltonian dynamics exactly as $\exp{H(X_k,\tilde{V}_k)-H(\overline{X}_{k+1},\overline{V}_{k+1})}=1$, that is, the acceptance rate always equals one. In practice, however, the acceptance criterion compensates for errors induced by the used discretization scheme.
\end{remark}
\begin{remark}
	Randomized as well as deterministic choices have been proposed for the simulation time $T$~\cite{durmus2020irreducibility,neal2001hamiltonian}. In practice, it is often necessary to tune $T$ by trial and error to obtain good performance \cite[Section 4.2]{neal2001hamiltonian}. If a discrete scheme for simulating $\phi^t$ is performed for one step, the scheme reduces to \gls{mala} \cite[Section 1]{durmus2020irreducibility}.
\end{remark}
The following result shows that, indeed, $\pi_{X,V}(\dd x,\dd v)\propto \exp{-E(x)-K(v)} \dd x\dd v$ is invariant for \gls{hmc}.
\begin{theorem}\label{thm:HMC_invariant}
	\Gls{hmc} admits $\pi_{X,V}(\dd x,\dd v)\propto \exp{-E(x)-K(v)} \dd x\dd v$ as a stationary measure if the Hamiltonian dynamics are simulated either exactly or using a scheme which is volume preserving and reversible in the sense of \cref{thm:reversible}.
\end{theorem}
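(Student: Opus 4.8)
The plan is to realize the \gls{hmc} transition kernel $R$ as a composition $R = R_1 R_2$ of two Markov kernels, to prove that each of $R_1$ and $R_2$ leaves $\pi_{X,V}$ invariant, and to conclude by invariance being preserved under composition. Here $R_1\bigl((x,v),\,\cdot\,\bigr) = \delta_x\otimes\pi_V$ is the momentum–refreshment kernel of Step~1 of \cref{algo:hmc}, and $R_2$ is the kernel of the deterministic–proposal Metropolis step consisting of the (exact or discretized) Hamiltonian integration, the momentum flip, and the accept/reject rule (Steps~2--4). Since $\pi_{X,V}R = (\pi_{X,V}R_1)R_2$, the equalities $\pi_{X,V}R_1 = \pi_{X,V}$ and $\pi_{X,V}R_2 = \pi_{X,V}$ will give the claim.

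The kernel $R_1$ is easy. Because $H(x,v) = E(x)+K(v)$, the density factorizes as $p_{X,V}(x,v) = p_X(x)\,p_V(v)$ with $p_X(x)\propto\exp{-E(x)}$ and $p_V(v)\propto\exp{-K(v)}$; in particular $\pi_{X,V} = \pi_X\otimes\pi_V$, where $\pi_V$ is a genuine probability measure by \cref{ass:hmc}. For any bounded measurable $f$ one then has
\[
	\int R_1 f\,\dd\pi_{X,V} = \int\!\!\int f(x,v')\,\pi_V(\dd v')\,\pi_X(\dd x) = \int f\,\dd\pi_{X,V},
\]
so $\pi_{X,V}R_1 = \pi_{X,V}$.

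The substantive part is the invariance of $R_2$. Let $\Psi(x,v) = \bigl(\phi^T_x(x,v),\,-\phi^T_v(x,v)\bigr)$ be the proposal map (or, in the discretized case, the assumed volume–preserving reversible surrogate). Two properties are available: $\Psi$ is an involution, $\Psi\circ\Psi = \mathrm{id}$, by \cref{thm:reversible}; and $\Psi$ preserves Lebesgue measure on $\R^{2d}$, since $\phi^T$ does by \cref{thm:volume} and the momentum flip $(x,v)\mapsto(x,-v)$ has unit Jacobian. Since $p_{X,V}\propto\exp{-H}$, the acceptance probability is exactly $\alpha(z) := \rho\bigl(z,\Psi(z)\bigr) = \min\{1,\,p_{X,V}(\Psi(z))/p_{X,V}(z)\}$, so $R_2(z,\,\cdot\,) = \alpha(z)\,\delta_{\Psi(z)} + (1-\alpha(z))\,\delta_z$. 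Writing $p = p_{X,V}$, invariance of $R_2$ reduces to the identity $\int \alpha(z)\,f(\Psi(z))\,p(z)\,\dd z = \int \alpha(z)\,f(z)\,p(z)\,\dd z$ for all bounded measurable $f$. Substituting $w = \Psi(z)$ — legitimate because $\Psi$ is a measure–preserving bijection with $\Psi^{-1}=\Psi$ — rewrites the left side as $\int \alpha(\Psi(w))\,f(w)\,p(\Psi(w))\,\dd w$, so it suffices to check the pointwise equality $\alpha(\Psi(w))\,p(\Psi(w)) = \alpha(w)\,p(w)$. This holds because $\alpha(w)p(w) = \min\{1,\,p(\Psi(w))/p(w)\}\,p(w) = \min\{p(w),\,p(\Psi(w))\}$, which is symmetric under $w\leftrightarrow\Psi(w)$ once $\Psi\circ\Psi=\mathrm{id}$ is used. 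In the exact–flow case, energy conservation together with $K(v)=K(-v)$ gives $H\circ\Psi=H$, hence $\alpha\equiv 1$ and $p\circ\Psi=p$, and the displayed identity is immediate; the discretized case is the one where $\alpha$ does genuine work. Combining the two invariances yields $\pi_{X,V}R = \pi_{X,V}$.

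The main obstacle I anticipate is treating the deterministic proposal correctly: since $\delta_{\Psi(z)}$ is singular with respect to Lebesgue measure, the usual density–level detailed–balance manipulation is unavailable, and one must argue at the level of measures through the change–of–variables formula — which is precisely where volume preservation (Liouville's theorem) is consumed. A secondary subtlety is making the argument cover simultaneously the exact flow and a scheme that is only assumed volume–preserving and reversible; the computation above is arranged so that it invokes only these two properties of $\Psi$, using energy conservation merely to observe that the exact case degenerates to $\alpha\equiv1$.
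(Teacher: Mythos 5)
Your proof is correct and follows essentially the same route as the paper: decompose the update into the momentum-refreshment kernel and the deterministic-proposal Metropolis step, handle the first via the product structure $\pi_{X,V}=\pi_X\otimes\pi_V$, and handle the second via the change of variables $w=\Psi(z)$ using the involution property (\cref{thm:reversible}) and volume preservation (\cref{thm:volume}); your pointwise identity $\alpha(w)p(w)=\min\{p(w),p(\Psi(w))\}$ is exactly the detailed-balance relation $p_{X,V}\,\rho$ symmetric under swapping arguments that the paper uses. The only difference is cosmetic: you treat the exact flow as the special case $\alpha\equiv 1$ of one computation, whereas the paper writes the exact and discretized cases out separately.
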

\begin{proof}
	The \gls{hmc} algorithm can be separated in two distinct steps: Sampling the proposal momentum $\tilde{V}_{k}\sim \exp{-K(v)}$ and, afterwards, simulating the Hamiltonian dynamics and performing an accept/reject step. We can show that $\pi_{X,V}$ is invariant with respect to the entire \gls{hmc} update by showing that it is invariant with respect to both of these steps individually. For the first step this is trivially satisfied: If $(X,V)\sim\pi_{X,V}$ then $X$ and $V$ are independent and for $\tilde{V}\sim \pi_V$ which is again independent of $X$ it holds that $(X,V)\sim (X,\tilde{V})\sim \pi_{X,V}$.
	For the second step, we prove invariance by showing that the target distribution satisfies the detailed balance conditions. Let $f:\R^{2d}\rightarrow \R$ be an arbitrary bounded and measurable function and let $Q$ be the transition kernel induced by the Hamiltonian dynamics, \ie,
	\[
	Q((x,v),\cdot) = \delta_{(\phi^T_x(x,v),-\phi^T_v(x,v))}.
	\]
	We assume for now that the Hamiltonian dynamics are simulated exactly so that the acceptance probability is always equal to one.
	Then, for any bounded and measurable $f$ we obtain that
	\begin{equation}
		\begin{aligned}
			&\iint f((\tilde{x},\tilde{v}),(x,v)) p_{X,V}(x,v) Q((x,v), \dd (\tilde{x},\tilde{v})) \dd (x,v) \\
			&\propto \int f((\phi_x^T(x,v),-\phi_v^T(x,v)),(x,v)) \exp{-E(x)-K(v)} \dd (x,v) \\
			&\overset{*}{=} \int f((\phi_x^T(x,v),-\phi_v^T(x,v)),(x,v)) \exp{-E(\phi^T_x(x,v))-K(-\phi^T_v(x,v))} \dd (x,v) \\
			&\overset{**}{=} \int f((x,v),(\phi_x^{T}(x,v),-\phi_v^{T}(x,v))) \exp{-E(x)-K(-v)} \dd (x,v) \\
			&= \int f((x,v),(\tilde{x},\tilde{v})) p_{X,V}(x,v) Q((x,v), \dd (\tilde{x},\tilde{v})) \dd (x,v)
		\end{aligned}
	\end{equation}
	where the equality marked with $*$ is due to the Hamiltonian $H$ being invariant under Hamiltonian dynamics as well as under flipping the sign of the momentum component. The equality marked with $**$ follows after replacing $(x,v)$ with $(\phi_x^{T}(x,v),-\phi_v^{T}(x,v))$ using the transformation theorem for integrals and employing reversibility of Hamiltonian dynamics, which was proved in \cref{thm:reversible}. Since Hamiltonian dynamics as well as flipping the sign are volume preserving, the determinant of the Jacobian of the transformation is equal to one. As a result we obtain detailed balance and, thus, invariance of $\pi_{X,V}$.
	
	If $\phi^T$ only constitutes an approximate scheme for simulating the Hamiltonian dynamics which is, however, volume preserving and reversible (\eg, the leapfrog scheme) similar arguments apply: In this case, the transition kernel of the Hamiltonian dynamics including the Metropolis acceptance step reads as
	\begin{equation}
		\begin{aligned}
			R((x,v),\dd (\tilde{x},\tilde{v})) &= \delta_{(\phi^T_x(x,v),-\phi^T_v(x,v))}(\tilde{x},\tilde{v}) \rho((x,v),(\tilde{x},\tilde{v}) ) \dd (\tilde{x},\tilde{v})\\
			&+ \delta_{(x,v)}(\tilde{x},\tilde{v}) \left( 1-\rho((x,v),(\phi^T_x(x,v),-\phi^T_v(x,v))) \right)\dd (\tilde{x},\tilde{v})
		\end{aligned}
	\end{equation}
	Using the fact that for any $x,v,\tilde{x},\tilde{v}$ it holds that
	\[
		p_{X,V}(x,v) \rho((x,v),(\tilde{x},\tilde{v})) = p_{X,V}(\tilde{x},\tilde{v}) \rho((\tilde{x},\tilde{v})(x,v))
	\]
	and again volume preservence and reversibility of the scheme $\phi^T$ we find for any bounded and measurable $f$ that
	\begin{equation}
		\begin{aligned}
			&\iint f((x,v),(\tilde{x},\tilde{v})) p_{X,V}(x,v) R((x,v),\dd (\tilde{x},\tilde{v}) \dd (x,v)\\
			&= \int f((x,v),(\phi^T_x(x,v),-\phi^T_v(x,v))) p_{X,V}(x,v) \rho((x,v),(\phi^T_x(x,v),-\phi^T_v(x,v))) \dd (x,v) \\
			&\qquad+ \int f((x,v),(x,v)) p_{X,V}(x,v) \left( 1-\rho((x,v),(\phi^T_x(x,v),-\phi^T_v(x,v))) \right)\dd (x,v)\\
			&= \int f((\phi^T_x(x,v),-\phi^T_v(x,v)),(x,v)) p_{X,V}(x,v) \rho((x,v),(\phi^T_x(x,v),-\phi^T_v(x,v))) \dd (x,v) \\
			&\qquad+ \int f((x,v),(x,v)) p_{X,V}(x,v) \left( 1-\rho((x,v),(\phi^T_x(x,v),-\phi^T_v(x,v))) \right)\dd (x,v)\\
			&=\iint f((\tilde{x},\tilde{v}),(x,v)) p_{X,V}(x,v) R((x,v),\dd (\tilde{x},\tilde{v}) \dd (x,v)\\
		\end{aligned}
	\end{equation}
	concluding the proof.
\end{proof}
\begin{remark}\label{rmk:leapfrog}
	\Cref{thm:HMC_invariant} requires us to choose a discretization of Hamiltonian dynamics which maintains reversibility and volume preservance whereas conservation of the total energy is accounted for by the \gls{mh} acceptance step within the algorithm. In practice the most popular choice for such a discretization is the Störmer-Verlet---or leapfrog---method \cite{neal2001hamiltonian,durmus2020irreducibility}. Its update rule with step size $h>0$ reads as
	\begin{equation}
		\begin{cases}
			v_{n+\frac{1}{2}} = v_n - \frac{h}{2}\nabla_x E(x_n)\\
			x_{n+1} = x_n + h M^{-1}v_{n+\frac{1}{2}} \\
			v_{n+1} = v_{n+\frac{1}{2}} - \frac{h}{2}\nabla_x E (x_{n+1}).
		\end{cases}
	\end{equation}
The choice of the discretization step size $h>0$ as well as the number of steps performed in each iteration then become crucial parameters of the method, \cf, \cref{rmk:HMC_param}.
\end{remark}

Ergodicity of the \gls{hmc} algorithm has been established in various ways~\cite{livingstone2019geometric,durmus2020irreducibility,bou2017randomized,mangoubi2018dimensionally}.
We present a result here and refer to \cite[Theorem 2]{durmus2020irreducibility} for a proof.

\begin{theorem}
	Let $K(v) = \frac{\|v\|^2}{2}$ and let $E$ be continuously differentiable with Lipschitz continuous gradient. Let $\phi^T$ be the leapfrog scheme for the simulation of Hamiltonian dynamics with step size $h>0$ and number of steps $T$. Assume in addition that either
	\begin{itemize}
		\item there exist $c>0$ and $\beta\in [0,1)$ such that for all $x$, $\|\nabla E(x)\|\leq c(1+\|x\|^\beta)$, or
		\item there exists $c>0$ such that $\|\nabla E(x)\|\leq c(1+\|x\|)$ and $\bigl(1+hL^{\frac{1}{2}} + \nu(hL^{\frac{1}{2}})\bigr)^T-1<1$ where $\nu$ is defined as $\nu(s) = 1+\frac{s}{2} + \frac{s^2}{4}$.
	\end{itemize}
	Then for $\pi$-a.e. $x\in\R^d$, $\tv(\delta_x R^n_X, \pi) \rightarrow 0$ as $n \rightarrow \infty$, where $R_X$ denotes the transition kernel on the position variable.
\end{theorem}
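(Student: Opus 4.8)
The statement is quoted from \cite{durmus2020irreducibility}; here we only outline the strategy one would follow. The plan is to verify the classical ingredients that, for a Markov kernel on $\R^d$ possessing a \emph{probability} measure as invariant distribution, force convergence in total variation from $\pi$-almost every initial point, namely $\pi$-irreducibility, aperiodicity, and invariance of $\pi$ (the standard theory of $\pi$-irreducible Markov chains, see, \eg, \cite{meyn2012markov,robert1999monte}). First I would note that, because the momentum is \emph{fully} refreshed from $\pi_V=\Nc(0,\Identity)$ at the start of every iteration of \cref{algo:hmc}, the position sequence $(X_k)_k$ is by itself a time-homogeneous Markov chain with some transition kernel $R_X$, and that $\pi_X$ with density $\propto\exp{-E}$ is invariant for $R_X$ --- a direct consequence of the invariance of $\pi_{X,V}$ from \cref{thm:HMC_invariant} together with the product form $p_{X,V}(x,v)\propto\exp{-E(x)-K(v)}$. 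One step of $R_X$ from $x$ proposes $\phi_x^T(x,\tilde V)$ with $\tilde V\sim\Nc(0,\Identity)$ --- the leapfrog trajectory run for $T$ steps --- and then either accepts it or retains $x$.

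The heart of the matter, and the reason the growth hypotheses on $\nabla E$ are needed, is $\pi$-irreducibility. Since the only source of randomness in a single step is the Gaussian momentum, one must show that the pushforward of $\Nc(0,\Identity)$ under the deterministic map $v\mapsto\phi_x^T(x,v)$ is absolutely continuous with respect to the Lebesgue measure on a suitable open set, and that by composing finitely many refresh-and-integrate steps the chain can reach an arbitrary open ball. I would establish this by controlling the leapfrog recursion directly: the position component of $\phi^T$, read as a function of the initial momentum, is then a map whose Jacobian is nonsingular on a large set and whose image covers large balls. Exactly here the sublinear bound $\|\nabla E(x)\|\le c(1+\|x\|^\beta)$ with $\beta<1$, or in the second alternative the linear bound together with the smallness condition on $hL^{1/2}$, enters: it prevents the leapfrog iterates from expanding too quickly, which keeps this map well behaved uniformly over compact sets of starting points. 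The upshot is a minorization inequality of the form $R_X^{n_0}(x,\cdot)\ge\varepsilon\,\varphi(\cdot)$ holding uniformly over each compact set, \ie, every compact set is small.

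Aperiodicity is then immediate: since the leapfrog scheme does not conserve the Hamiltonian exactly, the Metropolis acceptance probability is strictly below one on a set of positive Lebesgue measure, so $R_X(x,\{x\})>0$ there, and once a fixed small compact set can be reached the chain can also stay in it for one extra step, which precludes a cyclic decomposition. Combining $\pi$-irreducibility, aperiodicity, and the fact that $\pi=\pi_X$ is an invariant probability, the standard ergodic theorem yields $\tv(\delta_x R_X^n,\pi)\to 0$ for $\pi$-a.e.\ $x\in\R^d$; upgrading this to convergence from \emph{every} $x$ would in addition require Harris recurrence, obtained from a Foster--Lyapunov drift inequality $R_X W\le\lambda W+b\,\1_C$ toward a small compact set $C$ with coercive $W$, but this is not needed here. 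The main obstacle is unambiguously the irreducibility step: getting a quantitative grip on the leapfrog integrator --- simultaneously an absolute-continuity statement for the momentum pushforward and a reachability statement, both uniform in the starting point over compacta --- is the delicate part, and it is precisely there that the two alternative growth assumptions on $\nabla E$ cannot be dispensed with; everything else is routine given the results already established above.
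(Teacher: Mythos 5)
The paper does not prove this theorem at all: it states the result and refers to \cite[Theorem 2]{durmus2020irreducibility} for the proof, so there is no in-paper argument to compare against. Your outline follows exactly the strategy of that cited reference---invariance of $\pi_X$ (inherited from \cref{thm:HMC_invariant} and the product form of $\pi_{X,V}$), $\pi$-irreducibility via absolute continuity of the pushforward of the Gaussian momentum under the leapfrog map (which is where the growth conditions on $\nabla E$ are used), aperiodicity, and then the standard ergodic theorem for $\pi$-irreducible chains---but, as you acknowledge, it remains a sketch: the quantitative control of the leapfrog integrator needed for the irreducibility/minorization step is stated as the key difficulty rather than carried out, so your proposal is a correct road map in the spirit of the cited source rather than a self-contained proof.
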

A proof of geometric ergodicity of \gls{hmc} is provided under more technical conditions in~\cite{durmus2020irreducibility}.

\begin{remark}[Parameter choices]\label{rmk:HMC_param}
	Restricting to the practically most relevant case of $K(v)=\tfrac{1}{2}\|v\|^2$, the remaining parameters to choose within \gls{hmc} are the step size $h>0$ as well as the number of steps $L\in\N$ of the discretization of the Hamiltonian dynamics (\cf, \cref{rmk:leapfrog}). Regarding the step size $h$, too large values might lead to low acceptance rates. On the other hand, small $h$ will lead to high computational efforts (large $L$) or a slowly moving chain (small $L$). Regarding the number of steps $L$ on the other hand, low values might lead to less exploring and, thus, slower mixing of the chain, larger values to high computational cost. As elaborated in~\cite{neal2001hamiltonian}, it is advised to perform several tuning runs of \gls{hmc}, first setting $h$ and afterwards $L$. As explained in~\cite[Section 4.2]{neal2001hamiltonian} the region of stability for $h$ is governed roughly by the square root of the smallest eigenvalue of the covariance matrix of $\pi_X$ and $L=100$ is a reasonable starting point for the number of steps for complex problems. On the other hand, the \gls{nuts}~\cite{hoffman2014no} offers a possible alternative. In \gls{nuts} the need for choosing a number of steps $L$ is alleviated. Moreover, in \cite{hoffman2014no} approaches for an automatic choice of the step size $h$ are provided.
\end{remark}

\subsection{Further reading}
\subsubsection{Time-inhomogeneous chains}
Many approaches for sampling combine existing sampling techniques with some type of \emph{annealing} or \emph{tempering}. That is, instead of directly targeting the distribution $\pi$ with a \gls{mc}, one considers a family of distributions $(\pi_n)_{n=1}^N$ such that $\pi_0=\pi$ and $\pi_N$ is a simple reference distribution. Then sampling is performed by sampling successively from $\pi_n$ for $n=N,\dots,1$. Examples include geometric tempering~\cite{neal2001annealed,chehab2024provable}, annealed Langevin sampling~\cite{song2019generative}, or diffusion at absolute zero~\cite{habring2025diffusion}. Subsequently, annealed Langevin sampling led to diffusion models~\cite{song2020score}.

In adaptive \gls{mcmc}~\cite{andrieu2008tutorial} it is assumed that we have access to a family of Markov transition kernels $R_\theta$ which are parametrized by $\theta$ and such that for any $\theta$, $R_\theta$ is ergodic with stationary distribution $\theta$. During the simulation, the parameter $\theta$ is chosen adaptively and afterwards one step is performed using the kernel $R_\theta$. In order to ensure that the resulting chain is still ergodic with stationary measure $\pi$ the adaptation of $\theta$ diminishes over time.

\subsubsection{Deterministic approximation of $\pi$}
Instead of directly trying to sample from our target distribution $\pi$, the idea of deterministic approximation~\cite[Chapter 10]{bishop2006pattern} is to approximate $\pi$ by a tractable distribution $\pi_\theta$ which is easy to sample from. A popular approach is to choose a family of distributions $(\pi_\theta)_\theta$ parametrized by $\theta$ and then find the value of $\theta$ which yields the best approximation of $\pi$. For instance in \emph{variational inference}~\cite[Section 10.1]{bishop2006pattern} the parameter $\theta$ is determined by minimizing the KL divergence
\begin{equation}\label{eq:vi}
	\min_\theta \KLDivergence{\pi_\theta}{\pi}.
\end{equation}
Assuming that the occurring distributions admit strictly positive densities with respect to the Lebesgue measure as $\tfrac{\dd \pi}{\dd x}(x) = p(x)$ and $\tfrac{\dd \pi_\theta}{\dd x}(x) = p_\theta(x)$, the gradient of this objective may be computed as
\begin{equation}\label{eq:VI}
	\begin{aligned}
		\nabla_\theta \KLDivergence{\pi_\theta}{\pi} = \nabla_\theta \left[-\int p_\theta(x) \log\left(\frac{p(x)}{p_\theta(x)}\right)\dd x \right]\\
		= \int -\nabla_\theta p_\theta(x) \log\left(p(x)\right) + \nabla_\theta p_\theta(x) \log\left(p_\theta(x)\right)+ \nabla_\theta p_\theta (x) \dd x \\
		= \E_{X\sim\pi_\theta}\left[ \nabla_\theta\log\left(p_\theta (X)\right) \left\{ \log\left(p_\theta(X)\right) - \log\left(p(X)\right) \right\} \right].
	\end{aligned}
\end{equation}
In the last equality we used the elementary equality $\nabla_\theta p_\theta = p_\theta\nabla\log\left(p_\theta\right)$ and the fact that by integration by parts
\[
	\int \nabla_\theta p_\theta (x) \dd x =0.
\]
Note that the expectation in~\eqref{eq:VI} is with respect to the tractable distribution $p_\theta$ and, thus, the gradient can be approximated effectively.

In expectation propagation, on the other hand, the objective from variational inference is simply changed by flipping the arguments of the KL divergence leading to 
\begin{equation}\label{eq:ep}
	\min_\theta \KLDivergence{\pi}{\pi_\theta}.
\end{equation}
For more information we refer the interested reader to \cite[Section 10.7]{bishop2006pattern}.

A particularly interesting approach connected to variational inference is posed by \emph{Stein variational gradient descent}~\cite{liu2016stein}. There, it is proposed to approximate the target distribution $\pi_X$ as the push-forward measure $T_\sharp \pi_Z$ with some fixed simple reference distribution $\pi_Z$ and a transformation $T$ to be determined. The transformation $T$ is ideally chosen to minimize 
$T\mapsto \KLDivergence{T_\sharp \pi_Z}{\pi}$. In~\cite{liu2016stein}, for the specific case that the transformation $T$ is a perturbation of the identity $T(x)= x + f(x)$ for some function $f\in\Hc^d$ where $\Hc^d$ is a reproducing kernel Hilbert space with reproducing kernel $k(\emptyarg,\emptyarg)$ the authors derive a formula for the gradient of $ \KLDivergence{T_\sharp \pi_Z}{\pi}$ with respect to the perturbation $f$ which reads as
\begin{equation*}
	\begin{aligned}
		\nabla_f \KLDivergence{T_\sharp \pi_Z}{\pi}|_{f=0} = -\E_{Z\sim \pi_Z}\left[k(Z,\emptyarg) \nabla_x \log(p_X(Z)) + \nabla_x k(Z,\emptyarg)\right]\\
		\eqqcolon -\phi_{\pi_Z,\pi}.
	\end{aligned}
\end{equation*}
Therefore, in a gradient descent fashion, we can reduce the KL divergence using the iteration $\pi^0 = \pi_Z$ and for $k=1,2,\dots$
\begin{equation}
		\pi^{k+1} = T^k_\sharp\pi^k,\quad\text{where } T^k(x) = x + \epsilon_k \phi_{\pi^k,\pi}(x)
\end{equation}
where $(\epsilon_k)_k$ is a sequence of step sizes. Interestingly, this iteration is implemented not on the space of probability distributions but in sample space, by initializing randomly $X_0\sim\pi_Z$ and updating the sample according to
\[
	X_{k+1} = X_k + \epsilon_k \phi_{\pi^k,\pi} (X_k).
\]
Therefore, while initially motivated using variational inference, the method, in fact, yields a sampling algorithm. For details we refer to~\cite{liu2016stein}.

\section{Numerical experiments}\label{sec:experiments}
In this section, we turn to practical experiments with models that explicitly allow us to verify key theoretical properties that are required for valid energy-based modeling.
\subsection{Energy model}
Although we discussed numerous potential architectures for constructing suitable energy functionals in~\cref{ssec:architecture}, verifying essential properties such as integrability and appropriate growth conditions of the associated Gibbs distributions remains challenging for many architectures---especially those based on deep neural networks.
Consequently, we restrict our focus here to the classical \gls{foe} model defined as
\begin{equation}
	E_\theta(x) = \sum_{i=1}^{n} \sum_{j=1}^{\NumFilters} \Potential_{j} \bigl( (K_j \Image)_{i} \bigr)
	\label{eq:foe numerical}
\end{equation}
which corresponds to a Gibbs distribution with density
\begin{equation}
	p_\theta(x) \propto \prod_{i=1}^n\prod_{j=1}^\NumFilters \exp{-\phi_j\bigl( (K_jx)_i \bigr)}.
	\label{eq:gibbs density}
\end{equation}

In this formulation, the parameters are the weights in the linear operators \( K_1, K_2, \dotsc, K_\NumFilters \) and any potential parameters of the potentials \( \phi_1, \phi_2, \dotsc, \phi_\NumFilters \).
The linear operators are typically chosen to encode the common assumption that natural images are stationary, \ie, that the likelihood of any feature in the image is independent of its spatial location.
Although we do not give a rigorous proof here, the model is stationary if the linear operators encode convolutions with circular boundary conditions, which is our choice in this work.

Regarding the potentials, Roth and Black~\cite{roth2009fields} originally utilized the potentials that correspond to the leptokurtic Student-t distribution; a choice motivated by the empirical observation the responses of natural images to filters follow a leptokurtic distribution.
The belief that the factors coincide with the corresponding filter marginals is a misconception that is sometimes found in publications to this day, even though the seminal works of Zhu, Wu, and Mumford~\cite{Zhu:1997a,Zhu:1997b,Zhu:1998} clarified that potentials serve instead as dual variables in a maximum entropy problem that ensure that the model marginals match target statistics rather than directly mirroring empirical marginals.

Indeed, from a principled standpoint, potentials should have finite support since natural images and their filter responses lie within bounded intervals.
Thus, for a truly representative \gls{foe} model, potentials should ideally reflect finite support dependent on the chosen filters.
To approximate potentials with finite support while simultaneously satisfying the smoothness requirements of the various optimization and sampling algorithms, we utilize negative-log \gls{gmm} as potentials.
Explicitly, each potential is modeled as
\begin{equation}
	\Potential_j(x) = - \log \biggl( \sum_{i=1}^\NumWeights w_{j, i} \exp{-\frac{(x - \mu_i)^2}{2\sigma^2} } \biggr),
	\label{eq:negative-log gmm potential}
\end{equation}
where the means \( \mu_1, \mu_2, \dotsc, \mu_\NumWeights \) are positioned on an equidistant grid within an interval \( [-\nu, \nu] \).
Here, the parameter \( \nu \in \R \) must be set large enough to account for strong filter responses but should be small enough such that the potential can exhibit small-scale features where needed, without the number of components becoming excessively large.
The variance \( \sigma^2 \) is chosen a-priori and fixed.

In~\cref{fig:parametrization examples}, we demonstrate how negative-log \glspl{gmm} can effectively approximate common potentials such as those derived from the Laplace distribution, the Student-t distribution, or the Mexican hat on the chosen interval.
Outside of this interval, the potentials grow quadratically towards infinity.
This behaviour is intentional and simulates the theoretically desirable finite-support property while remaining sufficiently smooth to support inference via first-order optimization and sampling methods that are essential for during learning and during the resolution of the inverse problems.
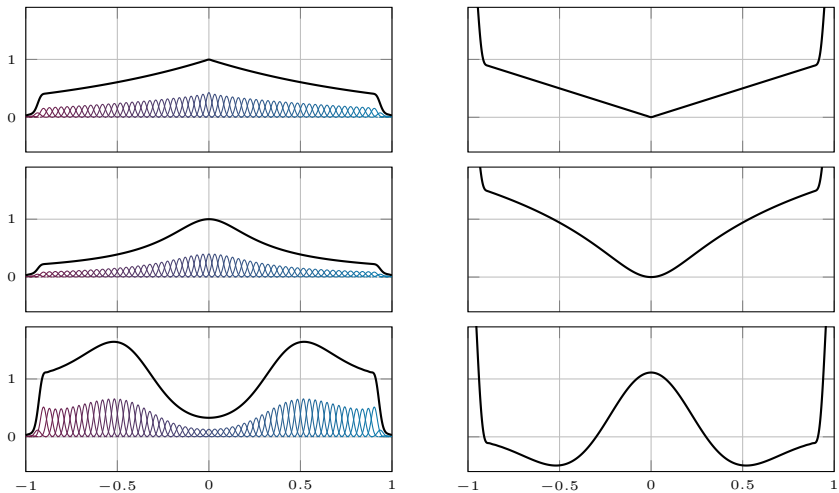
\begin{figure}
	\centering
	\begin{tikzpicture}
		\begin{groupplot}[
			group style={%
				group size=2 by 3,
				x descriptions at=edge bottom,
				y descriptions at=edge left,
				vertical sep=.2cm
			},
			ymin=-0.6,
			ymax=1.9,
			height=3.5cm,
			width=6.4cm,
			xmin=-1,
			xmax=1,
			ticklabel style={font=\tiny},
			grid=major,
		]
			\nextgroupplot
			\pgfplotsforeachungrouped \ccol in {1,3,...,125}
			{%
				\pgfmathparse{int(\ccol / 125 * 100)}
				\edef\tmp{%
					\noexpand\addplot [restrict y to domain=1e-3:inf, maincolor!\pgfmathresult!secondarycolor] table[col sep=comma, x index={0}, y index={\ccol}]{figures/potentials-as-gmms/abs.csv};%
				}\tmp%
			}%
			\addplot [black, thick] table[col sep=comma, x index={0}, y index={126}]{figures/potentials-as-gmms/abs.csv};%
			\nextgroupplot
			\addplot [black, thick] table[col sep=comma, x index={0}, y index={127}]{figures/potentials-as-gmms/abs.csv};%

			\nextgroupplot
			\pgfplotsforeachungrouped \ccol in {1,3,...,125}
			{%
				\pgfmathparse{int(\ccol / 125 * 100)}
				\edef\tmp{%
					\noexpand\addplot [restrict y to domain=1e-3:inf, maincolor!\pgfmathresult!secondarycolor] table[col sep=comma, x index={0}, y index={\ccol}]{figures/potentials-as-gmms/studentt.csv};%
				}\tmp%
			}%
			\addplot [black, thick] table[col sep=comma, x index={0}, y index={126}]{figures/potentials-as-gmms/studentt.csv};%
			\nextgroupplot
			\addplot [black, thick] table[col sep=comma, x index={0}, y index={127}]{figures/potentials-as-gmms/studentt.csv};%

			\nextgroupplot
			\pgfplotsforeachungrouped \ccol in {1,3,...,125}
			{%
				\pgfmathparse{int(\ccol / 125 * 100)}
				\edef\tmp{%
					\noexpand\addplot [restrict y to domain=1e-3:inf, maincolor!\pgfmathresult!secondarycolor] table[col sep=comma, x index={0}, y index={\ccol}]{figures/potentials-as-gmms/mhat.csv};%
				}\tmp%
			}%
			\addplot [black, thick] table[col sep=comma, x index={0}, y index={126}]{figures/potentials-as-gmms/mhat.csv};%
			\nextgroupplot
			\addplot [black, thick] table[col sep=comma, x index={0}, y index={127}]{figures/potentials-as-gmms/mhat.csv};%
		\end{groupplot}
	\end{tikzpicture}
	\caption{%
		Approximation of various densities via \glspl{gmm} (left) and the corresponding potentials (right).
		From top to bottom: Laplace; Student-t; Mexican-hat.
		To avoid clutter we only plot every second component.
	}%
	\label{fig:parametrization examples}
\end{figure}

The structure of the model and the choice of the negative-log Gaussian mixture potentials facilitate the explicit verification of the properties that are essential for valid energy-based modeling.
Specifically, the density defined by~\eqref{eq:gibbs density} can be shown to be a Gaussian mixture model whose precision matrix is given by
\begin{equation}
	\frac{1}{\sigma^2}\sum_{i=1}^{\NumFilters} (K_i)^\top K_i.
\end{equation}
The rank of this precision matrix critically depends on the properties of the convolution matrices \( K_1, K_2,\dotsc, K_\NumFilters \).
A full-rank precision matrix implies a valid Gaussian mixture density with respect to the Lebesgue measure, whereas a rank-deficient matrix results in a distribution supported on a lower-dimensional subspace without a valid density with respect to the Lebesgue measure.
To ensure a full-rank precision matrix, we adopt the approach from from \cite{chen2013revisiting} that construct the filters \( \Filters_1, \Filters_2, \dotsc, \Filters_{\NumFilters} \) that correspond to the convolution matrices \( K_1, K_2, \dotsc, K_\NumFilters \) from a linear combination of the basis filters \(  b_1, b_2, \dotsc, b_{\NumBasis}  \) by
\begin{equation}
	k_i = \sum_{j=1}^{\NumBasis} \beta_{i, j} b_j 
\end{equation}
where the coefficients \( (\beta_{i, j})_{i=1,j=1}^{\NumBasis, \NumFilters} \) are learnable.
In particular, the basis filters \( b_1, b_2, \dotsc, b_\NumFilters \) are given by the discrete cosine transform of size \( 5 \times 5 \), which results in \( \NumFilters = 25 \) filters and ensures that the precision matrix has full rank by a straightforward application of the convolution theorem so long as all coefficients are different from zero.
In practice, however, the constant basis vector of the DCT is typically excluded during training to enforce equivariance concerning radiometric shifts.
To reconcile theoretical requirements with this practical consideration, we implicitly incorporate smooth constraints on the excluded constant component to prevent radiometric biases, though these constraints are rarely active during training.

\subsection{Parameter estimation}
We consider two different methods for estimating the model parameters:
score-matching and bilevel optimization.
In detail, we define the target density as \( g_\sigma * \bigl( \tfrac{1}{N} \sum_{i=1}^N \delta_{x_i} \bigr) \) with \( \sigma = \num{2e-2} \) and where \( N \) is the number of overlapping patches of size \( 96 \times 96 \) in the BSDS500 training dataset and \( x_{1}, \dotsc, x_{N} \in \R^{96\times 96} \) are those patches.\footnote{
	There was no noise added to the reference images in bilvel learning.
}
The parameters of the bilevel model and the score-matching model were both found by optimizing the respective objectives with the Adam optimizer~\cite{kingma_adam_2015}.
The parameters \( \beta_1 = 0.9 \) and \( \beta_2 = 0.999 \) of the Adam optimizer were set to standard choices, but we found it necessary to tune the learning rates of the coefficients of the filters and the weights of the potentials separately.
For both learning methods, we used a learning rate of \num{1e-5} for the weights of the potentials, and used \num{2e-4} and \num{5e-4} for the coefficients of the filters for the score-matching training and the bilevel training, respectively.
The parameter \( \nu \) was set to \num{0.8}, which is informed by the largest magnitude of any filter response prior to training that was \( \num{0.8185} \).\footnote{
	As is well known, the filter responses are highly leptokurtic:
	The first and 99th percentile of the filter responses prior to training were \num{-0.1131} and \num{0.1161}, respectively.
}
The negative-log \gls{gmm} potentials were given \( \NumWeights = \num{123} \) components and the variance was chose as \( \sigma_{\mathrm{sm}}^2 = \frac{2\nu}{(\NumWeights - 1)} \) for the score-matching training and as \( 1.5\sigma_{\mathrm{sm}}^2 \) for the bilevel training.
The weights of all \( \NumFilters = 24 \) potential functions were initialized with the vector
\begin{equation}
	\operatorname{proj}_{\triangle^{\NumWeights}}(w)
\end{equation}
where \( w \in \R^\NumWeights \) with entries \( w_i = -\log(|\mu_i| + 0.001) / 10000 \), and \( \operatorname{proj}_{\triangle^{\NumWeights}} \) is the projection onto the \( \NumWeights \)-dimensional simplex.
Since we do not require a normalized model and in order to give the model more freedom, we do not project the weights onto the simplex during training.
Each of the coefficients \( (\beta_{i, j})_{i=1,j=1}^{\NumBasis, \NumFilters} \) of the filters is initialized with \( \gamma / \NumFilters z \) where \( z \) is a standard normal random variable and \( \gamma = 2.5 \) for the score-matching training and \( \gamma = 1.5 \) for the bilevel training.
In addition, for the bilevel training we introduced a learnable scalar \( \lambda \) that acts as the standard tradeoff parameter in the variational problem.
It was crucial to tune this parameter such that initial reconstructions were reasonable; it was initialized with \( 1/25 \) and learned with learning rate \num{1e-4}.

The objective and the gradient with respect to the parameters for the minimization of the Fisher divergence can be readily computed by plugging in the energy~\eqref{eq:foe numerical} into the denoising score-matching loss~\eqref{eq:denoising score matching} and the utilization of automatic differentiation frameworks such as \texttt{PyTorch}~\cite{paszke_pytorch_2019}.
In contrast, the bilevel learning approach necessitates the choice of the lower-level problem and upper-level loss function, the resolution of the parameter-dependent lower-level problem, and the subsequent computation of the gradient of the upper-level loss function with respect to those parameters.
For the upper-level loss function we stick to the standard choice \( L(x, y) = \| x - y \|^2/2 \) due to its smoothness and its relationship to \gls{mmse} estimation.
Motivated by the significance of denoising algorithms for generative modeling as well as the resolution of inverse problems in the form of regularization by denoising, plug-and-play methods, and diffusion models, we choose denoising as the lower-level problem with variance \num{0.1} and resolve the lower-level problem with the accelerated gradient descent with Lipschitz backtracking given in~\cref{alg:apgd}.
We compute the gradient of the upper-level loss function by utilizing the implicit function theorem approach that we outlined in~\cref{ssec:bilevel}.
The Hessian-vector product \( (H(\theta))^{-1} \nabla_x L(x^*(\theta)) \) is computed via \num{200} iterations of the conjugate gradient algorithm and the Jacobian-vector product is computed via automatic differentiation.
\begin{algorithm}[t]
	\begin{algorithmic}[1]
		\Require Number of iterations $K$, initial solution $x^{0}$, initial $L_0$, number of backtracking iterations $J$, $\beta \in (0, 1)$, $\gamma > 1$, relative tolerance $r$
		\State $x^{-1} = x^{0}$
		\For{$k = 0, 1, \dots, K-1$}
		\State $\bar{x} = x^k + ( x^k  - x^{k-1}) / \sqrt{2}$
		\For{$j = 0, 1, \dots, J-1$} \Comment{Lipschitz backtracking procedure~\cite{cocain2020}}
		\State $x^{k+1} = \operatorname{prox}_{\frac{1}{L_k} g}( x - \nabla f(\bar{x} /L_k))$
		\If{$f(x^{k+1}) \leq  f( \bar{x}) + \langle \nabla f(\bar{x}), x^{k+1} - \bar{x} \rangle + \frac{L_k}{2}\|{\bar{ x} -  x^{k+1}}\|^2$}
		\State $L_k = \beta L_k$
		\State \textbf{break}
		\EndIf
		\State $L_{k} = \gamma L_k$
		\EndFor
		\EndFor
		\State \Return $x^k$
	\end{algorithmic}
	\caption{Accelerated proximal gradient descent algorithm with Lipschitz backtracking.}
	\label{alg:apgd}
\end{algorithm}

\subsection{Results}
In contrast to energies parametrized by deep neural networks, the \gls{foe} model given in~\eqref{eq:foe numerical} has a structure that lends itself towards interpretation through the plotting of the various learned components, namely the learned convolution kernels and the learned potentials.
We show the models obtained by denoising score-matching and bilevel learning in~\cref{fig:model score matching} and~\cref{fig:model bilevel}, respectively.
It is evident that both methods of obtaining the parameters suffer from spurious low-energy regions that are a consequence of the fact that filter responses rarely or never land in these regions during training.\footnote{
	This \enquote{locality} of the denoising score-matching loss is a common criticism.
	Indeed, it sparked the invention of diffusion models, which an ensemble of models trained with denoising score-matching with varying noise variance.
	The locality becomes less and less of an issues as the variance increases.
}
This could be remedied by careful hand-tuning of the parameter \( \nu \) for each filter at the beginning of the training, which is very laborious.
An alternative would be to simply discard those components of the \gls{gmm} the resulting potentials that are responsible for the spurious low-energy regions.
This could be done by, \eg, recording the extreme positions of the filter responses on the training set and setting the weights of components centered around more extreme positions to zero.
\begin{figure}
	\centering
	\begin{tikzpicture}
		\begin{scope}[yshift=3.5cm]
		\begin{groupplot}[filter group plot]
			\pgfplotsinvokeforeach{0, ..., 23}
			{%
				\nextgroupplot%
				\addplot graphics [xmin=0,xmax=10,ymin=0,ymax=1]{./code/filters/score-matching/kernel_#1/image.png};
			}
		\end{groupplot}
		\end{scope}
		\begin{groupplot}[
			pogmdm group plot,
			ymin=-0.05,
			ymax=1.2,
			xmin=-.81,
			xmax=.81,
		]
			\pgfplotsinvokeforeach{0, ..., 23}
			{%
				\nextgroupplot%
				\addplot [maincolor] table[col sep=comma, x=x, y=f]{code/potentials/score-matching/potentials_#1.csv};%
			}
		\end{groupplot}
	\end{tikzpicture}
	\caption{Filters (top) and corresponding potentials (bottom) learned via score-matching.}%
	\label{fig:model score matching}
\end{figure}
\begin{figure}
	\centering
	\begin{tikzpicture}
		\begin{scope}[yshift=3.5cm]
		\begin{groupplot}[
			filter group plot,
		]
			\pgfplotsinvokeforeach{0, ..., 23}
			{%
				\nextgroupplot%
				\addplot graphics [xmin=0,xmax=10,ymin=0,ymax=1]{./code/filters/bilevel/kernel_#1/image.png};
			}
		\end{groupplot}
		\end{scope}
		\begin{groupplot}[
			pogmdm group plot,
			ymin=3.4,
			ymax=3.6,
			xmin=-.81,
			xmax=.81,
		]
			\pgfplotsinvokeforeach{0, ..., 23}
			{%
				\nextgroupplot%
				\addplot [maincolor] table [col sep=comma, x=x, y=f]{code/potentials/bilevel/potentials_#1.csv};%
			}
		\end{groupplot}
	\end{tikzpicture}
	\caption{Filters (top) and corresponding potentials (bottom) learned via bilevel optimization.}%
	\label{fig:model bilevel}
\end{figure}

After the models were learned, they can be used as priors in the resolution of inverse problems.
The inverse problems we consider are denoising, reconstruction from Fourier samples and reconstruction from Radon samples.
In detail, for all three tasks we construct the eight data \( y_1, y_2, \dotsc, y_8 \in \mathbb{K}^d \) where \( \mathbb{K} \) is either \( \R \) or \( \mathbb{C} \) as
\begin{equation}
	y_i = Ax_i + \gamma
	\label{eq:data}
\end{equation}
for \( i = 1, 2, \dotsc, 8 \) where \( A \in \mathbb{K}^{n \times d} \) is the matrix-representation of various linear forward operators \( \mathcal{F} \) that are described later and \( x_1, x_2, \dotsc, x_8 \in \R^n \) are the first eight images (lexicographical ordering of the filenames) in the BSDS500 validation data set.
We restrict our evaluation to those eight images due to computational reasons.
For denoising, the forward operator \( A \) is the identity, which results in \( d = n \).
For Fourier sampling, \( A = MF : \R^{n} \to \mathbb{C}^{d}\) where \( F : \R^{n} \to \mathbb{C}^{\lfloor n/2 \rfloor + 1} \) describes the Fourier transform that accounts for the conjugate symmetry of the spectrum of a real signal and \( M : \mathbb{C}^{\lfloor n/2 \rfloor + 1} \to \mathbb{C}^d \) samples \( d \) entries of a vector with \( {\lfloor n/2 \rfloor + 1} \) entries.
More specifically, it retains \qty{10}{\percent} of the low-frequency components and randomly discards \qty{75}{\percent} of the remaining components.
For Radon sampling, \( A \) is the parallel beam projector provided by the \texttt{TIGRE} library~\cite{Biguri2016} with \num{800} detectors that are \num{0.8} pixels wide and acquires \num{150} projections whose rotation angles are equispaced in the interval \( [0, \pi] \).
In all cases, \( \gamma \) is a vector of dimension \( d \) whose entries are i.i.d.\ (possibly complex) Gaussian noise with fixed variance.
In particular, the variance was chosen as \num{0.1} for denoising, \num{2e-3} for Fourier sampling, and \num{15} for Radon sampling.\footnote{
	These variances are chosen such that \( \frac{1}{8}\sum_{j=1}^8 \frac{\sum_{i=1}^d |(Ax_j)_i|^2 / d}{10^{\mathrm{SNR}/10}} \) is approximately equal to a prescribed signal-to-noise ratio SNR, namely \num{10} for denoising and \num{30} for Fourier and Radon sampling.
	For Fourier and Radon sampling, the large difference magnitude of the variances is due to different normalizations of the forward operators.
}

For the resolution of the inverse problem, we consider the posterior distribution
\begin{equation}
	p_X(x\mid Y = y) \propto p_Y(y\mid X = x)p_X(x).
\end{equation}
The likelihood \(p_Y(y \mid X = x) \) is derived from the measurement model and the noise distribution, given explicitly by
\begin{equation}
	p_Y(y|X = x) = (2\pi\sigma^2)^{(-d/2)}\exp{-\frac{\|y-\Fc(x)\|^2}{2\sigma^2}}
\end{equation}
based on the assumption of the additive white Gaussian noise in~\eqref{eq:data}.

In the variational treatment of inverse problems, it is common practice to consider a modified posterior
\begin{equation}
	p_x^\lambda(x\mid Y = y) \propto p_Y(y\mid X = x)(p_X(x))^\lambda,
\end{equation}
where \( \lambda > 0 \) is as a tunable parameter.
This parameter is an additional degree of freedom to compensate for modeling mismatches between the learned prior and the underlying distribution, to compensate for approximate inference schemes, or to fine-tune the performance of the model with respect to some quality metric.
In addition, it is common to consider a rescaled posterior of the form
\begin{equation}
	p_x^\lambda(x\mid y) \propto (p_Y(y\mid x)(p_X(x))^\lambda)^{T^{-1}}
\end{equation}
where \( T > 0 \) acts as a rescaling parameter that is analogous to a physical temperature.
While temperature rescaling can theoretically facilitate convergence by controlling the variance of the distribution---such that as \( T \to 0 \), the posterior increasingly concentrates around its highest-density regions---it is introduced here explicitly to address practical numerical issues.
Specifically, in later experiments we find that using the \gls{ula} without temperature scaling leads to exploration of spurious high-likelihood regions that arise due to artifacts from the training process.
Introducing the temperature parameter ensures the \gls{ula} sampler reliably explores regions near genuine modes, effectively stabilizing the sampling and improving the quality of the posterior inference.

\begin{figure}
	\centering
	\includegraphics[height=.32\textwidth,rotate=-90]{./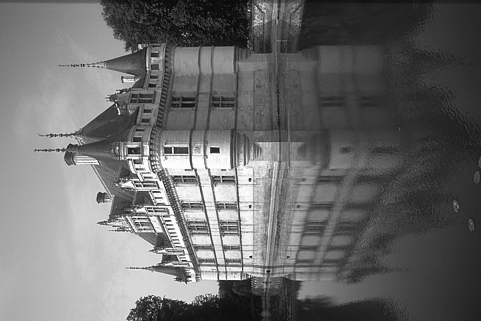}
	\caption{The \texttt{watercastle} image.}%
	\label{fig:reference}
\end{figure}

Although both \gls{mmse} and \gls{map} inference strategies are feasible for models trained with either bilevel learning or score-matching, a mismatch between the training objectives and the inference methods typically leads to suboptimal results.\footnote{We observed these suboptimal results in our experiments but do not provide any results here for the sake of conciseness.}
Consequently, we pursue a \gls{map} inference strategy specifically for the model trained with bilevel learning and an \gls{mmse} inference strategy for the model trained with score-matching.
For the \gls{map} inference, we use the accelerated gradient descent with Lipschitz backtracking, as detailed in~\cref{alg:apgd} and as used during training.
For the \gls{mmse} inference, we use the \gls{ula} detailed in~\cref{algo:underdamped} with the temperature set to \( T = \num{1e-1} \) for denoising tasks and \( T = \num{5e-2} \) for reconstructions from Fourier and Radon samples.
We use the \gls{ula} since it is easy to implement and we are not necessarily interested in obtaining unbiased samples due to the temperature rescaling, the introduction of the weighting parameter \( \lambda \), and the need to avoid spurious high-density regions.
In summary, we obtain a practical sampling scheme that is loosely tuned with respect to the visual quality of the reconstruction rather than a sampling scheme that provably gives unbiased samples from the posterior.
As a baseline method for comparison, we use the anisotropic total variation regularizer and solve the variational problem using the primal dual hybrid gradient algorithm~\cite{Chambolle2010}.
Various choices of parameters that are not discussed in detail in this manuscript can be found in the online repository \url{https://github.com/zacmar/ebm-inverse}.

Qualitative results for denoising, reconstruction from Fourier samples, and reconstruction from Radon samples of the popular \texttt{watercastle} image (shown in~\cref{fig:reference}) are presented in~\cref{fig:denoising,fig:fourier,fig:radon}, respectively.
Since the marginal standard deviation is effectively determined by the choice of the temperature, we omit a colorbar.
Quantitative results in terms of peak signal-to-noise ratio over the test set are provided in~\cref{tab:psnrs}.
\begin{figure}
	\centering
	\newcommand{\opblock}[2]{
		\phantom{\includegraphics[height=.32\textwidth,rotate=-90]{./code/inference/#1/data/102061/image.png}}\hfill
		\includegraphics[height=.32\textwidth,rotate=-90]{./code/inference/#1/ATy/102061/image.png}\hfill
		\includegraphics[height=.32\textwidth,rotate=-90]{./code/inference/#1/tv/xstar/102061/image.png}\par\smallskip
		\includegraphics[height=.32\textwidth,rotate=-90]{./code/inference/#1/optim/xstar/102061/image.png}\hfill
		\includegraphics[height=.32\textwidth,rotate=-90]{./code/inference/#1/mmse/mean_20000/102061/image.png}\hfill%
		\includegraphics[height=.32\textwidth,rotate=-90]{./code/inference/#1/mmse/std_20000/102061/image.png}%
	}
	\opblock{denoising}{Denoising.}
	\caption{
		Qualitative denoising results:
		The top row shows the data, which coincides with the naive reconstruction, and the reconstruction obtained through total variation regularization.
		The bottom row shows the reconstruction obtained through regularization with the bilevel model, the \gls{mmse} estimate obtained through the sampling of the posterior of the score-matching prior, and the corresponding pixel-wise marginal standard deviation.
	}
	\label{fig:denoising}
\end{figure}
\begin{figure}
	\centering
	\newcommand{\opblock}[2]{
		\includegraphics[height=.32\textwidth,rotate=-90]{./code/inference/#1/data/102061/image.png}\hfill
		\includegraphics[height=.32\textwidth,rotate=-90]{./code/inference/#1/ATy/102061/image.png}\hfill
		\includegraphics[height=.32\textwidth,rotate=-90]{./code/inference/#1/tv/xstar/102061/image.png}\par\smallskip
		\includegraphics[height=.32\textwidth,rotate=-90]{./code/inference/#1/optim/xstar/102061/image.png}\hfill
		\includegraphics[height=.32\textwidth,rotate=-90]{./code/inference/#1/mmse/mean_20000/102061/image.png}\hfill
		\includegraphics[height=.32\textwidth,rotate=-90]{./code/inference/#1/mmse/std_20000/102061/image.png}%
	}
	\opblock{fourier}{Reconstruction from Fourier samples.}
	\caption{
		Qualitative results of reconstruction from Fourier samples:
		The top row shows a visualization of the data, the naive reconstruction obtained by backprojection, and the reconstruction obtained through total variation regularization.
		The bottom row shows the reconstruction obtained through regularization with the bilevel model, the \gls{mmse} estimate obtained through the sampling of the posterior of the score-matching prior, and the corresponding pixel-wise marginal standard deviation.
	}
	\label{fig:fourier}
\end{figure}
\begin{figure}
	\centering
	\newcommand{\opblock}[2]{%
		\adjustbox{right=1.7\textwidth,width=.32\textwidth,keepaspectratio}{\includegraphics[rotate=-90]{./code/inference/#1/data/102061/image.png}}\hfill
		\includegraphics[height=.32\textwidth,rotate=-90]{./code/inference/#1/ATy/102061/image.png}\hfill
		\includegraphics[height=.32\textwidth,rotate=-90]{./code/inference/#1/tv/xstar/102061/image.png}\par\smallskip
		\includegraphics[height=.32\textwidth,rotate=-90]{./code/inference/#1/optim/xstar/102061/image.png}\hfill
		\includegraphics[height=.32\textwidth,rotate=-90]{./code/inference/#1/mmse/mean_20000/102061/image.png}\hfill
		\includegraphics[height=.32\textwidth,rotate=-90]{./code/inference/#1/mmse/std_20000/102061/image.png}%
	}
	\opblock{radon}{Reconstruction from Radon samples.}
	\caption{
		Qualitative results of reconstruction from Radon samples:
		Top row shows a visualization of the data, the naive reconstruction obtained by backprojection, and the reconstruction obtained through total variation regularization.
		The bottom row shows the reconstruction obtained through regularization with the bilevel model, the \gls{mmse} estimate obtained through the sampling of the posterior of the score-matching prior, and the corresponding pixel-wise marginal standard deviation.
	}
	\label{fig:radon}
\end{figure}
\begin{table}
	\centering
	\sisetup{
		round-mode=uncertainty,
		round-precision=3,
		round-pad=false,
		separate-uncertainty,
		table-align-uncertainty=true,
		table-number-alignment=center,
		table-format=2.2+-1.2
	}
	\begin{tabular}{lSSSS}
		\toprule
		& {Backprojection} & {TV} & {Bilevel} & {Score-matching} \\
		\midrule
		Denoising & 20.01+-0.02 & 27.053575714309623+-1.604576631105533  & 28.034332768324823+-1.861419447637136  & 27.971072310364505+-2.0460096959624843 \\
		Fourier & 24.463311505855444+-1.97341881666496   & 26.30441213823724+-3.0151916898128692  & 26.723924272709468+-3.159577353807245  & 26.87740861088846+-3.4136797072463     \\
		Radon & -87.68479330959019+-2.954107161413609 & 24.785441481880454+-2.1566467715408493 & 25.40401552936558+-2.281570487570585   & 25.317181007951675+-2.6003754277915316 \\
		\bottomrule
	\end{tabular}
	\caption{PSNR in \unit{\decibel} (mean \( \pm \) standard deviation) of the reconstructions obtained by the various methods, rounded to two decimals.}%
	\label{tab:psnrs}
\end{table}

The reconstructions obtained by the learned models are consistently of better quality than those obtained by total variation regularization.
The addition of the pixel-wise marginal standard deviation serves as an illustration of one of the great benefits of the Bayesian approach, which is that many quantities of interest, such as point estimators or indicators of uncertainty, can be derived from the formal solution of the inverse problem, which is the posterior distribution.
Nevertheless, these superior results required extensive parameter tuning.
Specifically, introducing the temperature parameter in the \gls{ula} algorithm was essential to prevent exploration of spurious low-energy regions arising from the locality inherent in score-matching training, effectively restricting sampling to modes of the posterior density.
We anticipate that results can be considerably improved by employing alternative, more robust training methodologies, such as combining Kullback-Leibler divergence minimization with highly efficient Gibbs samplers, as demonstrated successfully in~\cite{kuric2025gaussian}.
Overall, these findings underscore the persistent challenges in conducting rigorous Bayesian inference for imaging tasks, even with rapidly expanding computational resources.
Therefore, continued research into efficient sampling methods for high-dimensional probability densities remains critically important.

\section{Acknowledgements}

This research was funded in whole or in part by the Austrian Science Fund (FWF) 10.55776/F100800.

\printbibliography{}
\end{document}